\renewcommand{\p@subsection}{}
\renewcommand{\p@subsubsection}{}
\renewcommand{\vec}{\bm}
\newcommand{\CD}{\mathcal{D}}
\newcommand{\BE}{\mathbb{E}}
\newcommand{\CF}{\mathcal{F}}
\newcommand{\CL}{\mathcal{L}}
\newcommand{\CM }{\mathcal{M}}
\newcommand{\CO}{\mathcal{O}}
\newcommand{\BP}{\mathbb{P}}
\newcommand{\BR}{\mathbb{R}}
\newcommand{\CZ}{\mathcal{Z}}
\newcommand{\lV}{\lVert}
\newcommand{\e}{\mathrm{e}}
\newcommand{\rV}{\rVert}
\newcommand*{\tr}{\mathrm{Tr}}
\renewcommand{\L}{\left}
\newcommand{\R}{\right}
\newcommand{\vertiii}[1]{{\left\vert\kern-0.25ex\left\vert\kern-0.25ex\left\vert #1 \right\vert\kern-0.25ex\right\vert\kern-0.25ex\right\vert}}
\newcommand{\norm}[1]{\Vert {#1} \Vert}
\newcommand{\normp}[2]{\norm{#1}_{#2}}
\newcommand{\fnorm}[1]{\norm{#1}_{\mathrm{F}}}
\newcommand{\lnormp}[2]{\lnorm{#1}_{#2}}
\newcommand{\labs}[1]{\left\vert {#1} \right\vert}
\newcommand{\lnorm}[1]{\left\Vert {#1} \right\Vert}
\newcommand{\indicator}{\mathbb{I}}
\newtheorem{thm}{Theorem}
\numberwithin{thm}{section}
\newtheorem{cor}[thm]{Corollary}
\newtheorem{lem}[thm]{Lemma}
\newtheorem{prop}[thm]{Proposition}
\newtheorem{fact}{Fact}
\renewcommand{\p@subsection}{}
\renewcommand{\p@subsubsection}{}
\begin{document}
\title{Optimal Frobenius light cone in spin chains with power-law interactions}

\author{Chi-Fang Chen}
\email{chifang@caltech.edu}
\affiliation{Institute for Quantum Information and Matter,
California Institute of Technology, Pasadena, CA, USA}

\author{Andrew Lucas}
\email{andrew.j.lucas@colorado.edu}
\affiliation{Department of Physics and Center for Theory of Quantum Matter, University of Colorado, Boulder CO 80309, USA}

\date{\today}

\begin{abstract}
     In many-body quantum systems with spatially local interactions, quantum information propagates with a finite velocity, reminiscent of the ``light cone" of relativity.  In systems with long-range interactions  which decay with distance $r$ as $1/r^\alpha$, however, there are multiple light cones which control different information theoretic tasks. We show an optimal (up to logarithms) ``Frobenius light cone" obeying $t\sim r^{\min(\alpha-1,1)}$ for $\alpha>1$ in one-dimensional power-law interacting systems with finite local dimension: this controls, among other physical properties, the butterfly velocity characterizing many-body chaos and operator growth. We construct an explicit random Hamiltonian protocol that saturates the bound and settles the optimal Frobenius light cone in one dimension.  We partially extend our constraints on the Frobenius light cone to a several operator $p$-norms, and show that Lieb-Robinson bounds can be saturated in at most an exponentially small $\e^{-\Omega(r)}$ fraction of the many-body Hilbert space.
     
\end{abstract}

\maketitle

\tableofcontents

\section{Introduction}

The celebrated Lieb-Robinson theorem proves that for arbitrary systems with nearest-neighbor interactions (with bounded strength), the speed at which quantum information can propagate is finite \cite{Lieb1972,Hastings_koma,Bravyi2006,chen2019operator, D_A_graph_GU_LUcas,PRXQuantum.1.010303}.  Intuitively, the time $t$ it takes to prepare an entangled state between two qubits (in an initially unentangled state) separated by distance $r$ is lower bounded as \begin{equation}
    t \ge \frac{r}{v_{\mathrm{LR}}}, \label{eq:babyLR}
\end{equation}
with $v_{\mathrm{LR}}$ the so-called Lieb-Robinson velocity.   In many respects, $v_{\mathrm{LR}}$ is analogous to the speed of light $c$ in special relativity -- no signals can be sent faster than $v_{\mathrm{LR}}$.  For this reason, we call (\ref{eq:babyLR}) the \emph{Lieb-Robinson light cone}.  Over the past few decades, numerous unexpected and important results have been shown to follow from the Lieb-Robinson theorem, including the exponential decay of correlation functions in a gapped ground state, together with an area law for entanglement \cite{Bravyi2006} and the effectiveness of matrix product state representations~\cite{PhysRevB.73.085115}, proofs of the ability to efficiently simulate local quantum systems \cite{haah2020quantum}, and demonstration of the stability of topological order~\cite{Bravyi_2010}. 


\subsection{Alternative choices of norms}



In quantum information dynamics, we study information propagation through the commutator between local operators (e.g. Pauli matrices) localized  on lattice sites $x$ and $y$.  This commutator takes the form of $[A_x(t),B_y]$, where 
\begin{equation}
    A_x(t) := \mathrm{e}^{\mathrm{i}Ht} A_x \mathrm{e}^{-\mathrm{i}Ht}
\end{equation}
denotes Heisenberg time evolution.\footnote{For simplicity, we assume in this equation that Hamiltonian $H$ does not depend on time. In all main results in this paper, we will eventually relax this assumption.} Indeed, local operators commute $[A_x(0),B_y]=0$, and a non-vanishing commutator indicates the operator $A_x(t)$ has ``grown" to a distant site $y$.  

The traditional Lieb-Robinson bounds quantify the growth of this commutator in the infinity norm (or the operator norm): $\lVert [A_x(t),B_y]\rVert_\infty$, being the maximal singular value of the operator $A$.  In slightly more physical terms, \begin{equation}
    \lVert A\rVert\equiv\lVert A\rVert_\infty = \sup_{\psi,\psi^\prime}\left| \langle \psi^\prime| A |\psi\rangle \right| \label{eq:operatornorm}
\end{equation}
is the maximal matrix element of $A$ between any normalized many-body states $\psi$ and $\psi^\prime$ and we drop the subscript. A bound on this operator norm then rules out the possibility of signaling, under the ``worst-case" choices of states $\psi$ and $\psi^\prime$. For quantum technology applications, such a bound is relevant to a device only when the \emph{many-body} initial and final states can be controlled exactly. If the device is imperfect and prone to some local errors, then we would like to know if other states in Hilbert space can also transmit information similarly fast. 

  Recently, \cite{hierarachy} emphasized that there can be \emph{multiple different light cones} for quantum information. Mathematically, these different light cones correspond to the choices of norms, which capture different physics and constrain different quantum information theoretic tasks; the physical question would be, ``what quantum mechanical task are we trying to bound?" 

As a concrete example, we consider whether or not one can signal between two qubits $x$ and $y$ \emph{independently} of the actual quantum state.  This amounts to averaging over the magnitude of $\langle \psi^\prime| A |\psi\rangle$ over all pairs of $\psi$ and $\psi^\prime$~\cite{hierarachy}. This can be captured by the \emph{square} of matrix element's magnitude, and (for an operator $A$) we then recognize that this average is nothing but the normalized 2-norm, which will be called the \emph{Frobenius norm} \begin{equation}
    \lVert A \rVert_{\mathrm{F}}^2 := \frac{\mathrm{Tr}(A^\dagger A)}{\mathrm{Tr}(I)}. \label{eq:frobeniusnorm}
\end{equation}
Here $I$ denotes the identity matrix.  If we wish to design a quantum device in which signaling or quantum state transfer is achieved regardless of the details of the many-body state, we will care not about the Lieb-Robinson bound on a commutator, but instead a ``Frobenius bound" on $\lVert [A_x(t),B_y]\rVert_{\mathrm{F}}$.
  
  Interestingly, this Frobenius norm has a natural interpretation as an (out-of-time-ordered) correlator at infinite temperature
 \begin{equation}
    \lVert [A_x(t),B_y] \rVert^2_{\mathrm{F}} = \frac{\mathrm{Tr}\left([A_x(t),B_y]^\dagger [A_x(t),B_y] \right)}{\mathrm{Tr}(I)} = \mathrm{Tr}\left(\rho_{T=\infty}  [A_x(t),B_y]^\dagger [A_x(t),B_y]\right).
\end{equation} 
Here $\rho_{T=\infty}\propto I$ denotes the infinite temperature density matrix. Such objects have been used to diagnose many-body chaos in extensive studies over the past few years. Indeed, when the Frobenius norm becomes large for any operator $B_y$, information about the initial operator $A_x$ cannot be retrieved via local measurements. 




A simple way to define a ``light cone" out of either of these norms is to define $t(r)$ as the first time that qubits separated by distance $r$ have a large commutator: e.g., \begin{equation}
  t(r) := \mathrm{arginf}_t \left\lbrace 0<t<\infty :  \lVert [A_0(t),B_r]\rVert_* \ge \frac{1}{2} \right\rbrace
\end{equation}
The notation here is simply that we look for the smallest time (arginf) at which a certain commutator norm is bigger than some $O(1)$ number, such as $\frac{1}{2}$. It is interesting to ask whether in \emph{all} Hamiltonian dynamics consistent with various simple constraints (e.g. nearest-neighbor interactions) there are bounds on $t(r)$.  For the operator norm, the answer is provided by the Lieb-Robinson theorem (\ref{eq:babyLR}).  But, is it possible that in typical states (Frobenius norm)\footnote{Sometimes the Frobenius light cone $v_{\mathrm{F}}\sim r/t(r)_{\mathrm{F}}$ is called butterfly velocity in the quantum chaos literature.}, the light cone is slower:  
\begin{align}
v_{\mathrm{F}}<v_{\mathrm{LR}}?    
\end{align}

Preliminary work \cite{Nahum:2017yvy,vonKeyserlingk:2017dyr} noticed that in random unitary circuits, this was indeed the case.  In other words, the choice of norm (or, more physically, the task of interest) quantitatively changed the ``speed limits" on quantum dynamics.   

With local interactions, this change is typically quantitative (but not qualitative).  With long-range interactions, however, the discrepancy between Lieb-Robinson and Frobenius bounds becomes drastic. This perhaps explains why the Frobenius norm did not draw serious attention in the study of locally interacting systems.

\subsection{Long-range interactions}
We now turn to physical systems with power-law, or long-range, interactions.
Many promising platforms for future quantum technologies do not merely contain nearest neighbor interactions: they can include trapped ion crystals \cite{Britton2012}, Rydberg atom arrays \cite{Saffman2010}, polar molecules \cite{Yan2013}, etc.; each of these platforms has long-range interactions between all pairs of qubits which decay with the distance $r$ as $V(r)\propto r^{-\alpha}$.  Here $\alpha$ is an exponent characterizing the system:  $0<\alpha<3$ can approximately be achieved in trapped ion crystals, while $\alpha=3$ for polar molecules with dipolar interactions and $\alpha=3,6$ for Rydberg atoms.  The classic techniques from Lieb-Robinson bounds were ineffective at constraining information dynamics in long-range systems. Over the past few years, increasingly sophisticated methods have been developed \cite{Hastings_koma,GongFF,Foss-FeigG,elseImprovedLiebRobinsonBound2018,Tran_2019_polyLC,alpha_3_chenlucas,kuwaharaStrictlyLinearLight2020,2020arXiv201002930T,tran2021optimal} to conclusively settle  the optimality of Lieb-Robinson-like bounds.  In a $d$-dimensional system, the time it takes to prepare an EPR state between two qubits, separated by distance $r$, scales as $t \sim r^{\min(\alpha-2d,1)}$ for $\alpha>2d$.  This remarkable result means that even when $\alpha=6$, in three dimensions it is possible to prepare highly entangled states in nearly constant time ($t\sim \log r$).

It was first noted in \cite{hierarachy} that the Lieb-Robinson bounds did not effectively constrain the Frobenius norm of a commutator in models with power-law interactions.  There, it was argued that the bound on Frobenius norms would scale as \begin{equation}
    t_{\mathrm{F}}^*(r) \gtrsim r^{\min(\alpha-3d/2,1)}, \label{eq:oldfrob}
\end{equation}
 which was proved in $d=1$ and also shown to be ``optimal".  However, strictly speaking, the OTOCs which saturate (\ref{eq:oldfrob}) are \emph{not} of the form $\lVert[A_x(t),B_y]\rVert_{\mathrm{F}}$, with $A_x$ and $B_y$ simple Pauli matrices.  Instead, they are of the form (in $d=1$, for simplicity): \begin{equation*}
    \left\lVert \left[ A_x(t), \prod_{z\ge y} B^\prime_z \right]\right\rVert_{\mathrm{F}},
\end{equation*}
with $B^\prime_z$ Pauli raising operators on sites $z \ge y$.\footnote{ Products of raising operators has low rank and a very large singular value, unlike simple Pauli.}


However, (\ref{eq:oldfrob}) is not optimal if we demand $A_x$ and $B_y$ are both local matrices.
It was recently proved in any dimension $d$ that for $\alpha>d$, the Frobenius light cone for such an OTOC is of the form \cite{kuwahara2021} \begin{equation}
    t \gtrsim r^{2(\alpha-d)/(2\alpha-d+1)}.
\end{equation}

\subsection{Our results}
The main motivation of this paper is to reconcile the results of \cite{hierarachy,kuwahara2021}, and prove an optimal Frobenius bound of small operators in systems with long-range interactions, in one dimensional models $(d=1)$.  We find in Section \ref{sec:mainthm} that \begin{equation}
    t_{\mathrm{F}}(r) \gtrsim r^{\min(\alpha-1,1)}, \;\;\; (\alpha>1), \label{eq:tF}
\end{equation}
where logarithms and absolute constants(that only depends on $\alpha$) are suppressed. In Section \ref{sec:protocol}, we construct an explicit random dynamical protocol that saturates this scaling (up to factors of $\log r$). 

We conjecture that in $d$ spatial dimensions, (\ref{eq:tF}) generalizes to \begin{equation}
    t_{\mathrm{F}}(r) \sim r^{\min(\alpha-d,1)}
\end{equation} for $\alpha>d$.  In fact, we will present a random protocol which saturates such a bound.  Our bounds together with a saturating protocol definitively settle the nature of the Frobenius light cone in $d=1$ spatial dimensions.  For technical reasons we explain in the body of the work, proving Frobenius light cones in higher dimensions is much more difficult and thus remains a conjecture for $d>1$.

In our proof, we employ a new set of mathematical tools for deriving bounds on quantum dynamics with other norms in Section \ref{sec:submulti}.  In a nutshell, the standard Lieb-Robinson bounds work by exploiting submultiplicativity, which states that \begin{equation}
    \lVert A B \rVert \le \lVert A \rVert \cdot \lVert B \rVert. \label{eq:submult}
\end{equation}
The traditional Lieb-Robinson bound strategy proceeds by iteratively using this identity repeatedly to a nested commutator of the form $\lVert [H,[H,\cdots \mathcal{O}]]\rVert t^n/n!$, reducing the problem of bounding quantum dynamics to a combinatorial problem.  This property does \emph{not} hold for the normalized Frobenius norm introduced in (\ref{eq:frobeniusnorm}). As a consequence, existing proofs \cite{hierarachy,Lucas:2019cxr,Yin:2020pjd,Yin:2020oze} of Frobenius light cones have typically involved other methods inspired by quantum walks and probability theory: these methods are, in turn, uniquely adapted to the Frobenius norm.  A key development in this paper is to show that (\ref{eq:submult}) can be \emph{nearly} generalized to the \emph{normalized} Frobenius norm (and also Schatten p-norms below), when the operators $A$ and $B$ obey certain spatial locality constraints: \begin{equation}
    \lVert A B \rVert_{\mathrm{F}} \lesssim \lVert A \rVert_{\mathrm{F}} \cdot \lVert B\rVert_{\mathrm{F}} \ln \lVert B\rVert_{\mathrm{F}}.\label{eq:F_submulti}
\end{equation}
Our results therefore allow us to use traditional Lieb-Robinson-like techniques, but to obtain \emph{parametrically stronger bounds} on dynamics using more physically relevant norms.

We are also able to obtain results for $p$-norms which interpolate between the Frobenius and operator norms defined above.  We generalize the proof of \cite{alpha_3_chenlucas} to any Schatten $p$-norms $[\mathrm{Tr}([A_x(t),B_y]^{p})/\mathrm{Tr}(I)]^{1/p}$.  We prove that there is a \textit{p-dependent} light cone of the form 
\begin{align}
t(r)_p\gtrsim \frac{ r^{\min(1,\alpha-3/2)}}{\sqrt{p}}
\end{align}
for any fixed value of $p$. Intuitively, these bounds constrain the $p$-th \emph{moments} of the singular value distribution of $[A_x(t),B_y]$, and therefore tell us how rare extreme singular values are. While we don't believe this bound is tight, it is already tighter than Lieb-Robinson bounds. In particular, we will show that Markov's inequality implies it is exponentially rare to find a state in Hilbert space that can saturate a Lieb-Robinson light cone. 


\subsection{Implications of our results and further conjectures}
Our results have a number of interesting implications.   In systems with power-law interactions, we find a remarkable result: in a spin chain with $\alpha=2$ the growth of operators is hardly faster than with nearest neighbor interactions, while existing state transfer protocol may send a single qubit in time $t\sim \log r$ \cite{2020arXiv201002930T}! 
At the same time, the rapid state transfer algorithm of \cite{2020arXiv201002930T} requires an \textit{exponentially} fine-tuned background state to implement in quantum processors. 
Our results therefore have important practical implications for the wide range of possible quantum technologies, such as trapped ion crystals, which might seek to use power-law interactions to speed up the transfer of information.  It will be interesting to understand the actual performance of the protocol of \cite{2020arXiv201002930T} in the presence of some errors.

We conjecture that the Frobenius light cone is an \emph{upper bound} on the time it takes to generate a volume-law entangled state.  Intuitively, generating volume law entanglement requires taking all local information and scrambling it amongst highly non-local degrees of freedom; thus, a local operator must evolve into a non-local one.  Since operator growth is captured by the Frobenius norm, the Frobenius light cone should also control entanglement.  Support for this conjecture is found in the examples of \cite{shor}.  

While we believe this intuitive argument to be correct, we have not been able to formalize a proof of this result.  Nevertheless, if this conjecture is correct, it would immediately imply a very surprising result: namely, the (asymptotically) optimal way to send volume law entangled states using power-law interactions is to send one qubit at a time.  Indeed, the protocol of \cite{2020arXiv201002930T} shows that we can send one qubit in a time $r^{\alpha-2d}$ for $2d<\alpha<2d+1$.  However, if volume law entanglement can only be generated in time $r^{\alpha-d}$, that means that sending one qubit at a time using the fast protocol of \cite{2020arXiv201002930T} takes a time $r^{\alpha-2d}\times r^d \sim r^{\alpha-d}$, which is the fastest possible.  This observation can simplify the design of any future quantum processor performing fast quantum state transfer with power law interactions.

Another implication of our results is that the Frobenius bound we present is essentially optimal even when the Hamiltonian is drawn randomly - in fact our saturating protocol is designed using an ensemble of saturating Hamiltonians, which are themselves random. In contrast, typical random Hamiltonians have sharper operator norm bounds than deterministic Hamiltonian~\cite{chen2021concentration}. This demonstrates that the Frobenius bounds is more robust than the operator norm, in that much less fine tuning of Hamilotonians is needed to saturate a Frobenius bound.  In this sense, Frobenius bounds are slightly more universal than Lieb-Robinson bounds.

Our mathematical results may help to improve existing ``many-body quantum walk" techniques for bounding OTOCs in a variety of models \cite{Lucas:2019cxr,Lucas:2020pgj,Yin:2020oze,Yin:2020pjd,Chen:2020bmq}.  A particularly challenging problem where our methods may be extremely valuable is bounding finite temperature correlators, where the appropriate ``norm" depends on the thermal density matrix \cite{lucasprl19}.  Understanding how to derive specific notions of submultiplicativity, which might hold under this norm, but not in general, may help to solve longstanding conjectures about the speed limits on finite temperature dynamics which have, in recent years, linked developments in string theory and quantum gravity to many-body physics \cite{Maldacena:2015waa,blake,swingle}. 

Lastly, our bounds on Frobenius light cones may help to find sharper error bounds on new numerical techniques such as matrix product operator methods \cite{Pirvu_2010}, since so long as a thermal correlator is being studied, one is not interested in the error in the numerical evaluation of an operator in the worst-case state, but only in a typical state in the ensemble.




\section{Preliminaries}
In this section we review formalisms in operator dynamics and useful facts from non-commutative functional analysis.
\subsection{Long-range interactions}\label{sec:preLR}
We first provide a careful definition for long-range interacting systems. 
Let $\Lambda$ denote the vertices of a lattice graph, consisting of some unit cell repeated periodically in $d$ spatial dimensions.\footnote{Mathematically, if $E_\Lambda$ is an edge set on $\Lambda$, we demand that the pair $(\Lambda,E_\Lambda)$ has an automorphism group containing a translation subgroup $\mathbb{Z}^d$.} For vertices $i,j\in \Lambda$, let us define the distance $d(i,j)$ between the two vertices to be the Manhattan distance (minimal number of edges to traverse to get from $i$ to $j$).  For ease of presentation, let us consider a many-body Hibert space consisting of a single qubit (two-level system) on every lattice site.  The Hilbert space of operators acting on a single qubit on site $x$ is spanned by the identity $I_x$, and the three Pauli matrices $X_x,Y_x,Z_x$.  In what follows, we denote with $X^a = (X,Y,Z)$ the set of all non-trivial Paulis.

Given a 2-local Hamiltonian \begin{equation}
    H(t) = \sum_{x,y\in \Lambda} \sum_{a,b=1}^3 J_{xy}^{ab}(t) X_x^a X_y^b,
\end{equation} 
we say that it has power-law interactions of exponent $\alpha$ if \begin{equation}
    |J_{xy}^{ab}(t)| \le \frac{1}{d(x,y)^\alpha}. \label{eq:Jab}
\end{equation}
Without loss of generality, we set the prefactor above to be 1 by rescaling the Hamiltonian. In practice, it is useful to say a model has exponent $\alpha$ if it does not have exponent $\alpha^\prime>\alpha$, in our bounds the only criterion necessary is (\ref{eq:Jab}).

\subsection{Operator norms and operator size}
In this paper, we are interested in the Heisenberg time evolution of an operator \begin{equation}
    \frac{\mathrm{d}A(t)}{\mathrm{d}t} := \mathrm{i}[H(t),A] := \mathcal{L}(t)[A(t)].
\end{equation}
It will be helpful to interpret operator dynamics in a bra-ket notation on the ``Hilbert space" of operators, in which case we write the above equation as  \begin{equation}
\frac{\mathrm{d}}{\mathrm{d}t} |A(t)) := \mathcal{L}(t)|A(t)).
\end{equation}
We define a Frobenius inner product on this space.  Loosely speaking\footnote{On an infinite lattice, this ratio is not well-defined.  We may interpret it so long as the operators $A$ and $B$ have compact support, in which case the traces may be restricted to the support of $A$ and $B$.   Taking such an operator $A$ with compact support, we may also time-evolve it.  With a Hamiltonian $H$ without compact support, the operator $A(t)$ will (in general) also no longer have compact support.  In this case, we may define $(A|B)$ via a limiting process: letting $A_S$ denote the terms in $A$ supported on compact set $S$,  the limit of $(A_S|B_S)$ will converge as $S$ grows; the convergence is guaranteed by Lieb-Robinson theorems. }, \begin{equation}
    (A|B) := \frac{\mathrm{tr}(A^\dagger B)}{\mathrm{tr}(I)}. \label{eq:innerproduct}
\end{equation}
It is often helpful to define projection operators.   For example, the superoperator $\mathbb{P}_x$, defined via \begin{equation}
    \mathbb{P}_x A = -\frac{1}{8}\sum_{a=1}^3 [X_x^a,[X_x^a,A]],
\end{equation}
allows us to restrict to all terms in $A$ which act non-trivially on site $x$. 
Since \begin{equation}
    \lVert [B_y,A]\rVert_{\mathrm{F}} = \lVert [B_y,\mathbb{P}_yA]\rVert_{\mathrm{F}} \le 2 \lVert B_y\rVert_\infty \lVert \mathbb{P}_y A\rVert_{\mathrm{F}} = 2 \lVert B_y\rVert_\infty \sqrt{(A|\mathbb{P}_y|A)}, \label{eq:infF}
\end{equation}
it suffices to bound the (Frobenius) norm of $\mathbb{P}_y|A)$, in order to bound OTOCs.

In fact, without loss of generality, we may write any operator $A$ as \begin{equation}
    A = \sum_{S\subset \Lambda} A_S(t), \;\;\;\; \text{where } \mathbb{P}_x A_S(t) = \left\lbrace\begin{array}{ll} A_S(t) &\ x\in S \\ 0 &\ \text{otherwise} \end{array}\right..  \label{eq:AS}
\end{equation}

Assuming $(A|A)=1$ is normalized, we will then refer to \begin{equation}
    p_S = (A_S|A_S) \label{eq:pS}
\end{equation} as the probability that the operator is supported on subset $S$.   Clearly, \begin{equation}
    \sum_S p_S=1,
\end{equation} so this terminology is well-defined.  This probabilistic language will be highly valuable for us as we develop a fast operator growth protocol.

While the most important operator norms in this paper are the Frobenius norm (\ref{eq:frobeniusnorm}), along with the operator norm (\ref{eq:operatornorm}), we will also find useful the Schatten $p$-norms\footnote{As before, one should rigorously only apply this definition on a finite dimensional Hilbert space.  We will implicitly invoke the limiting process explained above in this paper, in order to make sense out of the $p$-norm on an infinite dimensional space.}  that interpolate in between the Frobenius and operator norms:
\begin{equation}
    \lV X\rV_p :=\tr [(X^\dagger X)^{p/2} ]^{1/p}.
\end{equation}
We will often deal with the normalized p-norm divided by $\mathrm{tr}(I)$ that
\begin{align}
   \normp{X}{\bar{p}}\le  \frac{\lV X\rV_p}{\normp{I}{p}}\le \lV X\rV_\infty \le \lV X\rV_p.
\end{align}
To show an approximate form of submultiplicativity for the Frobenius norm, we will also find useful the following standard inequality from functional analysis: 
\begin{prop}[non-commutative H\"older's inequality \cite{chen2021concentration}] \label{prop:holder}
If \begin{equation}
    \frac{1}{p} = \frac{1}{p_1}+\frac{1}{p_2},
\end{equation} then \begin{equation}
    \lVert AB\rVert_{p} \le \lVert A\rVert_{p_1} \lVert B\rVert_{p_2}
\end{equation}
\end{prop}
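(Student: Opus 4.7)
The plan is to reduce the matrix inequality to the classical Hölder inequality on scalar sequences, by passing through the singular value spectrum. The starting point is the standard representation $\lVert X\rVert_p = (\sum_j \sigma_j(X)^p)^{1/p}$, with $\sigma_1(X)\ge \sigma_2(X)\ge \cdots \ge 0$ the singular values of $X$ in decreasing order; this rewrites every Schatten norm as an $\ell^p$ norm of a nonnegative sequence. We implicitly take $p\ge 1$, so that all three Schatten norms involved are genuine norms and $x\mapsto x^p$ is convex and increasing on $[0,\infty)$.

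The key structural input I would invoke is Horn's log-majorization of singular values of a product,
\begin{equation*}
\prod_{j=1}^k \sigma_j(AB) \;\le\; \prod_{j=1}^k \sigma_j(A)\,\sigma_j(B), \qquad \text{for all } k\ge 1.
\end{equation*}
I would then appeal to the standard fact that weak log-majorization of nonnegative sequences, combined with any convex increasing $\phi$ satisfying $\phi(0)=0$, implies $\sum_j \phi(a_j)\le \sum_j \phi(b_j)$. Taking $\phi(x)=x^p$ gives the intermediate bound
\begin{equation*}
\lVert AB\rVert_p^p \;=\; \sum_j \sigma_j(AB)^p \;\le\; \sum_j \sigma_j(A)^p\,\sigma_j(B)^p.
\end{equation*}

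The remaining step is routine. The hypothesis $1/p = 1/p_1 + 1/p_2$ rearranges to $p/p_1 + p/p_2 = 1$, so the exponents $p_1/p$ and $p_2/p$ are Hölder-conjugate. The classical Hölder inequality applied to the nonnegative sequences $\{\sigma_j(A)^p\}$ and $\{\sigma_j(B)^p\}$ then yields
\begin{equation*}
\sum_j \sigma_j(A)^p \sigma_j(B)^p \;\le\; \Bigl(\sum_j \sigma_j(A)^{p_1}\Bigr)^{p/p_1} \Bigl(\sum_j \sigma_j(B)^{p_2}\Bigr)^{p/p_2} = \lVert A\rVert_{p_1}^p\, \lVert B\rVert_{p_2}^p,
\end{equation*}
and extracting the $p$-th root closes the argument.

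The principal obstacle is the log-majorization input itself; every other line is either a standard sequence inequality or routine algebra. The cleanest derivation of Horn's inequality I know proceeds from the Weyl-Horn relation, which follows from the Courant-Fischer min-max characterization of singular values applied to the exterior powers $\Lambda^k A$ and $\Lambda^k B$, together with submultiplicativity of the operator norm on wedge products. If one wished to avoid exterior algebra, an alternative path is Riesz-Thorin complex interpolation on Schatten classes: verify the endpoint cases $(p_1,p_2)=(p,\infty)$ and $(\infty,p)$ directly from operator-norm submultiplicativity and unitary invariance of Schatten norms, and then analytically interpolate in a holomorphic family of matrix-valued functions. Either path produces the stated bound in essentially the same generality.
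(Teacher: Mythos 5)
The paper does not supply its own proof of Proposition \ref{prop:holder}: it is stated as a standard fact and cited to \cite{chen2021concentration}, so there is no internal argument to compare against. Your proof stands on its own, and it is correct. The two ingredients — Horn's weak log-majorization $\prod_{j\le k}\sigma_j(AB)\le\prod_{j\le k}\sigma_j(A)\sigma_j(B)$, and the passage from weak log-majorization to the sum inequality $\sum_j\sigma_j(AB)^p\le\sum_j\sigma_j(A)^p\sigma_j(B)^p$ via Weyl's majorant theorem — are precisely the standard route to the Schatten--H\"older inequality, and the final step is a clean application of scalar H\"older with conjugate exponents $p_1/p$ and $p_2/p$. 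Two minor remarks. First, your phrasing of the majorization lemma (``weak log-majorization plus convex increasing $\phi$ with $\phi(0)=0$'') is accurate given the restriction $p\ge1$ you impose, since that chain factors as log-majorization $\Rightarrow$ weak majorization (apply $\exp$) $\Rightarrow$ conclusion (apply $x^p$, convex for $p\ge1$); but the sharper one-step version — requiring only that $\phi\circ\exp$ be convex and increasing — would also cover $0<p<1$, which is the regime where the Schatten quasi-norm inequality is usually stated. Second, the application in Lemma \ref{lem:submulti} takes $p_2=2/\theta\to\infty$ as $\theta\to1$, i.e. the endpoint $\lVert AB\rVert_p\le\lVert A\rVert_p\lVert B\rVert_\infty$ is the operative case; your log-majorization proof handles that endpoint directly (using $\sigma_j(AB)\le\sigma_1(B)\,\sigma_j(A)$), which is cleaner than setting up the interpolation machinery of your alternative route.
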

Additionally, we have:
\begin{prop}[Riesz-Thorin interpolation] \label{prop:riesz}
If \begin{equation} 
    \frac{1}{q_\theta} = \frac{\theta}{q_1} + \frac{1-\theta}{q_2},
\end{equation} then \begin{equation}
    \lVert A\rVert_{q_\theta} \le \lVert A\rVert_{q_1}^\theta \lVert A\rVert_{q_2}^{1-\theta}.
\end{equation}
\end{prop}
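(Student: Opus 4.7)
The plan is to reduce the inequality to a statement about the single operator $|A| = (A^\dagger A)^{1/2}$, which satisfies $\lV A\rV_p = \lV |A|\rV_p$ for every $p\in[1,\infty]$, and then to apply the non-commutative H\"older inequality (Proposition~\ref{prop:holder}) to a carefully chosen factorization of $|A|^{q_\theta}$. The key observation is that arbitrary powers of $|A|$ commute with one another, so there is no ordering ambiguity when splitting $|A|^{q_\theta}$.

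Concretely, I would first set $r = \theta q_\theta$ and $s = (1-\theta) q_\theta$, so that $r+s=q_\theta$, and write
\begin{equation}
|A|^{q_\theta} = |A|^{r}\,|A|^{s}.
\end{equation}
Next I would choose the H\"older exponents $p_1 = q_1/r$ and $p_2 = q_2/s$; the hypothesis $\tfrac{1}{q_\theta} = \tfrac{\theta}{q_1} + \tfrac{1-\theta}{q_2}$ is exactly what is needed to verify
\begin{equation}
\frac{1}{p_1}+\frac{1}{p_2} = \frac{r}{q_1}+\frac{s}{q_2} = q_\theta\!\left(\frac{\theta}{q_1}+\frac{1-\theta}{q_2}\right) = 1.
\end{equation}
Proposition~\ref{prop:holder} with the conjugate pair $(p_1,p_2)$ then yields
\begin{equation}
\lV |A|^{q_\theta}\rV_1 \le \lV |A|^r\rV_{p_1}\,\lV |A|^s\rV_{p_2}.
\end{equation}

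The remaining step is a bookkeeping computation: because $|A|^{r}$ is positive and semidefinite, $\lV |A|^{r}\rV_{p_1}^{p_1} = \mathrm{tr}(|A|^{r p_1}) = \mathrm{tr}(|A|^{q_1}) = \lV A\rV_{q_1}^{q_1}$, hence $\lV |A|^{r}\rV_{p_1} = \lV A\rV_{q_1}^{r}$, and similarly $\lV |A|^{s}\rV_{p_2} = \lV A\rV_{q_2}^{s}$. Substituting gives $\lV |A|^{q_\theta}\rV_1 \le \lV A\rV_{q_1}^{\theta q_\theta}\,\lV A\rV_{q_2}^{(1-\theta)q_\theta}$, and taking the $q_\theta$-th root on both sides finishes the argument, since $\lV |A|^{q_\theta}\rV_1^{1/q_\theta} = \lV A\rV_{q_\theta}$.

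There is no real obstacle here; the only subtlety is to make sure the exponents are picked so that both powers of $|A|$ land exactly on the Schatten indices $q_1$ and $q_2$, which is where the defining relation between $q_\theta,q_1,q_2,\theta$ enters. The edge cases $q_1=\infty$ or $q_2=\infty$ can be handled by the standard convention $|A|^0 = \Pi_{\mathrm{supp}(A)}$ and $\lV A\rV_\infty = \lim_{p\to\infty} \lV A\rV_p$, or simply by a density argument on finite-dimensional truncations (as already invoked in the footnote to the definition of Schatten norms).
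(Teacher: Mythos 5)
Your proof is correct. The paper itself cites Proposition~\ref{prop:riesz} as a standard functional-analytic fact and does not supply a proof, so there is nothing to compare it against line by line. What you have given is the canonical elementary derivation: the statement as the paper uses it is only the single-operator \emph{log-convexity of Schatten norms} (a.k.a.\ Lyapunov's inequality for traces), and that reduces cleanly to the non-commutative H\"older inequality exactly as you do, by splitting $|A|^{q_\theta} = |A|^{\theta q_\theta}\,|A|^{(1-\theta)q_\theta}$ and choosing the conjugate exponents $p_1=q_1/(\theta q_\theta)$, $p_2=q_2/((1-\theta)q_\theta)$. One small remark worth making explicit: the label ``Riesz--Thorin'' in the paper is a slight misnomer, since the genuine Riesz--Thorin theorem interpolates the norm of a \emph{fixed operator mapping between two Schatten classes} and needs the Hadamard three-lines argument, whereas what is stated and used here is the far weaker norm interpolation for a single matrix $A$, which needs only H\"older; your proof correctly exploits this. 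The edge cases $q_1=\infty$ (or $q_2=\infty$) in fact work out exactly within your scheme, since then $p_1=\infty$ and $p_2=1$ and $\lVert |A|^r\rVert_\infty=\lVert A\rVert_\infty^r$ directly, so the density/limiting remark, while fine, is not really needed.
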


\section{Towards submultiplicativity for the Frobenius norm}\label{sec:submulti}
In this section, we develop a mathematical machinery which will be critical to proving the Frobenius light cone, and its $p$-norm extensions.  In the main text, we will focus on the simpler case of two-body (2-local) Hamiltonians; the straightforward extension to $k$-local Hamiltonians is provided in Appendix \ref{app:klocal}.

The following lemma represents the key technical result of this section:
\begin{lem}\label{lem:submulti}
Consider the 2-local Hamiltonian\begin{equation}
    H = \sum_{i<j} H_{ij}
\end{equation}
with each $H_{ij}$ supported on exactly sites $i$ and $j$ ($\mathbb{P}_k H_{ij} = \mathbb{I}(k\in\lbrace i,j\rbrace) H_{ij}$),  and $\lV O\rV \le 1$, then
\begin{align}
    \lV HO\rV_{\mathrm{F}} \le  2\e \lV H \rV_{(2)} \lV O\rV_{\mathrm{F}}\bigg(|\ln\lV O\rV_{\mathrm{F}}|+1\bigg). \label{eq:3prop}
\end{align}
where the Frobenius norm is normalized as in (\ref{eq:frobeniusnorm}), and 
\begin{align}
    \lV H \rV_{(2)}:= \sqrt{\sum_{i<j}  \normp{H_{ij}}{\infty}^2}.
\end{align}
\end{lem}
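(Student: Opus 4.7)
My plan is to interpolate between two elementary bounds that are individually wasteful in opposite ways. Operator-norm submultiplicativity gives $\lV HO\rV_F \le \lV H\rV_\infty \lV O\rV_F$, which respects $\lV O\rV_F$ but throws away the gain of $\lV H\rV_{(2)}$ over the much larger $\lV H\rV_\infty$. Conversely $\lV HO\rV_F \le \lV H\rV_F \lV O\rV_\infty$, combined with Frobenius-orthogonality of distinct $H_{ij}$'s and the hypothesis $\lV O\rV \le 1$, yields $\lV HO\rV_F \le \lV H\rV_{(2)}$, but loses the factor $\lV O\rV_F$ entirely. I will combine the strengths of both by routing through an intermediate normalized Schatten $p$-norm.

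Concretely, the first step is the normalized non-commutative H\"older inequality (Proposition~\ref{prop:holder}) with exponents $1/(2k)+1/q=1/2$, for an integer parameter $k\ge 1$ to be optimized at the end:
\begin{equation*}
\normp{HO}{\bar 2}\ \le\ \normp{H}{\overline{2k}}\ \normp{O}{\bar q}.
\end{equation*}
Next, Riesz-Thorin interpolation (Proposition~\ref{prop:riesz}) between $\normp{O}{\bar 2}=\lV O\rV_F$ and $\normp{O}{\infty}\le 1$ gives $\normp{O}{\bar q}\le \lV O\rV_F^{\,2/q}=\lV O\rV_F^{\,1-1/k}$. All that remains is to control $\normp{H}{\overline{2k}}$ in terms of $\lV H\rV_{(2)}$.

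The crux of the proof, and what I expect to be the main obstacle, is a moment bound of the form $\normp{H}{\overline{2k}}\le C\,k\,\lV H\rV_{(2)}$ for an absolute constant $C$. The base case $k=1$ is purely algebraic: expanding $(H|H)=\sum_{i<j,\,k<l}(H_{ij}|H_{kl})$, the tracelessness condition $\mathbb{P}_m H_{ij}=\mathbb{I}(m\in\{i,j\})H_{ij}$ forces every cross term with $\{i,j\}\ne\{k,l\}$ to vanish upon taking a partial trace over a site in the symmetric difference, giving $\normp{H}{\bar 2}^2=\sum_{i<j}\lV H_{ij}\rV_F^2\le \lV H\rV_{(2)}^2$. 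For higher $k$ I will expand $\normp{H}{\overline{2k}}^{2k}=\tr(H^{2k})/\tr(I)$ as a sum over sequences of $2k$ edges $(e_1,\dots,e_{2k})$, apply tracelessness to restrict the sum to sequences in which every visited site appears with multiplicity at least two, and bound the surviving contributions via a Wigner-type walk-counting argument adapted to the 2-local graph. Non-commutativity of the $H_{e_t}$ and the fact that edges can share vertices in complicated patterns will make this combinatorial bookkeeping the most delicate part of the proof.

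Putting the three ingredients together yields $\lV HO\rV_F \le C\,k\,\lV H\rV_{(2)}\,\lV O\rV_F^{\,1-1/k}$. Optimizing by taking $k\approx |\ln \lV O\rV_F|$ converts $\lV O\rV_F^{-1/k}$ into at most $\e$ and produces the logarithmic factor $|\ln\lV O\rV_F|$ stated in the lemma; the additive ``$+1$'' absorbs the easy regime $\lV O\rV_F\gtrsim 1/\e$, where the base case $k=1$ already yields $\lV HO\rV_F\le \lV H\rV_{(2)}$ directly. Keeping the $k$-dependence in the moment bound linear, so that the final logarithm has exponent $1$ rather than $1/2$ or worse, is what I anticipate will require the most careful combinatorial control; a looser scaling there would only weaken the logarithmic factor without affecting the overall structure of the argument.
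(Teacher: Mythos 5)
Your overall skeleton is the same as the paper's: normalized H\"older to split $\normp{HO}{\bar 2}\le \normp{H}{\bar q}\normp{O}{\bar q'}$, Riesz--Thorin plus $\lV O\rV_\infty\le 1$ to push $\normp{O}{\bar q'}$ toward $\lV O\rV_{\mathrm F}$, a moment bound on $\normp{H}{\bar q}$ in terms of $\lV H\rV_{(2)}$ that grows only linearly in $q$, and a final optimization $q\sim|\ln\lV O\rV_{\mathrm F}|$. Your base case $q=2$ (Frobenius orthogonality of distinct $H_{ij}$ via a partial trace over a site in the symmetric difference) and your exponent-matching are both correct.

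The real divergence is in how you propose to get $\normp{H}{\overline{2k}}\lesssim k\lV H\rV_{(2)}$, and this is where the gap sits. The paper does not expand $\tr(H^{2k})$ combinatorially; instead it invokes a two-uniform-smoothness inequality (Proposition~\ref{prop:unif_subsystem}: if $\tr_i Y=0$ and $X=X_j\otimes I_i$ then $\lV X+Y\rV_q^2\le\lV X\rV_q^2+(q-1)\lV Y\rV_q^2$), and ``peels off'' one site index at a time; each peel costs a factor of $\sqrt{q-1}$, and two peels for a 2-local $H$ give the linear-in-$q$ scaling $\normp{H}{\bar q}\le(q-1)\lV H\rV_{(2)}$. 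Your proposed alternative---a direct Wigner-style trace expansion, keeping only sequences of edges in which every visited site has multiplicity $\ge 2$---is a genuinely different route. In principle it can work (for $H=\sum h_{ij}Z_iZ_j$ it reduces to moments of a degree-2 Rademacher chaos, where hypercontractivity gives exactly the linear-in-$q$ scaling), but be aware that the surviving sequences are \emph{not} just edge-pairings: closed even configurations like $(Z_1Z_2)(Z_2Z_3)(Z_1Z_3)$ also contribute, so you are counting Eulerian-type subgraphs and then summing their weights with care, and you must also handle non-commuting $H_{ij}$'s sharing a site. You flagged this as the ``most delicate part'' and left it unworked, so as written the argument is incomplete precisely at its load-bearing step. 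The uniform-smoothness inequality is the clean tool that packages all of this bookkeeping; I would point you to Proposition~\ref{prop:unif_subsystem} and its proof in Appendix~\ref{app:klocal} (building on Fact~4.1 of~\cite{HNTR20:Matrix-Product}) rather than recommending you push the walk-counting route to completion.
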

Before diving into the proof, let us put this result into some context and explain why this refinement is valuable. For simplicity, consider a operator $O$ of diameter $r$. As this paper is motivated by long-range interacting systems, we consider the effect of long-range interactions with another sphere of diameter $r$, a distance $O(r)$ away: see Figure \ref{fig:OH}. The leading order type of operator growth is
\begin{align}
    \mathrm{e}^{\mathrm{i}Ht}Oe^{-\mathrm{i}Ht} \approx O+ \mathrm{i}t[H,O] + \cdots,
\end{align}
and an unconditional bound would be 
\begin{align}
    \fnorm{HO} \le  \normp{H}{\infty} \fnorm{O}. \label{eq:normworst}
\end{align}
If we simply wish to bound $\fnorm{O}$, this triggers an recursive identity; however, it comes at the price of a large prefactor of $\normp{H}{\infty}$.  Indeed, observe that between two balls of size $r$ a distance $r$ apart, \cite{hierarachy} \begin{subequations}
    \label{eq:fig1eq}
    \begin{align}
        \lVert H\rVert_\infty &\lesssim \sum_{x\in \text{ball 1}} \sum_{y\in \text{ball 2}} \frac{1}{d(x,y)^\alpha} \sim \frac{1}{r^{\alpha-2d}}, \\
        \normp{H}{(2)}\sim \lVert H\rVert_{\mathrm{F}}  =\frac{\lV H \rV_2}{\normp{I}{2}} &\lesssim \sqrt{\sum_{x\in \text{ball 1}} \sum_{y\in \text{ball 2}} \left(\frac{1}{d(x,y)^\alpha}\right)^2} \sim \frac{1}{r^{\alpha-d}}. \label{eq:ball2}
    \end{align}
\end{subequations}
This equation suggests it is highly desirable to use $\lVert H\rVert_{(2)}$, \emph{not} $\lVert H\rVert_\infty$, if at all possible. Yet (\ref{eq:normworst}) suggests we must always use the worst case scenario, set by the operator norm.\footnote{Note that the normalization is absolutely crucial, otherwise the Schatten 2-norm is even looser than the operator norm, and is already submultiplicative:$
     \lV X\rV_2/\normp{I}{2}\le \lV X\rV_\infty \le \lV X\rV_2$.
}

An observation in \cite{kuwahara2021} was that the Frobenius norm can behave submultiplicatively when the other operator has bounded spectrum: as in (\ref{eq:infF}),
\begin{align}
    \lV HO\rV_{\mathrm{F}} \le  \lV H \rV_{\mathrm{F}} \lV O\rV_{\infty}. \label{eq:only_once}
\end{align}
If $\lV O\rV_{\infty} \approx \lV O\rV_{\mathrm{F}}$, as is the case when $O$ is a simple Pauli matrix ($O^2=1$), then using this inequality gives us an approximate form of submultiplicativity, one time.  Of course, clearly this does not feed in to a recursive bound; indeed in \cite{kuwahara2021} this trick was only used once.  But our refined bound is written entirely in terms of the Frobenius-like-norm $\normp{H}{(2)}$ (except for the global constraint $\lV O\rV\le 1$ inherited from the initial operator). This is more powerful and allows us to obtain much stronger Frobenius light cone bounds. We now present the proof of the lemma.

\begin{figure}[t]
    \centering
    \includegraphics[width=0.6\textwidth]{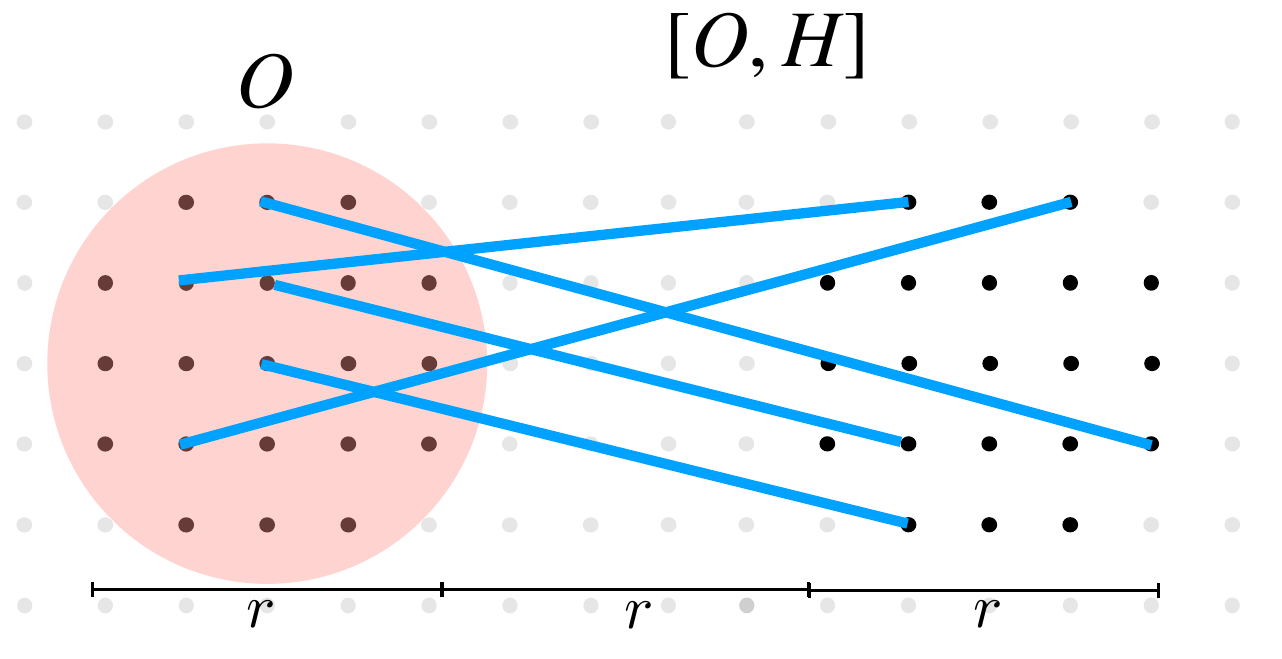}
    \caption{ Long-range interactions acting on a large operator between two balls of radius $r$. The interaction Hamiltonian between these two balls has norms given in (\ref{eq:fig1eq}).  The key technical advance of this paper is to learn how to take advantage of the smaller Frobenius norm of an interaction Hamiltonian in controlling the Frobenius light cone.
    }
    \label{fig:OH}
\end{figure}

\begin{proof}
Using Proposition \ref{prop:holder}, \begin{equation}
    \lVert HO\rVert_{2} =  \lVert HO\rVert_2 \le \lVert H\rVert_{2/(1-\theta)} \cdot \lVert O\rVert_{2/\theta}. \label{eq:p_holder3}
\end{equation}
We now bound each of these norms in turn. The idea is to choose $\theta\rightarrow 1$, without making the norm of $\normp{H}{2/(1-\theta)}$ too large.  Hence, we need to prove that the $p$-norms $\normp{H}{p}/\normp{I}{p}$ grow sufficiently slowly (when normalized by the Hilbert space dimension).  In other words, we need to prove bounds on the tails of the singular distribution of $H$.  In the mathematics/probability literature, this is called a concentration bound. 

More precisely, to bound $\normp{H}{q}$, $q:=2/(1-\theta)$,
 we will use a seemingly simple recursive moment inequality, proved in Appendix \ref{app:klocal}:
\begin{prop}[Uniform smoothness for subsystem]\label{prop:unif_subsystem}
Consider matrices $X, Y$ of the same dimensions that satisfy
$\tr_i(Y) = 0; X= X_j\otimes I_i$. For $q \ge 2$, 
\begin{equation}
\lVert X + Y\rVert_{q}^2\le \lVert X \rVert_{q}^2  + C_q\lVert Y\rVert_{q}^2 .
\end{equation}
for optimal constant $C_q=q-1$.
\end{prop}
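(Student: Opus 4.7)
The plan is to interpret the inequality as a manifestation of 2-uniform smoothness of the Schatten $q$-norm squared, exploiting that the tensor-product form of $X$ and the partial-trace condition on $Y$ force the first-order ``cross'' term to vanish. The key analytic input is the Ball--Carlen--Lieb two-uniform smoothness inequality: for $q\ge 2$ and any matrices $A,B$,
\begin{equation}
    \lV A+B\rV_q^2 + \lV A-B\rV_q^2 \le 2\lV A\rV_q^2 + 2(q-1)\lV B\rV_q^2 .
\end{equation}

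Define $g(t) := \lV X+tY\rV_q^2$ for $t\in\BR$, so the claim becomes $g(1)\le g(0)+(q-1)\lV Y\rV_q^2$. First, apply Ball--Carlen--Lieb centered at $X+tY$ with increment $sY$, which yields
\begin{equation}
    g(t+s)+g(t-s)\le 2g(t)+2(q-1)s^{2}\lV Y\rV_q^2, \qquad \forall\, t,s\in\BR .
\end{equation}
Setting $\tilde g(t) := g(t)-(q-1)t^{2}\lV Y\rV_q^2$, this rearranges to the midpoint inequality $\tilde g(t+s)+\tilde g(t-s)\le 2\tilde g(t)$, so (using continuity of $g$) $\tilde g$ is concave on $\BR$. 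Next I show $g'(0)=0$: the standard differentiation rule for the Schatten norm yields
\begin{equation}
    \frac{d}{dt}\bigg|_{t=0}\lV X+tY\rV_q^q = q\, \mathrm{Re}\,\tr\!\left((X^{\dagger}X)^{(q-2)/2}\,X^{\dagger}Y\right) ,
\end{equation}
and because $X=X_j\otimes I_i$, the factor $(X^{\dagger}X)^{(q-2)/2}X^{\dagger}$ has the product form $[(X_j^{\dagger}X_j)^{(q-2)/2}X_j^{\dagger}]\otimes I_i$. Thus the trace collapses to $\tr_j\!\left[(X_j^{\dagger}X_j)^{(q-2)/2}X_j^{\dagger}\cdot \tr_i(Y)\right]=0$ by the hypothesis $\tr_i(Y)=0$. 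After the chain rule this gives $g'(0)=0$, hence $\tilde g'(0)=0$. A concave function with vanishing derivative at $0$ satisfies $\tilde g(1)\le \tilde g(0)$, which rearranges to $\lV X+Y\rV_q^2 \le \lV X\rV_q^2 + (q-1)\lV Y\rV_q^2$, as desired.

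Optimality of $C_q=q-1$ I would check on the commutative reduction: taking $X_j$ and $Y$ simultaneously diagonal over a two-dimensional subsystem $i$ with $\tr_i Y=0$ degenerates the inequality to the scalar $\ell^q$ 2-uniform smoothness, whose sharp constant is well-known to be $q-1$ (matching the Ball--Carlen--Lieb optimum). The main technical subtlety I anticipate is the case where $X_j$ has zero singular values, since then $(X^{\dagger}X)^{(q-2)/2}$ is degenerate and the derivative calculation must be justified on the support of $X$. I would handle this by a routine density argument: replace $X_j$ by $X_j+\varepsilon I_j$, which preserves the tensor structure and is invertible for $\varepsilon>0$, apply the inequality, and let $\varepsilon\to 0$ using continuity of the Schatten norm in the matrix entries.
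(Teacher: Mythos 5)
Your proof is correct, and it takes a genuinely different path from the paper while starting from the same key external input. The paper (Appendix~\ref{app:klocal}, Proposition~\ref{prop:unif_subsystem2}) quotes Ball--Carlen--Lieb (Proposition~\ref{fact:unif_schatten}), then proves a monotonicity statement $\normp{X}{q}\le\normp{X+Y}{q}$ via the variational characterization $\normp{M}{q}=\sup_{\normp{B}{q'}\le 1}\tr(M^\dagger B)$, restricting to test operators of the form $B_{-i}\otimes I_i$ (Proposition~\ref{prop:nc_convexity}). Combining the two and applying Lyapunov yields the inequality with constant $2(q-1)$, after which the paper appeals to a ``more involved trick'' in the HNTR20 reference for the sharp constant. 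You instead deduce the parallelogram form of BCL by power-means, observe that $g(t)=\normp{X+tY}{q}^2$ minus a quadratic is midpoint-concave, and exploit the tensor structure by a \emph{derivative} computation: $\frac{d}{dt}\big|_{t=0}\normp{X+tY}{q}^q = q\,\mathrm{Re}\,\tr[(X^\dagger X)^{(q-2)/2}X^\dagger Y]=0$ because $\tr_i Y=0$ annihilates the $I_i$ factor. A concave function with vanishing derivative at the origin is maximized there, which gives $C_q=q-1$ \emph{directly}, without the extra citation. In effect your $g'(0)=0$ is the infinitesimal version of the paper's $\normp{X}{q}\le\normp{X+tY}{q}$; both encode the ``cross term vanishes'' phenomenon, but your version turns out to close the argument with the sharp constant in one stroke, whereas the paper's combination loses a factor of $2$. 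Your worry about degenerate singular values is handled either by your $\varepsilon$-perturbation or, more simply, by noting that $t\mapsto t^{q/2}$ is $C^1$ on $[0,\infty)$ for $q\ge 2$, so $\tr[(M^\dagger M)^{q/2}]$ is already differentiable everywhere on PSD matrices (polynomial case $q=2$ being trivial). The only caveat worth flagging: your optimality check via the commutative/diagonal reduction is stated a bit loosely---what one actually needs is a rank-one witness (e.g.\ a $2\times 2$ diagonal example on the $i$ factor) matching the two-dimensional $\ell^q$ sharp constant, which is exactly the BCL extremizer, so the claim is fine, just worth being explicit about.
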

Hence,
\begin{align}
    \normp{\sum_{i<j}  H_{ij}}{q}^2 &\le  \normp{\sum_{1<i<j }  H_{1j}}{q}^2 + C_q \normp{\sum_{1=i<j}  H_{ij}}{q}^2 \notag \\
    &\le \sum_{i} C_q \normp{\sum_{i<j}  H_{ij}}{q}^2\notag \\
    &\le \sum_{i<j} (C_q)^2 \normp{  H_{ij}}{q}^2 \notag \\
    &\le (C_q)^2\normp{I}{q}^2\sum_{i<j}  \normp{  H_{ij}}{\infty}^2 = (C_q)^2\normp{I}{q}^2 \normp{H}{(2)}^2 \label{eq:F_Hnorm}
\end{align}
In the first line we applied Proposition~\ref{prop:unif_subsystem} for $Y=\sum_{1=i<j}  H_{ij}$ and $X=\sum_{1<i<j }  H_{1j}$, i.e. ``peeling off" any term traceless on the qudit $i=1$. In the second inequality we peel off qudit $i=2,3,\ldots$ analogously. In the third line, we recursively repeat the first and second line for the $j$ index, which gives a second factor of $C_q$. Lastly, we use Holder to bound the q-norm by the $\infty$ norm with a $\normp{I}{q}$ overhead.\footnote{This also has the name of hypercontractivity that was proven in some special cases~\cite{montanaro_q_boolean,Montanaro_application_hypercontract}. We re-derive it using uniform smoothness which also has wider applications. See ~\cite{chen2021trotter} for a further discussion.}

(2) For the norm of $O$, a standard manipulation using Proposition \ref{prop:riesz}, and $\lVert O \rVert_\infty \le 1$, allows us to write \begin{equation}
    \lVert O\rVert_{2/\theta} \le \lVert O\rVert_2^\theta \cdot \lVert O\rVert_\infty^{1-\theta}  \le \lVert O\rVert_{2}^\theta  \label{eq:F_Onorm}
\end{equation}
Now combining (\ref{eq:frobeniusnorm}), (\ref{eq:p_holder3}), (\ref{eq:F_Hnorm}) and (\ref{eq:F_Onorm}), we plug in $q=2/(1-\theta)$ to get \begin{align}
    \fnorm{HO} =\frac{1}{\normp{I}{2}}\lVert HO\rVert_{2} &\le 
    \frac{1}{\normp{I}{2}} \left(\left(\frac{2}{1-\theta}-1\right) \normp{H}{(2)}\normp{I}{2/(1-\theta)}\right)\lVert O\rVert_{2}^\theta \le \left(\frac{2}{1-\theta}\right)\lVert H\rVert_{(2)}\left(\frac{\lVert O\rVert_{2}}{\normp{I}{2}}\right)^\theta
\end{align}
where we explicit display the $\normp{I}{2}$ and used that $\normp{I}{2}=(\tr[I])^{1/2}$ to normalize $\normp{O}{2}$.
Now we choose a value of $\theta$ close to 1: \begin{equation}
    n = \frac{1}{1-\theta} := \max\left(1,|\ln(\fnorm{O})| \right)  \le |\ln(\fnorm{O})|+1.  \label{eq:F_3proplast1}
\end{equation}
(Note that $\fnorm{O}\le 1$ since $\lVert O\rVert_\infty \le 1$.) Then \begin{equation}
    \fnorm{HO} \le \lVert H\rVert_{(2)}\fnorm{O} \times n \fnorm{O}^{-1/n} \le 2 \lVert H\rVert_{(2)}\fnorm{O} \times n \exp(|\ln\fnorm{O}|/n). \label{eq:F_3proplast2}
\end{equation}
Combining (\ref{eq:F_3proplast1}) and (\ref{eq:F_3proplast2}), we obtain the advertised result.
\end{proof}

In Appendix \ref{app:klocal}, we extend Lemma \ref{lem:submulti} to both $p$-norms and to $k$-local Hamiltonians.  The price of each of these extensions is a slightly worse prefactor in (\ref{eq:3prop}).  However, these prefactors are mild enough that we can still prove Theorem \ref{thm:pnorm_LC}: a partial extension of the Frobenius light cone to any $p$-norm.  

In Appendix \ref{app:othernorm}, we extend our results to an additional family of physical norms, inspired by \cite{Chen:2020bmq,yin2021finite}, which allow us to straightforwardly extend the results of Appendix \ref{app:klocal} to certain norms which induce physical correlation functions in thermodynamic ensembles. 

These two additional sets of results demonstrate the flexibility of our methods.

\section{Light cones}\label{sec:mainthm}
Now we turn to our main application of the mathematics developed above: the Frobenius light cone for a one-dimensional system with power-law interactions.

\subsection{Frobenius light cone}

For simplicity, suppose that we study a spin chain with a qubit on each site.  Let us define a set of projectors $\mathbb{Q}_i$ for $i=0,1,\ldots,R$ (here $R$ is an integer capturing the number of sites on the chain we wish to study).  On Pauli strings of the form $X^{a_0}_0 X^{a_1}_1 \cdots$ with $a_0=0$ denoting $I$ and $a_0=1,2,3$ denoting $X_0,Y_0,Z_0$, we define: \begin{equation} \label{eq:defQx}
    \mathbb{Q}_x | \lbrace X^{a_i}_i\rbrace ) = \left\lbrace \begin{array}{ll} \indicator(a_y=0 \text{ if } y>0)| \lbrace X^{a_i}_i\rbrace ) &\ x=0 \\  \indicator(a_x >0)\indicator(a_y =0\text{ if } x<y) | \lbrace X^{a_i}_i\rbrace ) &\ 0<x<R \\ \indicator(a_y\ne0 \text{ for some } y\ge R)| \lbrace X^{a_i}_i\rbrace ) &\ x=R \end{array}\right..
\end{equation}
Here $\indicator$ denotes the indicator function which returns 1 if its argument is true, and 0 if false.   The projector $\mathbb{Q}_x$ hence projects onto operators which act non-trivially on site $x$, but trivially on all sites to the right of $x$.  The exceptions are $\mathbb{Q}_0$ and $\mathbb{Q}_R$, where we simply include all operators supported on a site $\ge R$ into $\mathbb{Q}_R$, and all operators supported on $\le 0$ in $\mathbb{Q}_0$.  Note that \begin{equation}
    \sum_{i=0}^R \mathbb{Q}_i = 1.
\end{equation}   
For more general qudit systems, the extension of these definitions is straightforward (albeit notation can get clunkier).

With these projectors defined, we may now state our main result, which is the Frobenius light cone for spin chains with power-law interactions:
\begin{thm}\label{thm4}
Let $H(t)$ be a time-dependent Hamiltonian which has power-law exponent $\alpha>1$, as defined in Section \ref{sec:preLR}; let \begin{equation}
    R = 2^{q_*+1}-1 \label{eq:Rdef}
\end{equation} for some positive integer $q_*$. Then for any $\delta>0$, if $\mathbb{Q}_0|A_0) = |A_0)$,  and
\begin{equation}
    \lVert \mathbb{Q}_R |A_0(t))\rVert_{\mathrm{F}} \ge \delta, \label{eq:delta41}
\end{equation}
then there exists a $\alpha$-dependent constant $0<K_\alpha<\infty$ such that \begin{equation}
    |t| \ge \delta^2 K_\alpha \times \left\lbrace \begin{array}{ll} R/\ln R &\ \alpha>2  \\ R/\ln^2 R &\ \alpha =  2 \\ R^{\alpha-1} &\ 1<\alpha<2 \end{array} \right.. \label{eq:main41}
\end{equation}
\end{thm}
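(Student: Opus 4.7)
The proof will leverage two ingredients: a dyadic decomposition of $[0,R]$ (made clean by $R = 2^{q_*+1}-1$) and the approximate Frobenius submultiplicativity of Lemma \ref{lem:submulti}, to track the invasion of operator weight scale-by-scale toward the boundary site $R$.

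First, I would introduce scale-$q$ probabilities $P_q(t) := \lVert \Pi_{\ge L_q}\, A_0(t)\rVert_{\mathrm{F}}^2$ at dyadic scales $L_q := 2^q - 1$ for $q = 1,\dots,q_*+1$, where $\Pi_{\ge L_q}$ projects onto operators with at least one non-identity Pauli at a site $\ge L_q$. In particular $\Pi_{\ge R} = \mathbb{Q}_R$, so the hypothesis gives $P_{q_*+1}(t) \ge \delta^2$, while $P_q(0) = 0$ for $q \ge 1$. Since $\Pi_{\ge L_q}$ commutes with commutators involving Hamiltonian terms supported wholly inside $[0,L_q)$ or wholly inside $[L_q,\infty)$, only cross-terms drive $\dot P_q$, and an application of Cauchy--Schwarz together with Lemma \ref{lem:submulti} applied to $O = (1-\Pi_{\ge L_q}) A_0(t)$ (whose operator norm is bounded by $\lVert A_0\rVert_\infty \le 1$) yields a one-step inequality of the form
\begin{equation*}
|\dot P_q(t)| \lesssim \lVert H_{\mathrm{cross}(L_q)}(t)\rVert_{(2)} \cdot \lVert (1-\Pi_{\ge L_q})A_0(t)\rVert_{\mathrm{F}} \cdot \bigl(1 + |\ln \lVert (1-\Pi_{\ge L_q})A_0(t)\rVert_{\mathrm{F}}|\bigr).
\end{equation*}

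The critical step is a spatial refinement: the naive bound $\lVert H_{\mathrm{cross}(L_q)}\rVert_{(2)} = O(1)$ for $\alpha > 1$ gives only $t \gtrsim \delta^2$, with no $R$-dependence. I would therefore dyadically split the cross-Hamiltonian by the shell of its inner endpoint, $H_{\mathrm{cross}(L_q)} = \sum_{k} H^{(k)}_{\mathrm{cross}(L_q)}$, where the inner endpoint lies in $[L_{q-k-1}, L_{q-k})$. The $(2)$-norm of each piece scales like $\lVert H^{(k)}_{\mathrm{cross}(L_q)}\rVert_{(2)} \sim 2^{(q-k)(1-\alpha)}$ (with the precise exponent depending on whether $\alpha > 2$, $=2$, or $\in(1,2)$), and each such piece can only couple to the part of $A_0(t)$ whose support has already reached the inner shell, whose amplitude is bounded by $\sqrt{P_{q-k-1}(t)}$. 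This converts the one-step inequality into a coupled hierarchy of dyadic inequalities for the $\{P_q\}$. Finally, I would iterate these inequalities from $q = 1$ up to $q = q_*+1$: defining $T_q := \inf\{t : P_q(t) \ge \delta^2\}$, each increment $T_q - T_{q-1}$ is bounded below by $\delta^2$ divided by the effective coupling at scale $q$, so that $t \gtrsim \delta^2 \sum_q C_{q,\alpha}^{-1}$. Evaluating this geometric sum: for $\alpha > 2$ the short-range terms dominate, with $\log R$ dyadic scales and a single unavoidable log from Lemma \ref{lem:submulti} producing $R/\ln R$; for $1 < \alpha < 2$ the long-range tail dominates and the sum is controlled by the single largest jump of size $\sim R$, giving $R^{\alpha-1}$; the marginal case $\alpha = 2$ yields $R/\ln^2 R$ after both regimes contribute equally.

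The main obstacle will be the spatial-refinement step: matching the dyadic decomposition of the Hamiltonian with the operator's support, and tightly controlling the logarithmic overhead that Lemma \ref{lem:submulti} introduces at each of $O(\log R)$ iterations. Applied naively, these logs could compound to an $R^{o(1)}$ factor overwhelming the polynomial scaling. Ensuring that only the advertised logs (a single $\ln R$ or $\ln^2 R$ depending on $\alpha$) survive requires careful use of the bound $x(1 + |\ln x|) = O(1)$ for $x \in (0,1]$ and a Gronwall-type integration adapted to the coupled hierarchy.
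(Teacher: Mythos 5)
Your proposal and the paper agree on two of the key ingredients: the dyadic organization of the Hamiltonian (the same $R = 2^{q_*+1}-1$ structure of Figure~\ref{fig:long_range_int}) and the use of Lemma~\ref{lem:submulti} to replace $\lVert H\rVert_\infty$ by $\lVert H\rVert_{(2)}$. But the paper does \emph{not} track a hierarchy of passage times $T_q$. It instead introduces the ``mean-rightmost-site'' superoperator $\mathcal{F} = \sum_i i\,\mathbb{Q}_i$ and applies Markov's inequality, $\delta^2 \le \lVert \mathbb{Q}_R |A_0(t))\rVert_{\mathrm{F}}^2 \le (A_0(t)|\mathcal{F}|A_0(t))/R$, so that it suffices to bound $\tfrac{\mathrm{d}}{\mathrm{d}t}(A_0(t)|\mathcal{F}|A_0(t))$ \emph{uniformly in time} by $K_\alpha\ln R$, $K_2\ln^2 R$, or $K_\alpha R^{2-\alpha}$. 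A discrete integration by parts, $\sum_{j>i}(j-i)\mathbb{Q}_j\mathcal{L}\mathbb{Q}_i = \sum_{j>i}\mathbb{Q}_j\mathcal{L}\sum_{k\le i}\mathbb{Q}_k$, linearizes the $(j-i)$ weighting and lets all dyadic scales contribute to a single global rate.

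This is not merely a cosmetic difference: your sequential argument has two gaps. First, the claim that $T_q - T_{q-1}$ is bounded below by $\delta^2$ over the effective coupling does not follow from the coupled system of one-step inequalities. The scales are genuinely entangled, and a long-range term can push weight directly to the $q$-shell with only a tiny delay after the $(q-1)$-shell is reached; nothing in the hierarchy forces a sequential drift. Second, even if one granted the sequential picture, integrating the differential inequality for a single $P_q$ starting from $P_q(T_{q-1}) \approx 0$ runs into the non-Lipschitz right-hand side: the rate scales like $C_q\sqrt{P_q}\,(1 + |\ln\sqrt{P_q}|)$, and the Gronwall integral $\int_0^{\delta^2}\mathrm{d}P/[\sqrt{P}(1+|\ln\sqrt{P}|)]$ only gives a time $\sim \delta/|\ln\delta|$ per scale, not $\delta^2$; moreover, each application of Lemma~\ref{lem:submulti} on a fresh scale re-introduces a logarithm. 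Your proposed fix, $x(1+|\ln x|) = O(1)$, does not apply because the rate involves $\sqrt{P_q}$ rather than $P_q$, so the logarithms are not swallowed. The paper avoids all of this by instead bounding, at each dyadic scale, an entropy-like functional $\sum_n p_{q,n}\bigl(1 - \tfrac{1}{2}\ln p_{q,n}\bigr)$ of the occupation probabilities $p_{q,n}$, which is maximized at the uniform distribution $p_{q,n} \sim 2/(K_q+1)$; this caps the logarithmic enhancement at a single factor of $\ln R$ per scale and makes the case distinction in (\ref{eq:main41}) fall out directly from the geometric sum $\sum_q (q_*-q)2^{-q(\alpha-2)}$.
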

Note that in the statement of this theorem, we choose $R$ to be (nearly) a power of 2 for the conceptual simplicity of Figure \ref{fig:long_range_int}. From the definition in (\ref{eq:defQx}), we can always just choose $r/2 < R \le r$ if we wish to get a bound on the light cone at any distance $r$ which is not of the form (\ref{eq:Rdef}).

\begin{proof}
The method of proof is adapted from the ``quantum walk" proofs developed in \cite{hierarachy,Lucas:2019cxr,Lucas:2020pgj,Yin:2020oze,Yin:2020pjd,Chen:2020bmq}.  In a nutshell, we can think of $(A_0(t)|\mathbb{Q}_x|A_0(t))$ as the probability that the operator has made it as far right as site $x$.  We then try to bound how quickly $(A_0(t)|\mathbb{Q}_x|A_0(t))$ can grow with $x$ and $t$.  We do this using Markov's inequality, by bounding the \emph{expected right-most site} of the operator, and noting that \begin{equation}
    (A_0(t)|\mathbb{Q}_R|A_0(t)) \le \frac{\mathbb{E}[\text{right-most site $x$ at time $t$}]}{R}.
\end{equation}

To formalize this last equation, we define the following superoperator $\mathcal{F}$ which acts on an operator and returns its right most site: \begin{equation}
    \mathcal{F} := \sum_{i=0}^R i \mathbb{Q}_i.
\end{equation} Our goal is to prove that at sufficiently short $t$,
\begin{equation}
    \delta^2 \le \lVert \mathbb{Q}_R |A_0(t)) \rVert_{\mathrm{F}}^2 \le \frac{1}{R}(A_0(t)|\mathcal{F}|A_0(t)).\label{eq:markovdelta}
\end{equation}
We thus must bound $\mathrm{d}/\mathrm{d}t (A|\mathcal{F}|A)$. This is achieved by the following Lemma: \begin{lem} \label{lem42}
For constants $0<C<\infty$,
 \begin{align} \label{eq:lem42eq}
    \frac{\mathrm{d}}{\mathrm{d}t} &(A_0(t)|\mathcal{F}|A_0(t)) 
    \le  \notag \\
    & \left\lbrace \begin{array}{ll} 
    C(q_*+1) \left(4+ (1+ \frac{q_*}{2})\ln 2\right)  &\ \alpha =2 \\
    \displaystyle  \dfrac{C}{(1-2^{2-\alpha})^2}\left((4+\ln 2)(2^{2-\alpha}-1)(2^{(2-\alpha)q_*}-1) + \ln 2 (2^{(2-\alpha)q_*} + q_* - (1+q_*)2^{2-\alpha})\right)&\  \text{else} \end{array}\right..
\end{align} 
\end{lem}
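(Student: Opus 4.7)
The plan is to combine the standard quantum-walk strategy for Frobenius bounds with the near-submultiplicativity estimate from Lemma \ref{lem:submulti}. First I would differentiate the ``progress function'' $(A_0(t)|\mathcal{F}|A_0(t))$ directly. Because $\mathcal{L}(t) = i\,\mathrm{ad}_{H(t)}$ is anti-Hermitian and $\mathcal{F} = \sum_x x\mathbb{Q}_x$ is Hermitian in the normalized Frobenius inner product, a short computation inserting resolutions of identity via $\sum_y \mathbb{Q}_y = \mathbb{I}$ yields
\[
\frac{d}{dt}(A|\mathcal{F}|A) = 2\sum_{x<y}(y-x)\,\mathrm{Im}\,(\mathbb{Q}_x A\,|\,[H,\mathbb{Q}_y A]).
\]
Cauchy--Schwarz combined with Lemma \ref{lem:submulti}, applied separately to $\|H(\mathbb{Q}_y A)\|_{\mathrm{F}}$ and to $\|(\mathbb{Q}_y A) H\|_{\mathrm{F}}$ (the latter via the identity $\|OH\|_{\mathrm{F}}=\|HO^\dagger\|_{\mathrm{F}}$ for $H=H^\dagger$), then bounds each off-diagonal matrix element by
\[
|(\mathbb{Q}_x A\,|\,[H,\mathbb{Q}_y A])| \lesssim \|\mathbb{Q}_x A\|_{\mathrm{F}}\,\|H_{\mathrm{eff}(x,y)}\|_{(2)}\,\|\mathbb{Q}_y A\|_{\mathrm{F}}\,\big(|\ln\|\mathbb{Q}_y A\|_{\mathrm{F}}|+1\big).
\]
A small combinatorial analysis of Pauli-string overlaps identifies the effective Hamiltonian $H_{\mathrm{eff}(x,y)}$ as the terms straddling the ``cut'' between sectors $x$ and $y$, with $\|H_{\mathrm{eff}(x,y)}\|_{(2)}^2$ controlled by $\sum_{i\le x}(y-i)^{-2\alpha}$ via the power-law assumption (\ref{eq:Jab}).

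Next I would organize the resulting double sum over $(x,y)$ via the dyadic tree on $[0,R]$ implicit in the choice $R=2^{q_*+1}-1$, following the ball-to-ball geometry of Figure \ref{fig:OH}. Each pair $(x,y)$ is assigned to the scale $q\in\{0,\ldots,q_*\}$ at which $x$ and $y$ first lie in different scale-$q$ sub-blocks of a shared parent block. At scale $q$ the effective Hamiltonian between two dyadic balls of size $\sim 2^q$ has Frobenius-$(2)$ norm $\lesssim 2^{q(1-\alpha)}$, exactly as in (\ref{eq:ball2}) specialized to $d=1$, $r=2^q$; and the jump distance satisfies $y-x\lesssim 2^q$. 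Summing the per-pair bound over the $\sim 2^{q_*-q}$ disjoint block pairs at scale $q$, while using Cauchy--Schwarz on the block amplitudes $a_B:=\sqrt{\sum_{x\in B}\|\mathbb{Q}_x A\|_{\mathrm{F}}^2}$ together with $\sum_x\|\mathbb{Q}_x A\|_{\mathrm{F}}^2\le 1$, yields a scale-$q$ contribution of order $2^{q(2-\alpha)}L_q$, where $L_q$ is the Lemma \ref{lem:submulti} log factor at that scale. Summing geometrically in $q$ then gives
\[
\frac{d}{dt}(A|\mathcal{F}|A) \lesssim \sum_{q=0}^{q_*} 2^{q(2-\alpha)} L_q,
\]
which reproduces the three branches of (\ref{eq:lem42eq}): a convergent series $O(L)$ for $\alpha>2$, an arithmetic sum $O(q_* L)$ for $\alpha=2$, and a largest-scale-dominated sum $O(2^{q_*(2-\alpha)}L)$ with subleading $q_*$ corrections for $1<\alpha<2$.

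The main obstacle will be controlling the logarithm $L_q=|\ln\|\mathbb{Q}_y A\|_{\mathrm{F}}|+1$, which is a priori unbounded when a sector carries exponentially small weight. My plan is to exploit the global constraint $\sum_y\|\mathbb{Q}_y A\|_{\mathrm{F}}^2\le 1$ together with the finite sector count $R+1=2^{q_*+1}$: an entropy-style Cauchy--Schwarz applied to $\sum_y a_y^2 L_y^2$ should cap the effective logarithm at $\tfrac{1}{2}(q_*+1)\ln 2=\ln\sqrt{R+1}$, producing precisely the $(1+q_*/2)\ln 2$ factor appearing in the $\alpha=2$ branch of (\ref{eq:lem42eq}) and the $\ln 2$ prefactor in the $\alpha\ne 2$ branches. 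For $1<\alpha<2$ this logarithm is dominated by $2^{q_*(2-\alpha)}$ and is absorbed into the subleading terms $q_*-(1+q_*)2^{2-\alpha}$ that appear in (\ref{eq:lem42eq}). A secondary bookkeeping concern is to verify that each two-body term $H_{ij}$ is counted exactly once in the dyadic partition; this is ensured by the binary-tree-friendly choice $R=2^{q_*+1}-1$, with general $R$ handled by a simple padding argument.
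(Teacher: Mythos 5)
Your overall strategy is essentially the same as the paper's: differentiate the progress function, rewrite the derivative in terms of $\mathbb{Q}_x \mathcal{L}\mathbb{Q}_y$ cross-terms, organize the Hamiltonian into dyadic scales, apply Cauchy--Schwarz together with Lemma \ref{lem:submulti}, and tame the logarithm using the normalization constraint $\sum_y \|\mathbb{Q}_y A\|_{\mathrm{F}}^2 \le 1$. The one real structural departure is that the paper inserts a ``discrete integration by parts'' --- replacing the weight $(j-i)$ by $\sum_{k=0}^{i}\mathbb{Q}_k$ --- before applying the dyadic decomposition and Cauchy--Schwarz, which means the argument to Lemma \ref{lem:submulti} is the cumulative projection $\sum_{k\le i}\mathbb{Q}_k A$, a genuine conditional expectation of $A$ obeying $\lVert\sum_{k\le i}\mathbb{Q}_k A\rVert_\infty\le 1$. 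You instead feed $\mathbb{Q}_y A$ (a difference of two conditional expectations, with $\lVert\cdot\rVert_\infty\le 2$, so a rescaling is needed) into Lemma \ref{lem:submulti} and then count the distance factor $(y-x)\lesssim 2^q$ by hand. Both are legitimate routes to the same per-scale coefficient $\sim 2^{q(2-\alpha)}$.

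The one place where your write-up is imprecise in a way that matters is the handling of the logarithmic factor. You propose to cap $L_q$ uniformly by $\tfrac{1}{2}(q_*+1)\ln 2$ via an entropy-style bound at the level of the individual sectors $y$. The paper instead applies the maximum-entropy argument \emph{scale by scale}: at scale $q$ the operator weight $p_{q,n}$ is distributed over only $K_q+1 \sim 2^{q_*-q}$ blocks $S_{q,n}$, so the effective log is $\sim (q_*-q)\ln 2$, which vanishes at the coarsest scale $q=q_*$. For $\alpha>2$ (small-$q$-dominated) and $\alpha=2$ (flat) this distinction is washed out, and both give $\sim q_*$ and $\sim q_*^2$ respectively, as you note. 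But for $1<\alpha<2$ the sum is dominated by $q=q_*$, where the paper's scale-resolved log is $O(1)$, yielding $O(2^{q_*(2-\alpha)})=O(R^{2-\alpha})$ with only additive $O(q_*)$ corrections --- exactly the structure of the ``else'' branch of (\ref{eq:lem42eq}). A uniform $q_*$-cap would instead give $O(R^{2-\alpha}\ln R)$, which is logarithmically weaker than the stated lemma. To recover the sharp bound you should perform the entropy maximization over the $\sim 2^{q_*-q}$ block amplitudes $\{a_B\}_{B}$ at each scale $q$ separately (as the paper does via $\max_{\{p_{q,n}\}}\sum_n p_{q,n}(1-\tfrac12\ln p_{q,n})$ subject to $\sum_n p_{q,n}\le 2$), rather than over the $\sim 2^{q_*}$ sector amplitudes $\{a_y\}$ once and for all.
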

\begin{proof}
The rigorous proof is in Appendix \ref{app:41}; here we sketch the main ideas.    We divide up the Hamiltonian into a set of scales $q$, with the rough intuition that couplings on scale $q$ have lengths of order $2^q$.  Using some identities special to one-dimensional models, we are able to re-write \begin{align}
    \frac{\mathrm{d}}{\mathrm{d}t} (A_0(t)|\mathcal{F}|A_0(t)) &= (A_0(t)|[\mathcal{F},\mathcal{L}]|A_0(t)) = \sum_{q=0}^{q_*} \sum_k (A_0(t)|[\mathcal{F},\mathcal{L}_{q,k}]|A_0(t))
\end{align}
as a sum of similar inner products at each scale $q$; here $q_*$ denotes the maximum scale $q_*\sim \ln r$, and $k$ denotes which ``block" of couplings of size $\sim 2^q$ we are studying.   These blocks are shown in Figure \ref{fig:long_range_int}.  At each scale, we can efficiently re-sum up the magnitude of each $\mathcal{L}_{q,k}$ using Lemma \ref{lem:submulti}.  We find that each factor of $\mathcal{L}_{q,k}$ contributes $(2^q)^{1-\alpha}$, following the logic of (\ref{eq:ball2}).  When $\alpha\ge 2$, the logarithmic factors in (\ref{eq:3prop}) cannot be neglected, and lead to an additional logarithmic enhancement in (\ref{eq:lem42eq}).
\end{proof} 

\begin{figure}[t]
    \centering
    \includegraphics[width=0.9\textwidth]{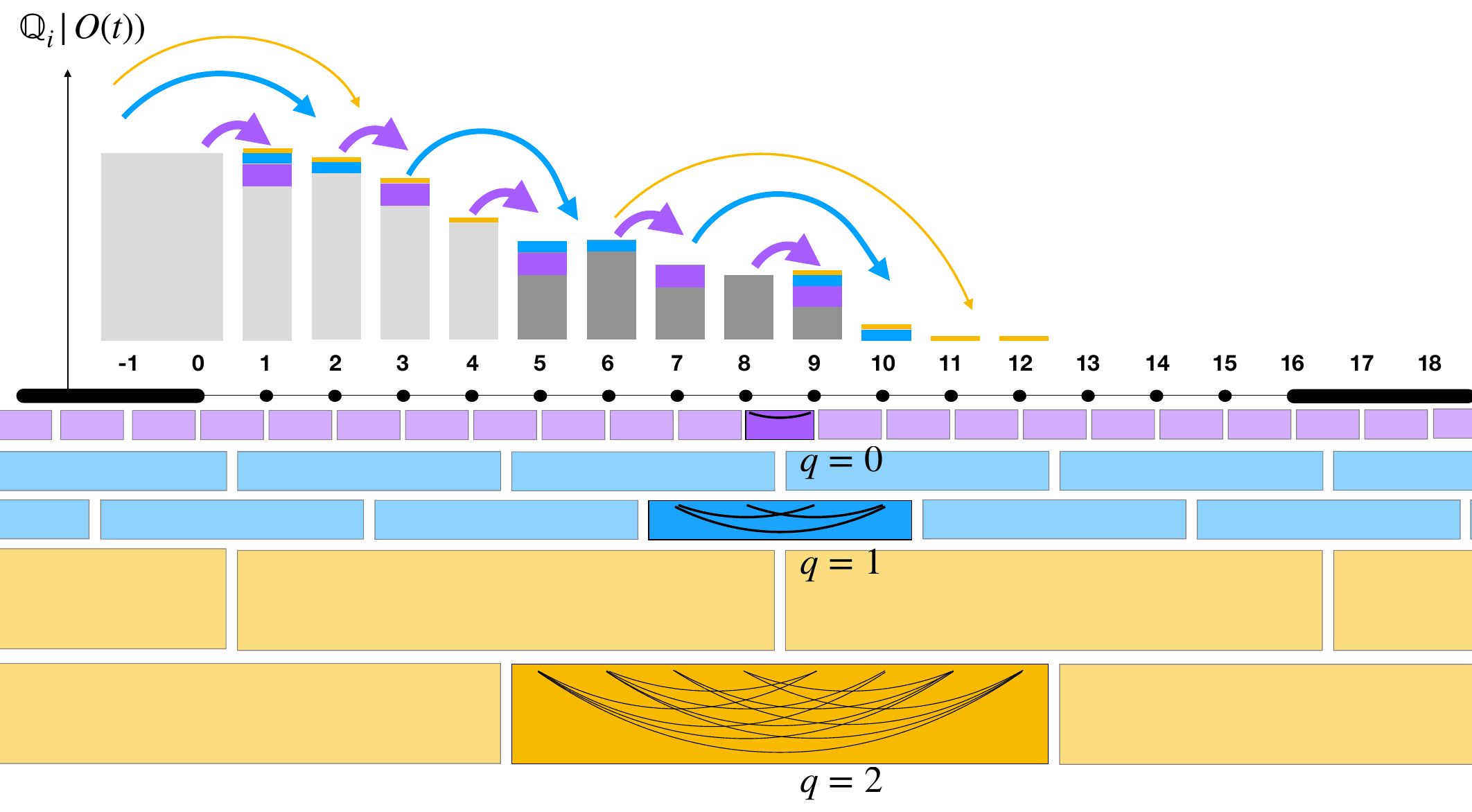}
    \caption{We organize a growing operator $|A_0(t))$ according to the site $x$ on which the right-most non-trivial Pauli acts (achieved via projectors $\mathbb{Q}_x|A_0(t))$. We then regroup the Hamiltonian into scales as in~\cite{alpha_3_chenlucas}.  We then study the quantum walk of the weights $\mathbb{Q}_x|A_0(t))$, which are efficiently bounded one scale at a time.}
    \label{fig:long_range_int}
\end{figure}

Using the result of this lemma, we see that there exists an $\alpha$-dependent constant $K_\alpha$ such that \begin{equation}
    \frac{\mathrm{d}}{\mathrm{d}t} (A_0(t)|\mathcal{F}|A_0(t)) \le \left\lbrace \begin{array}{ll} K_\alpha \ln R &\ \alpha > 2 \\ K_2 \ln^2 R &\ \alpha = 2 \\ K_\alpha R^{2-\alpha} &\ 1<\alpha<2 \end{array}\right.. \label{eq:dfdtbounds}
\end{equation}
Indeed, looking at (\ref{eq:lem42eq}), when $\alpha=2$, the constant scales as $q_*^2 \sim \ln^2 R$; when $\alpha>2$, the term linear in $q_*$ dominates and leads to $\ln R$ scaling; when $\alpha<2$, the $2^{(2-\alpha)q_*}\sim R^{2-\alpha}$ terms dominate.

Combining (\ref{eq:dfdtbounds}) and (\ref{eq:markovdelta}) we obtain (\ref{eq:main41}).
\end{proof}

As we will show in the next section, the scaling of Theorem \ref{thm4} is optimal.  We thus arrive at the surprising result that it is possible to have a Lieb-Robinson velocity which grows exponentially with time (at $\alpha=2$), while the butterfly velocity of the Frobenius light cone grows no faster than $\ln^2 t$.

We note that it is quite challenging to extend this result to higher dimensions.  The reason is that on a 1d chain, it is very simple to define the projectors $\mathbb{Q}_x$ to classify how far to the right an operator has grown.  For $d>1$, we have tried to define analogous projectors keeping track of the ``diameter" of a growing operator, yet because there are $L^{d-1}$ sites near the edge of a ball of diameter $L$, the transition rates between sectors $\mathbb{Q}_L$ and $\mathbb{Q}_{L+1}$ will grow with $L$.  The more sophisticated ansatz necessary to use quantum walk bounds to constrain higher dimensional models (with power-law interactions) has not yet been found.  An analogous challenge was present in the original proof of a linear light cone in 1d systems with power law interactions \cite{alpha_3_chenlucas}; a much more complicated proof \cite{kuwaharaStrictlyLinearLight2020} was necessary to prove the linear light cone in higher $d$.  We would not be surprised if a similar lengthy extension is required to generalize Theorem \ref{thm4} to $d>1$; however, we expect that our Lemma \ref{lem:submulti} will be at the heart of any generalization of this work.

\subsection{The Lieb-Robinson light cone is rarely tight}

A natural question to ask is whether the Frobenius light cone or the Lieb-Robinson light cone is more relevant for a ``typical'' physical state.  Naively, one would expect the Frobenius light cone to be more relevant. Indeed, we can take a probabilisitc interpretation of the Frobenius norm:
\begin{align}
    \fnorm{A}^2=\tr[ \BE_{\psi}(\ket{\psi}\bra{\psi}) A^\dagger A  ] = \BE_{\psi}[\normp{A\ket{\psi}}{\ell_2}^2 ]
\end{align}
for any ensemble of pure states that average to the maximally mixed $\BE_{\psi}(\ket{\psi}\bra{\psi}) = I/\tr[I]$.  
Then, we can obtain a concentration inequality via Chebychev's inequality:
\begin{align}
        \mathbb{P}\L(  \lnormp{ A \ket{\psi}}{\ell_2}\ge a \R) \le \frac{\fnorm{A}^2}{a^2}.
\end{align}
In words, a state drawn ``randomly'' from the ensemble will mostly like be of order $\CO(\fnorm{A})$. However, $1/a^2$ dependence is not the strongest concentration we can ask for.

As it turns out, a sharper concentration inequality would be possible if we further obtained bounds on the Schatten $p$-norms of $A$, for tunable values of $p$.  Let us denote normalized p-norm by \begin{equation}
  \normp{O}{\bar{p}}:=\frac{\normp{O}{p}}{\normp{I}{p}}  \label{eq:barnorm}
\end{equation} that avoids tedious and exponentially large normalization constants when discussing the $p$-norms of local operators in a many-body system.  We now note the following fact:
\begin{prop}[p-norm and typical states~\cite{chen2021trotter}]\label{fact:pnorm_typical_state} For any operator $A$, when uniformly averaging over states $|\psi\rangle$ that $\BE \ket{\psi}\bra{\psi} = I/\tr[I]$,
\begin{align}
    \mathbb{P}\L(  \lnormp{ A \ket{\psi}}{\ell_2}\ge a \R) \le \L(\frac{\normp{A}{\bar{p}}}{a}\R)^p.
\end{align}
\end{prop}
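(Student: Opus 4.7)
The plan is to apply Markov's inequality to the random variable $\lnormp{A\ket{\psi}}{\ell_2}^p$ (rather than to $\lnormp{A\ket{\psi}}{\ell_2}$ itself), and then bound the $p$-th moment using the 1-design property $\BE \ket{\psi}\bra{\psi} = I/\tr[I]$ together with an operator Jensen-type inequality. Explicitly, the first step is
\begin{equation}
    \PP\!\L( \lnormp{A\ket{\psi}}{\ell_2} \ge a\R) = \PP\!\L( \langle \psi | A^\dagger A|\psi\rangle^{p/2} \ge a^p\R) \le \frac{\BE_\psi \langle \psi|A^\dagger A|\psi\rangle^{p/2}}{a^p},
\end{equation}
so the proposition reduces to showing $\BE_\psi \langle \psi|A^\dagger A|\psi\rangle^{p/2} \le \normp{A}{\bar p}^p = \tr[(A^\dagger A)^{p/2}]/\tr[I]$.

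To obtain this moment bound, I would diagonalize the positive semidefinite operator $B := A^\dagger A = \sum_j \lambda_j \ket{j}\bra{j}$. Writing $\ket{\psi} = \sum_j c_j \ket{j}$ one has $\langle \psi|B|\psi\rangle = \sum_j |c_j|^2 \lambda_j$, which is a convex combination of the $\lambda_j$'s since $\sum_j |c_j|^2 = 1$. Applying Jensen's inequality to the convex function $x \mapsto x^{p/2}$ (valid for $p\ge 2$) yields the pointwise bound
\begin{equation}
    \langle \psi|A^\dagger A|\psi\rangle^{p/2} \;\le\; \sum_j |c_j|^2 \lambda_j^{p/2} \;=\; \langle \psi | (A^\dagger A)^{p/2}|\psi\rangle.
\end{equation}
This is the crucial step that converts a nonlinear functional of $\ket{\psi}$ into something linear in $\ket{\psi}\bra{\psi}$.

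The final step is to take the ensemble average and invoke the 1-design hypothesis $\BE_\psi \ket{\psi}\bra{\psi} = I/\tr[I]$:
\begin{equation}
    \BE_\psi \langle\psi|(A^\dagger A)^{p/2}|\psi\rangle \;=\; \tr\!\bigl[\BE_\psi\ket{\psi}\bra{\psi}\,(A^\dagger A)^{p/2}\bigr] \;=\; \frac{\tr[(A^\dagger A)^{p/2}]}{\tr[I]} \;=\; \normp{A}{\bar p}^p.
\end{equation}
Chaining the three displays above gives the claimed inequality.

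The main (and essentially only) subtlety is the direction of Jensen's inequality: the argument requires $x\mapsto x^{p/2}$ to be convex, i.e.\ $p \ge 2$. This is exactly the regime of interest for the paper, since the $p$-norms are being used to interpolate between the Frobenius ($p=2$) and operator ($p=\infty$) norms; for $p<2$ the naive inequality can fail and one would need a different route (e.g.\ concavity-based or Riesz--Thorin interpolation against the $p=2$ case). Everything else is routine, and no spatial locality of $A$ or structure of the underlying Hamiltonian enters the proof.
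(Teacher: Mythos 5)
Your proof is correct, and since the paper states this proposition as a fact imported from an external reference (\cite{chen2021trotter}) without supplying a local proof, there is nothing in this manuscript to compare against line by line. Your three-step argument — Markov's inequality applied to $\langle\psi|A^\dagger A|\psi\rangle^{p/2}$, pointwise Jensen on the spectral decomposition of $A^\dagger A$ to pass from $(\langle\psi|B|\psi\rangle)^{p/2}$ to $\langle\psi|B^{p/2}|\psi\rangle$, and then the $1$-design identity to evaluate the ensemble average as a normalized trace — is the natural derivation, and each step is airtight. Your caveat about convexity is the right one to flag: the Jensen step requires $p\ge 2$, which is consistent with how the proposition is actually invoked (the proof of Corollary \ref{cor:concentration_LC} explicitly restricts to $p\ge 2$ and adds $2$ in the exponent to account for that). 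One small observation worth making explicit: after Jensen, the quantity to be averaged is linear in $\ket{\psi}\bra{\psi}$, which is precisely why the $1$-design hypothesis (rather than anything stronger, such as a $2$-design or full Haar average) suffices; your proof makes this dependence transparent, which is a virtue.
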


And we obtain a p-norm estimate as follows.
\begin{thm} \label{thm:pnorm_LC}
Let $H(t)$ be a time-dependent Hamiltonian which has power-law exponent $\alpha>1$, as defined in Section \ref{sec:preLR}.  Then under the Heisenberg equation of motion generated by this $H(t)$, 
\begin{align}
      \lVert \mathbb{Q}_R A_0(t) \rVert_p \le c' \sqrt{p}\frac{\labs{t}}{\mathcal{R}(r)}
\end{align}for constant $0<c^\prime(\alpha)<\infty$ which only depends on $\alpha$, and
\begin{align} \label{eq:mathcalR}
    \mathcal{R}(r) = \begin{cases}
     r &\text{if } \alpha> 5/2\\
     r/\ln^{3/2}(r) &\text{if } \alpha= 5/2\\
     r^{\alpha-3/2} &\text{if } 5/2>\alpha> 3/2
    \end{cases}.
\end{align}
Alternatively, we may write that 
$
    \lVert \mathbb{Q}_R A_0(t) \rVert_p \ge \delta 
$
is only possible if \begin{equation}
    |t| \ge \delta\sqrt{p}  c' \mathcal{R}(r). \label{eq:main43}
\end{equation}
\end{thm}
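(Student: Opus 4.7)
The plan is to mirror the quantum-walk strategy behind Theorem \ref{thm4}, but with the Schatten $\bar{p}$-norm replacing the Frobenius norm. Let $g(t) := \lVert \mathbb{Q}_R A_0(t)\rVert_{\bar{p}}$; since $A_0$ is supported at site $0<R$, $g(0)=0$ and the triangle inequality for the Heisenberg equation yields
\begin{equation*}
g(t)\le \int_0^{|t|}\bigl\lVert \mathbb{Q}_R[H(s),A_0(s)]\bigr\rVert_{\bar{p}}\,ds.
\end{equation*}
The heart of the proof is then a uniform bound on the integrand.

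To bound $\lVert \mathbb{Q}_R[H(s),A_0(s)]\rVert_{\bar{p}}$ I would use a $\bar{p}$-norm generalization of Lemma \ref{lem:submulti}, namely $\lVert H_{q,k}O\rVert_{\bar{p}}\lesssim \sqrt{p-1}\,\lVert H_{q,k}\rVert_{(2)}\lVert O\rVert_{\bar{p}}\,(|\ln\lVert O\rVert_{\bar{p}}|+1)$, to be proved in Appendix \ref{app:klocal} by rerunning the H\"older plus uniform-smoothness argument of (\ref{eq:F_Hnorm})--(\ref{eq:F_3proplast2}) with the H\"older exponents balanced so that the nontrivial factor lands on the $\bar{p}$-norm; the $\sqrt{p-1}$ prefactor is the square root of the uniform-smoothness constant $C_q=q-1$ from Proposition \ref{prop:unif_subsystem} evaluated at $q\sim p$. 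This $\sqrt{p-1}$ is precisely the source of the $\sqrt{p}$ prefactor in (\ref{eq:main43}). Decomposing $H(s)$ by dyadic scales as in Figure \ref{fig:long_range_int} and using the one-dimensional geometric bound $\lVert H_{q,k}\rVert_{(2)}\lesssim (2^q)^{1-\alpha}$ from (\ref{eq:ball2}), I reproduce a scale-by-scale sum structurally identical to Lemma \ref{lem42}, now weighted by $\sqrt{p}$.

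The exponent shift from $\alpha-1$ in Theorem \ref{thm4} to $\alpha-3/2$ here, and the critical $\alpha$ from $2$ to $5/2$, stems from the absence of a Markov squaring step. In Theorem \ref{thm4} the inequality $\delta^2\le (A|\mathcal{F}|A)/R$ trades a factor of $R$ on the denominator against the squaring of $\delta$, which effectively halves the operative exponent. In the $\bar{p}$-norm setting the $\mathbb{Q}_i A$ are no longer mutually orthogonal and the Frobenius moment $(A|\mathcal{F}|A)=\sum_i i\lVert \mathbb{Q}_i A\rVert_F^2$ has no clean analog, so $g(t)$ must be bounded directly. The same dyadic sum $\sum_q(2^q)^{1-\alpha}\times(\text{geometric weight})$ then yields $\mathcal{R}(r)=r$ for $\alpha>5/2$, $\mathcal{R}(r)=r/\ln^{3/2}r$ at $\alpha=5/2$, and $\mathcal{R}(r)=r^{\alpha-3/2}$ for $3/2<\alpha<5/2$; (\ref{eq:main43}) follows by rearranging $g(t)\ge\delta$.

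The main obstacle is the bookkeeping of which scale-$q$ blocks $H_{q,k}$ actually feed weight into $\mathbb{Q}_R[H_{q,k},A(s)]$, in the absence of the Frobenius orthogonality that keeps the rate expression in Lemma \ref{lem42} clean. I would split $A(s)=\mathbb{Q}_{<R}A(s)+\mathbb{Q}_R A(s)$ and argue, for the dangerous first piece, that only $H_{q,k}$ with at least one endpoint at a site $\ge R$ can generate nonzero contributions to $\mathbb{Q}_R[H_{q,k},\mathbb{Q}_{<R}A(s)]$, so that the number of active blocks per scale is geometrically controlled by the dyadic structure of Figure \ref{fig:long_range_int}. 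Interpolating between $\lVert A(s)\rVert_\infty\le 1$ and $\lVert H_{q,k}\rVert_{(2)}$ via Propositions \ref{prop:holder} and \ref{prop:riesz} then closes each scale's contribution; though more cumbersome than the Frobenius case, this bookkeeping should yield the stated $\mathcal{R}(r)$.
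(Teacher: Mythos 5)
Your proposal diverges from the paper's proof at the decisive step, and I do not think it closes. The paper explicitly abandons any quantum-walk-style differential control of $\lVert \mathbb{Q}_x A_0(t)\rVert_{\bar p}$, stating ``we are unable to use the quantum walk bounds\ldots because $\lVert\mathbb{Q}_x |A_0(t))\rVert_{\bar p}$ does not obey a quantum walk equation,'' and instead re-runs the irreducible-path combinatorics of \cite{alpha_3_chenlucas}: Taylor-expand $\e^{\mathcal{L}t}$, classify long $q$-forward subsequences with carefully tuned thresholds $N_q$, use the inclusion--exclusion identity of Proposition~\ref{fact:gen_incl_excl} to re-sum, re-exponentiate all reducible $\mathcal{L}$'s via the Duhamel/interaction-picture Lemma~\ref{lem:resum}, and only then apply uniform smoothness to the irreducible steps (square-root accumulation over the far endpoint $j$, linear accumulation over the near endpoint $i$). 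The exponent $\alpha-3/2$ and the $1/\sqrt{p}$ factor emerge from the interplay of the $N_q$ combinatorics with these two asymmetric accumulation rules, not from a scale-by-scale ODE bound.

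The core gap in your outline is the integrand bound. You write $g(t)\le\int_0^{|t|}\lVert\mathbb{Q}_R[H(s),A_0(s)]\rVert_{\bar p}\,ds$, which is fine, but then you have no mechanism to make the integrand small: $A_0(s)$ is only known to satisfy $\lVert A_0(s)\rVert_\infty\le 1$, so a one-shot application of the $\bar p$-submultiplicativity merely produces $\lVert H_{\text{straddling}\,R}\rVert_{(2)}\lVert A_0(s)\rVert_{\bar p}$. That 2-norm is $\CO(1)$ (the nearest-neighbor couplings straddling site $R$ are unit-strength), so this gives a trivial $g(t)\lesssim t$ rather than anything decaying in $r$. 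To do better you must exploit the fact that $A_0(s)$ has not yet spread near $R$, which requires tracking the internal distribution of $A_0(s)$ over sites. In the Frobenius case this tracking is exactly the Markov/quantum-walk device $(A|\mathcal F|A)$, which you correctly observe has no $\bar p$-norm analog; but you then need to replace it with \emph{something}, and ``bound $g(t)$ directly'' plus ``the bookkeeping should yield the stated $\mathcal{R}(r)$'' leaves precisely that something unspecified. The paper's replacement is the irreducible-path counting. Relatedly, the recursive use of a $\bar p$-version of Lemma~\ref{lem:submulti} you invoke is explicitly ruled out in the paper's own discussion of equation~(\ref{eq:Qrpnormtry}): after a single commutator the intermediate operator no longer has $\infty$-norm $\le 1$, so the submultiplicativity input fails at the second application. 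Finally, the heuristic ``the same dyadic sum $\sum_q (2^q)^{1-\alpha}$ yields $\mathcal{R}(r)=r^{\alpha-3/2}$'' is not correct on its face --- that sum converges to an $r$-independent constant for $\alpha>1$ --- so the exponent shift cannot be read off from the dyadic structure alone without the $N_q$ thresholds.
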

\begin{proof}
The technical details of the proof are quite similar in spirit to the proof of \cite{alpha_3_chenlucas}, which itself uses the same decomposition of $H$ into different scales shown in Figure \ref{fig:long_range_int}.  The difference between the proof of this theorem and Theorem \ref{thm4} is that we are unable to use the quantum walk bounds to tightly control the growth of $ \mathbb{Q}_R |A_0(t))$, because $\lVert  \mathbb{Q}_x |A_0(t)) \rVert_{\bar p}$ does not obey a quantum walk equation.  We give the proof in Appendix \ref{app:proof_pnorm}.  
\end{proof}
We do not expect this bound to be tight -- clearly it is not for $p=2$, in which case Theorem \ref{thm4} is already stronger. Nevertheless, it already suffices to guarantee a meaningful concentration bound:

\begin{cor}\label{cor:concentration_LC}
For power-law models with exponent $2<\alpha<3$, the Lieb-Robinson light cone is rarely saturated: for $0<\epsilon<\infty$ and sufficiently large $r$,
\begin{align}
    \mathbb{P}\L(  \lnormp{ [A_0(t),B_r] \ket{\psi}}{\ell_2}\ge \frac{\epsilon t}{r^{\alpha-2}}\R) \le  \exp(2-\epsilon^2 Cr^{\beta}) \end{align}
for some constant $0<C(\alpha)<\infty$ which does not depend on $r$, and with $\beta = \min(1,6-2\alpha)-\delta$ for arbitrary $\delta>0$. This result holds even if $H(t)$ is time-dependent.
\end{cor}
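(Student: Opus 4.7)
The plan is to combine the $p$-norm concentration inequality (Proposition \ref{fact:pnorm_typical_state}) with the $p$-norm light cone (Theorem \ref{thm:pnorm_LC}) and optimize over $p$. First I would reduce the commutator norm to a norm of $\mathbb{Q}_R A_0(t)$. Because $B_r$ is supported at the single site $r$, it commutes with every operator trivial there, so $[A_0(t), B_r] = [\mathbb{P}_r A_0(t), B_r]$. Noncommutative H\"older (Proposition \ref{prop:holder}) yields
\begin{equation}
\lVert [A_0(t), B_r] \rVert_{\bar{p}} \le 2 \lVert B_r \rVert_\infty \lVert \mathbb{P}_r A_0(t) \rVert_{\bar{p}}.
\end{equation}
Taking $R = r$, the range of $\mathbb{P}_r$ is contained in the range of $\mathbb{Q}_R$, and since $\mathbb{P}_r$ is the difference of the identity and a completely positive trace-preserving depolarizing channel, it is a bounded linear operator on every Schatten $p$-norm with norm at most $2$. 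Combining with Theorem \ref{thm:pnorm_LC}, I would get $\lVert [A_0(t), B_r] \rVert_{\bar{p}} \le c'' \sqrt{p}\, |t|/\mathcal{R}(r)$ for some absolute constant $c''$.

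Next I apply Proposition \ref{fact:pnorm_typical_state} at threshold $a = \epsilon t/r^{\alpha - 2}$:
\begin{equation}
\mathbb{P}\!\left( \lVert [A_0(t), B_r] |\psi\rangle \rVert_{\ell_2} \ge \frac{\epsilon t}{r^{\alpha - 2}} \right) \le \left( \frac{c'' \sqrt{p}\, r^{\alpha - 2}}{\epsilon\, \mathcal{R}(r)} \right)^{p}.
\end{equation}
The key optimization is to choose $p$ so that the base equals a fixed constant below $1$: setting $\sqrt{p} = \epsilon \mathcal{R}(r)/(c'' e \, r^{\alpha - 2})$ makes the base equal $1/e$, so the right-hand side becomes $\exp(-p) = \exp\!\left(-\epsilon^2 \mathcal{R}(r)^2/[(c'' e)^2 r^{2(\alpha - 2)}]\right)$.

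Plugging in the piecewise formula (\ref{eq:mathcalR}) for $\mathcal{R}(r)$ case by case: for $2 < \alpha < 5/2$, $\mathcal{R}(r)^2/r^{2(\alpha - 2)} = r^{2\alpha - 3}/r^{2\alpha - 4} = r$, giving $\beta = 1$; for $5/2 < \alpha < 3$, $\mathcal{R}(r)^2/r^{2(\alpha - 2)} = r^2/r^{2\alpha - 4} = r^{6 - 2\alpha}$, giving $\beta = 6 - 2\alpha$. Thus $\beta = \min(1, 6 - 2\alpha)$ in both open subintervals; exactly at $\alpha = 5/2$ the additional $\ln^{3/2}(r)$ factor in $\mathcal{R}(r)$ contributes a $1/\ln^{3}(r)$ prefactor, which is absorbed into the ``$-\delta$'' slack in $\beta = \min(1, 6 - 2\alpha) - \delta$.

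Finally, the ``$+2$'' in the exponent of $\exp(2 - \epsilon^2 C r^\beta)$ is slack to cover the regime in which the optimal $p$ falls below the minimum value (typically $p = 2$) at which Proposition \ref{fact:pnorm_typical_state} is stated: when $\epsilon^2 C r^\beta \lesssim 2$, the stated upper bound is simply $\ge 1$ and the inequality is vacuous. The main obstacle is the clean handling of the logarithmic edge case $\alpha = 5/2$, which is why $\beta$ must be degraded by an arbitrary $\delta > 0$; away from that point, the argument is a direct interpolation between Schatten $p$-norm moment bounds and tail probabilities, with the substantive analytic content already carried by Theorem \ref{thm:pnorm_LC}.
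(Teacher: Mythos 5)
Your proposal is correct and follows essentially the same route as the paper: plug the $p$-norm light cone into the typical-state concentration bound, optimize $p$ so that the base becomes $1/\mathrm{e}$, and then evaluate $\mathcal{R}(r)^2 r^{4-2\alpha}$ piecewise. Your preliminary step reducing $\lVert[A_0(t),B_r]\rVert_{\bar p}$ to $\lVert\mathbb{Q}_R A_0(t)\rVert_{\bar p}$ (via $\mathbb{P}_r=\mathbb{P}_r\mathbb{Q}_R$ plus $p$-to-$p$ boundedness of $\mathbb{P}_r$) is a bit more explicit than the paper's treatment, which invokes Theorem~\ref{thm:pnorm_LC} directly on the commutator, but it is the same argument in substance.
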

Note that the infinitesimal parameter $\delta$ is only due to the logarithm $\ln(r)$ at $\alpha=5/2$, and should not distract the reader.
\begin{proof}
Plug the $p$-norm bound of Theorem~\ref{thm:pnorm_LC} into Proposition~\ref{fact:pnorm_typical_state} to obtain 
\begin{align}
    \mathbb{P}\L(  \lnormp{ [A_0(t),B_r] \ket{\psi}}{\ell_2}\ge \frac{\epsilon t}{r^{\alpha-2}} \R) &\le \left(\frac{r^{\alpha-2}}{\epsilon t} \cdot \sqrt{p}\cdot \frac{c't}{\mathcal{R}(r)}\right)^p = \left(\sqrt{p}\cdot \frac{c'r^{\alpha-2}}{\epsilon \mathcal{R}(r)}\right)^p
    \notag \\
    &\le \exp\left(2 -\epsilon^2\frac{\mathcal{R}(r)^2r^{4-2\alpha}}{\e^2c'^2}\right).
\end{align}
where in the last inequality we used that $(\sqrt{p} b)^p \le \exp(2-\mathrm{e}^{-2}b^{-2})$ (set $p=\min(2,(\mathrm{e}b)^{-2})$). 
The $+2$ in the exponential is added since to use our concentration bounds we must always take $p\ge 2$. We conclude the proof by noting that for any $2<\alpha<3$,  $r^{4-2\alpha}\mathcal{R}(r)^2>r^\beta$ if $r$ is sufficiently large.
\end{proof}

Remarkably, even though the form of the bound in Theorem \ref{thm:pnorm_LC} is not tight, we find an exponential bound on the number of states in which the Lieb-Robinson bounds might be saturated.  In the regime $2<\alpha<\frac{5}{2}$, this bound is optimal(up to constants in the exponent): the protocol of \cite{2020arXiv201002930T} shows that in a spin chain of qubits, it is possible to saturate the Lieb-Robinson light cone in at least one state.   What Corollary \ref{cor:concentration_LC} proves is that the state identified in \cite{2020arXiv201002930T} is in fact one of the \textit{exponentially} rare states for which the Lieb-Robinson light cone can be saturated!  This strongly suggests that the rapid single-bit state transfer protocol of \cite{2020arXiv201002930T}, which can transfer one qubit in time $t\sim r^{\alpha-2d}$, may require exquisite control of the background state;
this contrasts with the more robust (yet slower) protocol of \cite{hierarachy,yifan}, which uses long-range interactions to send qubits in a ``self-error-correcting" scheme.   It is an interesting open question to more quantitatively compare fast quantum error correction schemes to the optimal Frobenius and Lieb-Robinson light cones which have been developed over the past few years.

Lastly, we conjecture that the optimal bound on $p$-norms takes the form \begin{equation}
      \lVert [A_0(t),B_r] \rVert_{\bar p} \stackrel{?}{\lesssim} \frac{pt}{r^{\alpha-1}}.
\end{equation}
This is an educated guess: it extrapolates to the $p=2$ case, and that the concentration holds at $\e^{-\Omega(r)}$ though Corollary~\ref{cor:concentration_LC}. We do not currently know how and if our quantum walk based proof of Theorem \ref{thm4} can be generalized to a $p$-norm bound.
 
\section{Algebraically optimal operator growth protocol}\label{sec:protocol}
Having established the Frobenius light cone rigorously in $d=1$, and with a conjecture on how it generalizes to higher dimensions $d$, let us now describe a protocol which we claim (and will susbequently prove) achieves these speed limits (up to sub-algebraic corrections).   Our approach is loosely inspired by the optimally fast single-qubit state transfer protocol developed in \cite{2020arXiv201002930T}:  as in \cite{2020arXiv201002930T}, we will develop our protocol via ``recursive" intuition.

\subsection{Intuitive argument}
To begin, let us assume that we have a system with tunable and time-dependent power-law interactions of exponent $d<\alpha<d+1$ on the standard hypercubic lattice $\mathbb{Z}^d$.   We divide up this lattice into a partition of hypercubes at multiple scales $q=0,1,2,\ldots$.  For intuitive purposes, we can say that the scale $q=0$ corresponds to each lattice site being in its own cube; scale $q=1$ corresponds to a partition of the lattice into hypercubes of side length $m_1$ in all dimensions; scale $q=2$ corresponds to a partition into hypercubes of side length $m_1m_2$ in all dimensions, and so on.  (Note that the $q=2$ cubes contain $m_2$ $q=1$ cubes within them, etc.).   In the discussion that follows, we will for simplicity set $m_1=m_2=\cdots = m$.

At time $t=0$, we start with a Pauli matrix $X_0$ on a single site.  By the definitions above, that Pauli matrix occupies exactly one 0-cube, which we might as well call the origin of the lattice.  Our goal is to find a quantum mechanical protocol (i.e. a unitary matrix $U(t)$, which can be generated from a power-law Hamiltonian evolving for time $t$, possibly with time-dependent coefficients), such that $U(t)^\dagger X_0U(t)$ consists of Pauli strings of length $L^d$.  If we achieve this, then we know that this operator must have support on at least one site a distance $\sim L$ away from the origin where we started.  Our goal is to do this in a time $t\sim L^{\alpha-d}$ for $d<\alpha<d+1$.  

To motivate how we might achieve this task, suppose that we have a quantum protocol -- a unitary matrix $U_q$ -- which is capable of taking any single-site Pauli matrix (e.g. $X_0$) and evolving it into an operator $U_q^\dagger W U_q$ which is supported on an $\CO(1)$ fraction of sites in the $q$-cube $C_q$: \begin{equation}
    \sum_{x\in C_q} \left\lVert \mathbb{P}_x U_q^\dagger W U_q \right\rVert^2_{\mathrm{F}} \propto R_q^d, \label{eq:intuit}
\end{equation}
with $R_q^d$ being the number of sites in $C_q$, and \begin{equation}
    R_q= m R_{q-1}= m^q . \label{eq:Rqdef}
\end{equation}  
We can certainly do this in constant time when $q=1$ by simply using nearest neighbor interactions.  

Now, assuming that we found $U_q$, let us find a $U_{q+1}$ at the next scale. First, note that any $q$-cube $C_q$ lies entirely within a $(q+1)$-cube $C_{q+1}$.  How can we find a unitary $U_{q+1}$ such that (\ref{eq:intuit}) continues to hold at scale $q+1$?   One possible way to do this would be: \begin{equation}
    U_{q+1} = U_q V_{q+1}U_q,
\end{equation}
where $V_{q+1}$ is a unitary that takes an operator supported in a single $q$-cube, and evolves it to have support on only a single site in each of the other $m^d-1$ $q$-cubes in $C_{q+1}$.   The three-step process is sketched in Fig.~\ref{fig:intuitive_protocol}.
\begin{figure}[t]
    \centering
    \includegraphics[width=0.9\textwidth]{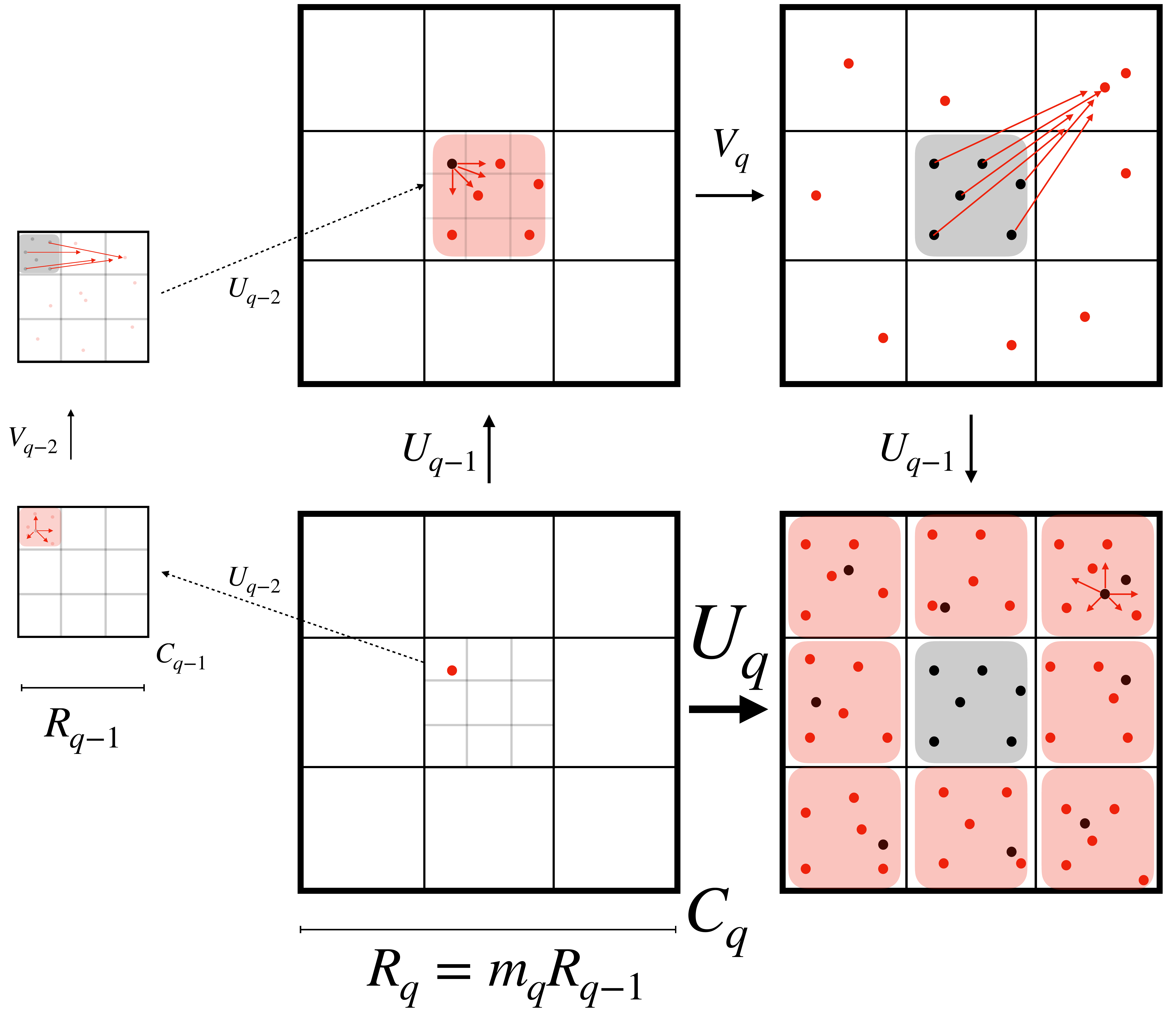}
    \caption[f]{The recursive step in the idealized protocol, constructing $U_q$ from $U_{q-1}$. The new sites are shaded lighter. (1)  grow a single Pauli $W$ to have support on $\sim |C_q|$ sites using the first $U_{q-1}$. (2) ``seed" \textit{a single} Pauli matrix (e.g. $Z_x$) on one site $x$ in each other $q$-cube $C^\prime_q\subset C_{q}$ using $V_{q}$. (3) run again the $U_{q-1}$ protocol on each remaining $q$-cube $C_q^\prime$ to ``bloom" the seeded single operator into a finite fraction of sites.
On the left is the description of $U_{q-1}$. 
    }
    \label{fig:intuitive_protocol}
\end{figure}

The key observation is that this ``recursive" construction is quite natural to implement with power-law interactions.  As the scale $q=0\rightarrow 1$ unitary $U_1$ is quite easy to implement (even using only nearest neighbor interactions), let us focus on how $V_{q+1}$ might effectively be implemented with power-law interactions.   Suppose that as an idealistic cartoon, we found that \begin{equation}
    U_q^\dagger X_0 U_q = \prod_{x\in C_q} X_x.
\end{equation}
Now let us consider \begin{equation}
    V_{q+1} = \exp\left[-\mathrm{i}\tau_{q+1} \sum_{x,y\in C_{q+1} } \frac{Z_x Z_y}{(dR_{q+1})^\alpha} \right]. \label{eq:Vintuit}
\end{equation}
Clearly, this Hamiltonian is compatible with (\ref{eq:Jab}).   When $\tau_{q+1}$ is small, we may estimate that the fraction of the operator with support outside of the original $C_q$ is given by the first order expansion \begin{align} \label{eq:56}
    \left\lVert \tau_{q+1} \left[\sum_{x,y\in C_{q+1} } \frac{Z_x Z_y}{(dR_{q+1})^\alpha}, \prod_{x\in C_q} X_x \right]\right\rVert_{\mathrm{F}}^2 &=  \left\lVert \sum_{x\in C_q, y\in C_{q+1}-C_q} \frac{2\tau_{q+1}Y_xZ_y}{(dR_{q+1})^\alpha} \prod_{z\in C_q-x} X_z \right\rVert_{\mathrm{F}}^2 \notag \\
    &= \frac{4\tau_{q+1}^2}{(dR_{q+1})^\alpha}R_q^d \left(R_{q+1}^d - R_q^d\right)
\end{align}
Taking \begin{equation}
    \tau_{q+1}\gtrsim R_{q+1}^{\alpha-d},
\end{equation}
we can estimate that our protocol may have substantial weight outside of $C_q$(and thus have some seeds in each new $C_{q}$.).  The runtime of the overall $U_{q+1}$ is, recursively, \begin{equation}
    t_{q+1} = \tau_{q+1}+ 2 t_q. \label{eq:tq1intuit}
\end{equation}
And indeed, \begin{equation}
    t_q \propto R_q^{\alpha-d} 
\end{equation}
does appear consistent with (\ref{eq:tq1intuit}).   This suggests that our recursive approach will be capable of growing an operator at the provably optimal rate in $d=1$, and conjectured optimal rate in $d>1$.

In order to make the argument above precise, we will study a random ensemble of protocols, which proceed up to scale $q\sim \ln m\sim \sqrt{\ln r}$.  This peculiar choice of $m$ happens to speed up some technical steps in our proof, while only leading to sub-algebraic factors in the overall runtime of the method.  We will show that \emph{at least half of the protocols in our random ensemble grow operators quickly}, (nearly) saturating Theorem \ref{thm4}.  We emphasize that randomness does not play some key physical role -- after all, if we just pick some random instance of our protocol, with at least 50\% chance that one non-random protocol can grow operators quickly!  Randomness is useful, however, as a way to simplify what is already a rather tedious proof which aims to keep track of the probability that a growing operator $O(t)$ has support on a given lattice site.   In a nutshell, we introduce random depolarizing unitaries (denoted as $D$ below) which will allow us to keep track of only whether a given site has a Pauli matrix on it or not, but ignore what kind of Pauli we have.   This makes our technical proof a bit more conceptually simple.  

Despite this randomness, we emphasize that constructive quantum interference is absolutely essential to the success of these random protocols.  Indeed, (\ref{eq:56}) holds as $\tau_{q+1}\rightarrow 0$; however, we also need to make sure that (\ref{eq:56}) holds at sufficiently late times such that the unitary $V_{q+1}$ has spread the operator $U_q^\dagger X_0 U_q$ into many sites in $C_{q+1}$.  The fact that $V_{q+1}$ is the product of many $\exp[-\mathrm{i}ZZ]$ unitaries, which are all mutually commuting, ensures that the long-range interactions are interfering \textit{coherently}.

\subsection{Explicit protocol}\label{sec:explicit}
Having gone through the intuitive argument, let us now present an explicit random Hamiltonian protocol that achieves this result -- at the expense of a logarithmically increased runtime (which we will ``justify" at the end of this section).   The protocol is built using the same $q$-cube structure outlined above.   In a nutshell, our $(q+1)$-scale protocol takes the form \begin{equation}
    U_{q+1} = U_q^\prime V_{q+1} U_q, \label{eq:induction}
\end{equation}
where $U_q$ and $U_q^\prime$ are \textit{random} $q$-scale unitary protocols drawn from an ensemble that we will state below, and \begin{equation}
    U'_{q} = U_{q}D,
\end{equation}
where $D$ is a depolarizing unitary drawn uniformly at random from a discrete ensemble (to be described below) and $V_{q+1}$ is a random unitary, built out of power-law interactions, that will mimic (\ref{eq:Vintuit}). At each step, we will choose the unitaries $U_q$, $D$, etc. uniformly at random from particular distributions.  Perhaps surprisingly, this randomness will actually \emph{help} us prove that the protocol works. The depolarizer $D$ will allow us to effectively ignore what operator is present on a given site, and just keep track of which sites our growing operator has support on.  Analogous to recent work on random unitary circuits \cite{Nahum:2017yvy,vonKeyserlingk:2017dyr}, this can make it much easier to keep track of the many-body operator growth. 

Ultimately, we will prove that the ensemble averaged Frobenius norm of a commutator is large; if the average is large, there must be one instance of a  unitary time evolution operator in the ensemble (which we do not need to explicitly point out), which achieves a large Frobenius norm.  Of course, it is even more interesting that \emph{typical} unitaries in the ensemble achieve a large Frobenius norm.

Let us now carefully define the $q$-cube partitions of the lattice $\mathbb{Z}^d$.   We define the $q$-cubes \begin{equation}
    C_q(k_1,\ldots,k_d) := \lbrace (n_1,\ldots, n_d) \in \mathbb{Z}^d : R_q k_i \le n_i < R_q (k_i+1) \rbrace.
\end{equation}
The set of all such $q$-cubes will be denoted with \begin{equation}
    \mathcal{B}_q := \lbrace C_q(\mathbf{k}) : \mathbf{k}\in\mathbb{Z}^d\rbrace.
\end{equation}
$R_q$ is defined via (\ref{eq:Rqdef}), where \begin{equation}
    m:= \left\lceil \mathrm{e}^{\sqrt{\ln(r)}}\right\rceil. \label{eq:mdef}
\end{equation}

The random depolarizing unitary $D$ is chosen as follows: \begin{equation}
    D := \bigotimes_{x\in \Lambda} D_x, \label{eq:Dx}
\end{equation}where $D_x$ are $2\times 2$ unitary matrix acting on qubit $x$, independent and identically distributed (iid) for each $x$.  Each $D_x$ is chosen uniformly at random (via the discrete Haar measure) from the group \begin{equation}
    G = \left\lbrace \pm 1, \frac{\pm 1 \pm \mathrm{i}X^a}{\sqrt{2}}, \pm \mathrm{i}X^a, \frac{\pm \mathrm{i}X^a \pm \mathrm{i}X^b}{\sqrt{2}}, \frac{\pm 1 \pm \mathrm{i}X \pm \mathrm{i}Y \pm \mathrm{i}Z}{2}\right\rbrace \label{eq:groupG}
\end{equation} 
In the above equation, $X^a$ and $X^b$ denote \emph{distinct} Pauli matrices ($X,Y,Z$).   Each $D_x$ can be generated using a single-site Hamiltonian of (operator) norm 1 in a time \begin{equation}
    t_D < \frac{\pi}{2}.
\end{equation}

The growth unitary is \begin{equation}
    V_{q} = \exp[-\mathrm{i}\tau_{q+1} H^{ZZ}_{q}],
\end{equation}
where the Hamiltonian \begin{equation}
    H^{ZZ}_{q} := \frac{1}{(dR_q)^\alpha} \sum_{C \in \mathcal{B}_q} \sum_{x,y \in C} J_{xy}Z_xZ_y, \label{eq:HZZq}
\end{equation}
where $J_{xy}$ are iid random variables uniformly distributed on the interval $[-1,1]$.  This clearly mimics what we intuitively introduced above; however, we will see that the randomness in the couplings is \emph{beneficial} in allowing us to neglect possible quantum interference phenomena (among growing operators) that might ruin our protocol.  The times $\tau_q$ will be chosen explicitly in (\ref{eq:tauqdef}) in Appendix \ref{app:5}, but note for now that it obeys \begin{equation}
    \tau_q < 120^q m^d R_q^{\alpha-d} . \label{eq:tauqbound}
\end{equation}

The protocol will stop at scale \begin{equation}
    q_* := \left\lceil \sqrt{\ln r} \right\rceil. \label{eq:qstardef}
\end{equation}
At this scale, at least half of the operator (as measured by the Frobenius norm) will have size $>r^{d-\epsilon}$ for any $\epsilon$ (Proposition \ref{thm5}).   In order to demonstrate that the Frobenius light cone is saturated (up to subalgebraic prefactors), we must calculate the total runtime of the protocol $U_q$.  Using the inductive identity (\ref{eq:induction}), along with (\ref{eq:tauqbound}), we see that if $t_q$ is the total runtime of $U_q$, and $t_D < \tau_q$ is the runtime of $D$,\footnote{This condition will always be assured at sufficiently large $r$.  We can shrink the prefactor of $H^{ZZ}_q$ to ensure this at small $r$.} then
\begin{align}
    t_q  &= 2t_{q-1} + t_D + \tau_q < 2t_{q-1}+2\tau_q = \sum_{q=1}^{q_*} \tau_{q_*-q}2^{q} < 2^{q_*+1}\tau_{q_*}  \notag \\
    & < 480\cdot (240\mathrm{e})^{\sqrt{\ln r}} \left(1+ \mathrm{e}^{\sqrt{\ln r}}\right)^{(\alpha-d)(1+\sqrt{\ln r})} < 480\cdot (240\mathrm{e}^{\alpha-d+1}\cdot 4^{\alpha-d})^{\sqrt{\ln r}} r^{\alpha-d}.
\end{align}

We thus conclude a \emph{lower bound} on the Frobenius light cone in any dimension.  To be more precise, given the decomposition of an operator defined in (\ref{eq:AS}), let us define the projector \begin{equation}
    \mathbb{P}_{\ge L} |A) := \sum_{S\subset \Lambda : \mathrm{diam}(S\cup \lbrace 0 \rbrace) \ge L} |A_S).
\end{equation}
Here $\mathrm{diam}(S)$ denotes the maximal distance between any two elements in the set $S$.  The discussion above immediately implies the following theorem:
\begin{thm}\label{thm51}
Let $X_0$ be the Pauli $X$-matrix supported at the origin $(0,\ldots,0) \in \mathbb{Z}^d$.  For any $\epsilon>0$, define $r(L) := L^{1+\epsilon/2}$.  Then there exists a sufficiently large $L$, such that a power-law Hamiltonian protocol drawn from the distribution (\ref{eq:induction}) achieves \begin{equation}
    \lVert \mathbb{P}_{\ge L} |U^\dagger_{q_*} X_0 U_{q_*}) \rVert_{\mathrm{F}} \ge \frac{1}{2}, \label{eq:bigpropeqn}
\end{equation}
with probability $\ge 1/2$. Moreover, the shape of the Frobenius light cone is bounded by \begin{align}
    L(t) \ge K_\epsilon t^{\frac{1}{\alpha-d} - \epsilon} 
\end{align}
for some constant $0<K_\epsilon<\infty$. The asymptotic bound of Theorem \ref{thm4} cannot be improved by an algebraic factor.
\end{thm}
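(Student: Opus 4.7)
The runtime half of the theorem was essentially dispatched in the paragraph preceding the statement: the recursion $t_q\le 2t_{q-1}+t_D+\tau_q$, combined with $\tau_q<120^qm^dR_q^{\alpha-d}$ and the choices $m=\lceil e^{\sqrt{\ln r}}\rceil$, $q_*=\lceil\sqrt{\ln r}\rceil$, gives $t_{q_*}\le e^{O(\sqrt{\ln r})}r^{\alpha-d}$. Inserting $r=L^{1+\epsilon/2}$ and absorbing the sub-algebraic prefactor into a slightly smaller algebraic exponent yields the advertised $L(t)\ge K_\epsilon t^{1/(\alpha-d)-\epsilon}$. So the substantive task is \eqref{eq:bigpropeqn}. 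Since $R_{q_*}=m^{q_*}\ge r\ge L$ and the origin lies at the corner of $C_{q_*}$, it is enough to prove that a constant fraction of the Frobenius weight of $U_{q_*}^\dagger X_0 U_{q_*}$ lives on Pauli strings whose supports reach the far end of $C_{q_*}$; Markov's inequality on the deficit weight then promotes this ensemble-average statement to the probability-$1/2$ claim.

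I would argue by induction on $q\le q_*$, with the inductive hypothesis that the ensemble-averaged Frobenius weight of $U_q^\dagger X_0 U_q$ on \emph{good} support sets -- those intersecting a constant fraction of the $(q{-}1)$-sub-cubes of the scale-$q$ cube $C_q$ containing the origin -- is bounded below by a constant $c_q$ that degrades only mildly between levels. The passage $q\to q+1$ mirrors the three-step cartoon of Figure~\ref{fig:intuitive_protocol}: the first $U_q$ grows the operator inside a single $q$-sub-cube (by the inductive hypothesis); the composition $V_{q+1}\circ D$ uses the depolarizer together with the power-law couplings to seed a single-site Pauli in each of the remaining $m^d-1$ sub-cubes of $C_{q+1}$; and the independent second application of $U_q$ blooms each such seed into a filled-in sub-cube, again by the inductive hypothesis. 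Because $q_* = O(\sqrt{\ln r})$ while the branching factor $m^d=e^{d\sqrt{\ln r}}$ is much larger, any constant multiplicative loss per step accumulates only to a sub-algebraic factor in the final $c_{q_*}$.

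The analytic heart is the seeding analysis inside one inductive step. The depolarizer $D=\bigotimes_x D_x$, being a tensor product of independent Haar draws from the Clifford subgroup \eqref{eq:groupG}, Pauli-twirls the operator: averaging over $D$, every non-identity Pauli on each site is replaced by an independent uniform element of $\{X,Y,Z\}$, while the Frobenius-weight distribution over support sets -- the probabilities $p_S$ of \eqref{eq:pS} -- is preserved. This reduces the bookkeeping from tracking Pauli strings to tracking only their supports. Since all $Z_xZ_y$ inside a $(q{+}1)$-cube mutually commute, the conjugation $V_{q+1}^\dagger(\cdot)V_{q+1}$ admits an explicit product expansion: a Pauli string supported on $S\subset C_q$ becomes a superposition over subsets $T\subset C_{q+1}\setminus S$ of Pauli strings on $S\cup T$ with Frobenius weight $\prod_{y\in T}\sin^2\theta_{xy}\prod_{y\notin T}\cos^2\theta_{xy}$, where $\theta_{xy}=2\tau_{q+1}J_{xy}/(dR_{q+1})^\alpha$, and only those $x$'s carrying a Pauli anticommuting with $Z_x$ participate (a $2/3$ fraction, after the twirl). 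With $\tau_{q+1}\sim 120^{q+1}m^dR_{q+1}^{\alpha-d}$, one finds $\mathbb{E}_{J_{xy}}[\sin^2\theta_{xy}]\sim R_{q+1}^{-d}$, so the expected number of seeds that the source set $S$ deposits in any given sub-cube is $\Omega(1)$; a first-plus-second-moment estimate then guarantees that a constant fraction of sub-cubes are seeded with constant probability.

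The main expected obstacle is controlling quantum interference between distinct operator-growth paths, which could \emph{a priori} invalidate the additive Frobenius-weight arithmetic underlying the picture above. The randomness in both $J_{xy}$ and $D$ is precisely what defuses this: off-diagonal cross-terms between distinct Pauli trajectories cancel upon ensemble averaging, and the Frobenius weight accumulated in each sub-cube becomes a sum of approximately independent, non-negative contributions amenable to standard concentration. A subsidiary bookkeeping task is to verify that the constant-factor losses incurred at each of the $q_*$ steps -- from the $2/3$ twirl factor, the Paley--Zygmund constant, and the recursive loss in $c_q$ -- compound to no worse than $e^{-O(\sqrt{\ln r})}$. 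Combining the inductive conclusion at $q=q_*$ with Markov's inequality and the runtime bound then yields \eqref{eq:bigpropeqn} and hence the theorem; in particular, the scaling of Theorem~\ref{thm4} cannot be improved by an algebraic factor.
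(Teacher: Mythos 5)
Your proposal correctly reconstructs the proof's architecture: the super-density-matrix formalism, the Pauli twirl driven by $D$ and its reduction of the bookkeeping to classical support sets, the commuting-$ZZ$ product expansion for the seeding step, and the recursive ``grow--seed--bloom'' structure mirroring Figure~\ref{fig:intuitive_protocol}. These are precisely Steps 1--5 of Appendix~\ref{app:5}, and you also correctly dispatch the runtime half.

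There is, however, a genuine gap in the probability arithmetic. You use a Paley--Zygmund / first-plus-second-moment estimate for the seeding step, which yields only a \emph{constant} per-step success probability, and you then observe that such constant losses compound over $q_*\sim\sqrt{\ln r}$ steps to $e^{-O(\sqrt{\ln r})}$. But that quantity is sub-algebraic \emph{and vanishing}; if the ensemble-averaged Frobenius weight is only $e^{-O(\sqrt{\ln r})}$, Markov's inequality cannot produce the stated conclusion that the norm exceeds $1/2$ with probability $\ge 1/2$. The paper's inductive hypothesis (\ref{eq:induchyp}) deliberately separates two quantities that your $c_q$ conflates: a site-occupation \emph{fraction} $\lambda_{1,q}$, which indeed degrades geometrically (this is where your sub-algebraic loss $\sim 120^{-q_*}$ belongs and is harmless, because $\lambda_{1,q_*}R_{q_*}^d$ still beats $L^d$ after the $r=L^{1+\epsilon/2}$ reparametrization), and a success \emph{probability} $\eta_{1,q}$, which must stay essentially at $1$. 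Keeping $\eta_{1,q}$ near $1$ requires per-step failure probability not merely $O(1)$ but $\exp(-\Omega(m^d))$; this is exactly what Chernoff bounds (Proposition~\ref{fact:chernoff}) deliver, both in the seeding step (Lemma~\ref{lem:ZZ-step}) and in the post-twirl $\nu$-versus-$\mathcal{Z}$ split (\ref{eq:ls2bound}). Since $m^d=e^{d\sqrt{\ln r}}\gg q_*$, the product $\prod_q(1-e^{-\Omega(m^d)})$ stays close to $1$ over all scales, which is what makes the $\ge\frac{1}{2}$ probability claim (and, via reverse Markov on the deficit, the $\ge\frac{1}{2}$ norm claim) go through. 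Upgrading your second-moment estimates to Chernoff-type exponential tails would close the gap; as written, your argument would establish the light-cone \emph{scaling} up to sub-algebraic factors but not the constants asserted in (\ref{eq:bigpropeqn}).
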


The last thing we need to do is to prove that our protocol in fact does grow a finite fraction of an initial small operator to be large.  This result is captured by the following key technical proposition: 

\begin{prop}\label{thm5}
For sufficiently large $L$, there exists a Hamiltonian in the random ensemble of Section \ref{sec:explicit} in which (\ref{eq:bigpropeqn}) holds.
\end{prop}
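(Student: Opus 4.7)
The plan is to prove the slightly stronger probabilistic statement: the ensemble average $\mathbb{E}[\lVert \mathbb{P}_{\ge L}|U_{q_*}^\dagger X_0 U_{q_*})\rVert_{\mathrm{F}}^2]$ is bounded below by a constant. By the probabilistic method, this immediately exhibits a single Hamiltonian in the ensemble achieving (\ref{eq:bigpropeqn}). The natural object to track inductively is the ensemble-averaged weight distribution $w_q(x) := \mathbb{E}\bigl[\lVert \mathbb{P}_x |U_q^\dagger X_0 U_q)\rVert_{\mathrm{F}}^2\bigr]$. The inductive claim would be: after applying $U_q$, the surviving weight $\sum_x w_q(x) \ge c_q$ is $\Theta(1)$ and is spread approximately uniformly over $\Theta(R_q^d)$ sites confined to the unique $q$-cube containing the origin. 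Assuming this claim up to $q=q_*$, since $R_{q_*}^d \gtrsim r^d$ while the ball of radius $L = r^{1/(1+\epsilon/2)}$ around the origin contains only $O(L^d)$ sites, an overwhelming fraction of the weight must sit at distance $\ge L$, so the Frobenius norm in (\ref{eq:bigpropeqn}) is bounded below by $1/2$ for sufficiently large $r$.

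For the base case at $q=1$, a constant-depth nearest-neighbor protocol (independent of $r$) straightforwardly evolves $X_0$ into an operator supported on a constant fraction of sites in a single $m$-cube. For the inductive step, I would decompose $U_{q+1} = U_q' V_{q+1} U_q$ and analyze the three phases in order. Phase one is the inductive hypothesis: after $U_q$, weight is spread through the $q$-cube $C_q \ni 0$. Phase two is the depolarizer $D$ inside $U_q' = U_q D$: its role is bookkeeping, not growth. Averaging each single-site factor $D_x$ uniformly over the group $G$ in (\ref{eq:groupG}) is designed so that the distinct Paulis $X,Y,Z$ get mapped into each other with equal probability; this kills any cross terms of the form $\mathbb{E}[X_x^a D^\dagger \cdots D X_x^b]$ for $a \neq b$, reducing the problem to tracking which sites carry \emph{some} Pauli rather than which Pauli they carry. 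Phase three is the growth $V_{q+1} = \exp[-\mathrm{i}\tau_{q+1} H_q^{ZZ}]$: following the cartoon (\ref{eq:56}), each pair of sites $(x \in C_q, y \in C_q' \subset C_{q+1})$ with $C_q' \ne C_q$ contributes $\sim \tau_{q+1}^2/(dR_{q+1})^{2\alpha}$ to the new weight at $y$, summed over the $\sim R_q^d$ occupied sources. With $\tau_q$ chosen as in (\ref{eq:tauqdef}), this reproduces a constant seeded weight per sibling $q$-cube $C_q'$. Finally, applying the outer $U_q'$ blooms each single-site seed inside $C_q'$ into a fully populated $q$-cube via the inductive hypothesis, closing the induction.

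The main obstacle is controlling phase three quantitatively, since $\tau_{q+1}$ is tuned so that the first-order term in the expansion of $V_{q+1}$ is $\Theta(1)$ rather than infinitesimal. Two features of the random ensemble are essential here. First, the couplings $J_{xy}$ in $H_q^{ZZ}$ are iid mean-zero: when one expands $V_{q+1}$ to higher orders, the relevant ensemble expectation of any odd power of $J_{xy}$ vanishes, and distinct products $J_{x_1y_1}\cdots J_{x_ky_k}$ survive only when all indices pair off, which forces those higher-order contributions to either (i) redistribute weight among sites already counted in the leading-order estimate, or (ii) create high-order products whose total contribution is suppressed by the smallness of $\tau_{q+1}/R_{q+1}^\alpha$. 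Second, the depolarizer $D$ ensures that whatever multi-site operator $U_q^\dagger X_0 U_q$ happens to be at the start of phase three, its statistical behavior under $[V_{q+1}, \cdot]$ is equivalent to that of a product of independent random single-site Paulis on the occupied sites — this is what permits the seed produced in each new $C_q'$ to play the role of a bona fide single-site Pauli at the start of the next round of $U_q'$. The tedious but routine task is keeping track of the geometric series: each inductive step loses at most a constant factor in $c_q$, and because $q_* = O(\sqrt{\ln r})$, the product $\prod_{q=1}^{q_*} c_q$ remains bounded below by a constant, finishing the argument. This entire inductive bookkeeping is deferred to Appendix \ref{app:5}.
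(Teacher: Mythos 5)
Your architecture matches the paper's: probabilistic method, scale-by-scale induction, depolarizer to reduce to a classical occupancy problem, and the $V_q$ step as the operator-growth engine. But there are three places where the proposal either hand-waves past the real work or makes a claim that is not actually what the paper proves.

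First, your inductive hypothesis is too strong. You claim the weight is spread ``approximately uniformly over $\Theta(R_q^d)$ sites''; the paper's inductive hypothesis (\ref{eq:induchyp}) only guarantees occupancy of $\ge \lambda_{1,q}R_q^d$ sites with $\lambda_{1,q}\approx (1/120)^q$, so the \emph{fraction} of occupied sites decays geometrically in $q$. What saves the day is that the absolute count $(m^d/120)^{q}$ still grows super-exponentially because $m^d\gg 120$, which (via the choice $m\sim\mathrm{e}^{\sqrt{\ln r}}$, $q_*\sim\sqrt{\ln r}$) eventually beats $L^d$. If you set up the induction with a constant fraction, you would be trying to prove something false; the recursion (\ref{eq:lambdachoice}) forces a loss at each step.

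Second, and most significantly, your plan to control the $V_{q+1}$ step by a perturbative expansion in $\tau_{q+1}$ plus pairing of the mean-zero $J_{xy}$'s is not how it goes, and I don't think it would close cleanly. The per-pair angle $\theta_{ij}=J_{ij}\tau_q R_q^{-\alpha}$ is small, but there are $\sim R_q^{2d}$ pairs, so the total effect is tuned to be $\Theta(1)$; a naive order-by-order expansion of $\exp[-\mathrm{i}\tau_{q+1}H_q^{ZZ}]$ is a combinatorial mess where all orders contribute comparably. The paper's actual trick is that all $Z_iZ_j$ terms commute, so $\mathcal{V}_q$ factorizes exactly into a product of commuting single-pair channels (eq.~(\ref{eq:Vq55})), and on the $\nu/\mathcal{Z}/\mathcal{I}$ basis this gives an exact Markovian $2\times 2$ transition matrix per site (eqs.~(\ref{eq:p155})--(\ref{eq:Vqrho})) — no perturbation theory at all. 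The randomness of $J_{xy}$ enters only to replace $\cos^2(2\theta_{ij})$ by its average, which gives the clean formula involving $F(x)=\sin x/x$; it is not there to kill higher-order cross terms by pairing. You mention the commutativity only as an intuitive aside, but it is the mechanism that makes the step tractable.

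Third, your closing sentence — ``each inductive step loses at most a constant factor in $c_q$, and because $q_*=O(\sqrt{\ln r})$, the product $\prod c_q$ remains bounded below by a constant'' — is internally inconsistent: if the per-step success probability were bounded away from $1$ by a fixed constant, then $\prod_{q=1}^{q_*}c_q\to 0$ as $r\to\infty$. The paper's argument needs the per-step failure probability to be $\exp(-\Omega(m^d))$, which is $o(1/q_*)$ because $m^d = \mathrm{e}^{d\sqrt{\ln r}}$ grows much faster than $q_*\sim\sqrt{\ln r}$. You gesture at this (``suppressed by $\mathrm{e}^{-O(m)}$'') but then state the conclusion in a form that would not follow. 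This is fixable, but as written the quantitative bookkeeping — which is the bulk of the actual proof — is left unaddressed.
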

\begin{proof}
The proof of this proposition, of course, corresponds to the overwhelming majority of the proof of Theorem \ref{thm51}.   As it is rather technically involved, let us outline the key steps in the proof.  (\emph{1}) We will first develop a ``super-operator density matrix" perspective for bounding the Frobenius light cone.   (\emph{2}) This notation will prove highly useful, since we will show that the ``super-depolarizing channel" (conjugation by $D$) destroys all (unwanted) quantum coherence, and leaves us with an effectively classical problem to analyze.  (\emph{3}) We will then describe the inductive hypothesis required to achieve (\ref{eq:bigpropeqn}), and reduce the quantum mechanical problem of bounding operator growth to the bounds on a classical stochastic process.  (\emph{4}) We will then show that (in the language of the effective stochastic process), with very high probability the $V_q$ unitary ``seeds" enough Pauli matrices in new cubes.  (\emph{5}) Analogously, we will show that with very high probability, these seeds in turn grow into large Pauli strings upon applying another $U_{q-1}$.   (\emph{6}) We will show that at every possible step of the protocol, the collective failure probability is small and decreases sufficiently fast that there is a finite success probability to grow a large operator.  Upon converting back to the quantum mechanical language, that will imply (\ref{eq:bigpropeqn}).

The technical implementation of this proof is in Appendix \ref{app:5}.
\end{proof}

 The fact that \emph{typical} protocols actually are effective at growing operators at the Frobenius light cone suggests a certain robustness of this notion of light cone.  At the same time, for chaotic systems, the light cone is linear above $\alpha>d+\frac{1}{2}$ \cite{chen2021concentration, zhou}.  One way to reconcile these facts is to note that the chaotic Hamiltonians do not choose $ZZ$ Hamiltonians to implement operator growth in $V_q$.  By only including $ZZ$ terms in our long-range Hamiltonian, we ensure that there is always constructive interference \emph{in the $V_q$ step} at which the operator actually grows.\footnote{However, our $\mathcal{V}_q$ is incoherent between different interaction terms due to the random coefficients. This does not slow down the protocol because different Paulis always add up incoherently (i.e. as sum of squares) in the Frobenius norm, regardless of their phases.}  In a Brownian circuit \cite{chen2021concentration, zhou}, in contrast, the operator is re-chosen randomly at each step in \emph{time}: this leads to far greater incoherence which effectively doubles the power law of the interactions from $\alpha \rightarrow 2\alpha$ (as one can only incoherently add squares).\footnote{These doubled interactions then obey a standard Lieb-Robinson-like light cone, which is linear when $2\alpha>2d+1$, or $\alpha>d+\frac{1}{2}$.}
 

Remarkably, our protocol \emph{also} saturates both the Frobenius light cone of this paper (proven in $d=1$ and conjectured for $d>1$), and a Frobenius light cone proven for random Hamiltonians in $d=1$ (Theorem 7, \cite{chen2021concentration}). In other words, the Frobenius light cones are essentially the same, whether one fine tunes the Hamiltonian or just draws one randomly from an ensemble. We anticipate these conclusions generalize to higher dimensions, though a formal proof is not known.

\section{Outlook}
We have shown constraints on the dynamics of growing operators, measured by the Frobenius norm. In particular, we have proved that in one dimensional spin chains with long-range interactions, it is possible for the Frobenius light cone to be \emph{exponentially slower} than the Lieb-Robinson light cone ($\alpha =2$). Such a result is based on the key insight that Frobenius norm becomes approximately submultiplicative, proven by combining standard and new functional analysis tools (uniform smoothness) with the quantum walk formalism.   

Moreover, we demonstrated that our Frobenius light cone in one dimension is essentially optimal (up to subalgebraic corrections). Our protocol features the first comprehensive analysis of an explicit random Hamiltonian; in contrast, existing results relied on Brownian Hamiltonian dynamics \cite{zhou}. Our usage of super-density operator and super-channel may find further applications in studying operator growth and Frobenius light cones in other systems.

In the near future, we hope to prove our conjecture that the Frobenius light cone in higher dimensional models with long-range interactions looks (schematically like) $t \gtrsim r^{\alpha-d}$ (for $d<\alpha<d+1$).  Beyond that obvious generalization, we anticipate that our novel methods will find use in a broad variety of other challenging problems, such as bounding fast scrambling and chaos in quantum simulators, including trapped ion crystals and cavity quantum electrodynamics \cite{Yin:2020pjd,zehanli,belyansky}; this work may help to constrain when it is possible (or not) to mimic quantum gravity in an experiment \cite{Chew_2017,Chen_2018,Marino_2019,Lewis_Swan_2019,Alavirad_2019,Bentsen_2019prl}.  A more practical  possible application of the Frobenius light cone may be to constrain the generation of (volume-law) entanglement.  Lastly, we hope to develop a more general toolkit (perhaps based on the quantum walk methods) to control Frobenius light cones in arbitrary many-body models.

\section*{Acknowledgements}
We thank Minh Tran and Alexey Gorshkov for helpful discussions and for collaboration on related work.  AL was partially supported by a Research Fellowship from the Alfred P. Sloan Foundation, and by the Air Force Office of Scientific Research under Grant FA9550-21-1-0195.

\begin{appendix}

\section{Extension of Lemma \ref{lem:submulti} to $p$-norms and $k$-local Hamiltonians}\label{app:klocal}

In this appendix, we show that the approximate form of submultiplicativity derived in Section \ref{sec:submulti} extends to Schatten $p$-norms with $p>2$, using the following version of \textit{uniform smoothness}. We include a minimal review, and e.g.~\cite{HNTR20:Matrix-Product,chen2021trotter} for further discussions. \footnote{This particular form of uniform smoothness was perceived when this work and another work~\cite{chen2021trotter} was developing. We include the same proof at both papers.}
\begin{prop}[Uniform smoothness for subsystem]\label{prop:unif_subsystem2}
Consider matrices $X, Y$ of the same dimensions that satisfy
$\tr_i(Y) = 0; X= X_j\otimes I_i$. For $p \ge 2$, 
\begin{equation}
\lVert X + Y\rVert_{p}^2\le \lVert X \rVert_{p}^2  + (p-1)\lVert Y\rVert_{p}^2 .
\end{equation}
\end{prop}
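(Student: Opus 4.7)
The plan is to recognize this as the two-step noncommutative 2-uniform smoothness of the Schatten $p$-norm associated with the subsystem conditional expectation, and to reduce it to the classical Ball--Carlen--Lieb (BCL) parallelogram inequality
\[
\lVert A + B\rVert_p^2 + \lVert A - B\rVert_p^2 \le 2\lVert A\rVert_p^2 + 2(p-1)\lVert B\rVert_p^2, \qquad p\ge 2.
\]
First I would reformulate the hypotheses in conditional-expectation language. Let $\mathcal{E}_i(\cdot) := d_i^{-1} I_i \otimes \tr_i(\cdot)$ be the trace-preserving conditional expectation onto the subalgebra of operators trivial on subsystem $i$. The assumptions $X = X_j \otimes I_i$ and $\tr_i(Y) = 0$ become $\mathcal{E}_i(X) = X$ and $\mathcal{E}_i(Y) = 0$, so $(X,\, X + Y)$ is a two-step noncommutative martingale with difference $Y$. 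At the endpoint $p=2$, the Pythagorean identity follows from $\tr(X^\dagger Y) = \tr(X^\dagger \mathcal{E}_i Y) = 0$ and matches the claimed constant $p-1 = 1$, so it suffices to handle $p > 2$.

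For $p > 2$, I would refine $Y$ into a dyadic sum $Y = \sum_{k=1}^{2^n} Y_k$ with $Y_k = 2^{-n} Y$, and then iteratively apply BCL along the binary tree of partial sums. The partial-trace hypothesis is used to realize $Y$ as the endpoint of a mean-zero random-sign summation: averaging conjugation by a unitary $1$-design supported on subsystem $i$ leaves $X$ invariant (because $X$ acts as the identity on $i$) but twirls $Y$ to $\mathcal{E}_i(Y) = 0$, which is exactly what is needed to apply BCL with the ``plus/minus'' directions at each dyadic step. Passing to the limit $n\to\infty$ in the usual Pisier-style Khintchine argument collapses the telescoping bound to $\lVert X\rVert_p^2 + (p-1)\lVert Y\rVert_p^2$, matching the advertised sharp constant. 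An equivalent shorter route would be to quote the noncommutative Burkholder/Rosenthal inequality of Junge--Xu applied to this trivial two-step martingale.

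The main obstacle is the gap between BCL, which only controls the \emph{symmetric} combination $\lVert X+Y\rVert_p^2 + \lVert X-Y\rVert_p^2$, and the one-sided quantity $\lVert X+Y\rVert_p^2$ that we actually want to bound; there is in general no single unitary that fixes $X$ while flipping $Y \mapsto -Y$, so BCL cannot be symmetrized pointwise. The subsystem hypothesis $\tr_i Y = 0$ is precisely what bridges this gap: it supplies the ``orthogonality'' $\mathcal{E}_i Y = 0$ in an averaged sense, which after iteration upgrades the two-sided BCL bound into the one-sided smoothness bound with the sharp constant $p-1$. Verifying that this constant is indeed sharp would follow from the standard scalar test case $X = \mathrm{diag}(1,0)$, $Y$ a traceless rank-one perturbation on an ancillary qubit.
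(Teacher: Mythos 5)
Your plan correctly identifies the crux of the problem — the gap between the two-sided Ball--Carlen--Lieb (BCL) parallelogram bound on $\lVert X+Y\rVert_p^2 + \lVert X-Y\rVert_p^2$ and the one-sided estimate on $\lVert X+Y\rVert_p^2$ alone — but it does not actually close that gap. The dyadic refinement $Y_k = 2^{-n}Y$ and ``iterating BCL along a binary tree'' does not help: at every node of the tree one still produces a $\lVert \cdot - Y_k\rVert_p$ term with no way to discard it, and since all the $Y_k$ are proportional to the same $Y$, the refinement does not generate any new cancellation or martingale structure beyond the original two steps. The twirling heuristic (``averaging conjugation by a $1$-design on subsystem $i$ fixes $X$ and annihilates $Y$'') is in fact the right idea, but you stop short of extracting the concrete lemma that it yields. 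What the paper proves, and what your argument is missing, is a monotonicity lemma: $\tr_i(Y)=0$ and $X = X_j\otimes I_i$ imply $\lVert X\rVert_p \le \lVert X \pm Y\rVert_p$. The paper obtains this via the variational duality $\lVert X\rVert_p = \sup_{\lVert B\rVert_q\le 1}\tr(X^\dagger B)$, restricting to $B$ of the form $B_{-i}\otimes I_i/\lVert I_i\rVert_q$ (which is the maximizer for $X$ and makes $\tr(Y^\dagger B)=0$). Your twirling observation gives the same lemma by a slightly different route: unitary invariance of $\lVert\cdot\rVert_p$ under $U\in\mathcal U(i)$ gives $\lVert X+Y\rVert_p^2 = \mathbb E_U\lVert X + U Y U^\dagger\rVert_p^2 \ge \lVert X + \mathbb E_U[U Y U^\dagger]\rVert_p^2 = \lVert X\rVert_p^2$ by convexity of the squared norm — but you never take this step. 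Once the monotonicity lemma is in hand, the paper finishes in three lines: $\tfrac12(\lVert X+Y\rVert_p^2 + \lVert X\rVert_p^2) \le \tfrac12(\lVert X+Y\rVert_p^2 + \lVert X-Y\rVert_p^2) \le [\tfrac12(\lVert X+Y\rVert_p^p + \lVert X-Y\rVert_p^p)]^{2/p} \le \lVert X\rVert_p^2 + C_p\lVert Y\rVert_p^2$, and rearranging gives the inequality (with constant $2C_p$, then sharpened to $C_p=p-1$ via a more involved argument they cite). Finally, your alternative of quoting noncommutative Burkholder/Rosenthal (Junge--Xu) for a two-step martingale is a legitimate but far heavier route, and one should verify it actually produces the sharp constant $p-1$ rather than an absolute constant; as written this remains a pointer rather than a proof.
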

The proof of this proposition, which adapts from proof in \cite[Prop~4.3]{HNTR20:Matrix-Product}, is delayed slightly.  We first recall the following identity: 
\begin{prop}[Uniform smoothness for Schatten Classes{\cite[Fact~4.1]{HNTR20:Matrix-Product}}]\label{fact:unif_schatten}
\begin{align}
    \left[\frac{1}{2}(\normp{X+Y}{p}^p+\normp{X-Y}{p}^p)\right]^{2/p} \le \normp{X}{p}^2+ C_p \normp{Y}{p}^2
\end{align}
\end{prop}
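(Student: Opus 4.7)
The plan is to establish the Ball--Carlen--Lieb $2$-uniform smoothness inequality with optimal constant $C_p = p-1$ by a two-step reduction: first a scalar two-point inequality, then a noncommutative lifting. By $2$-homogeneity in $(X,Y)$, no normalization is lost in passing between scalar and matrix statements.

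First, I would prove the scalar version: for real $a,b$ and $p\ge 2$,
\begin{equation}
\left(\frac{|a+b|^p+|a-b|^p}{2}\right)^{2/p}\le a^2+(p-1)b^2. \label{eq:twopt}
\end{equation}
By homogeneity, fix $a=1$ and let $h(t)=\bigl((1+t)^p+(1-t)^p\bigr)^{2/p}/2^{2/p} - 1 - (p-1)t^2$ on $|t|\le 1$. A direct Taylor expansion of $((1+t)^p+(1-t)^p)/2$ in even powers of $t$, followed by raising to the $2/p$ power, shows $h(0)=h'(0)=0$, and checking the coefficients order by order (or using a convexity argument on $t^2\mapsto h(t)$) yields $h(t)\le 0$. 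For $|t|>1$ one reduces by swapping the roles of $a$ and $b$.

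Second, I would lift \eqref{eq:twopt} to Hermitian matrices $X,Y$. When $[X,Y]=0$, simultaneous diagonalization reduces the Schatten norms to $\ell^p$ norms of eigenvalue sequences, and the desired bound follows termwise from \eqref{eq:twopt} together with the Minkowski-type inequality $\|(u_i+v_i)\|_{\ell^{p/2}}\le \|u\|_{\ell^{p/2}}+\|v\|_{\ell^{p/2}}$ (valid since $p/2\ge 1$) applied to $u_i = |a_i+b_i|^2$ type sequences. For general (non-commuting) Hermitian $X,Y$, one invokes the Ball--Carlen--Lieb complex-analytic argument: consider $F(z)=\tr|X+zY|^p+\tr|X-zY|^p$ as a function of $z\in\mathbb{C}$, use its plurisubharmonicity together with a Hadamard three-lines principle on a strip, and reduce the bound at $z=1$ to the commuting / scalar boundary case. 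Alternatively, one can dualize to the $p$-uniform convexity inequality on the conjugate index $p'\le 2$ and apply the complex interpolation of \cite{chen2021concentration}-style arguments used to deduce it.

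The main obstacle is the non-commutative extension in the second step: because $|X+Y|^p$ is not a polynomial in $X$ and $Y$ when these do not commute, the pointwise scalar inequality does not descend to the matrix level just by spectral decomposition. The Ball--Carlen--Lieb resolution via complex interpolation handles this, but is the real content of the proof; in practice one would import it directly by citation of \cite{HNTR20:Matrix-Product} rather than redo the interpolation, since the subsequent application to Proposition~\ref{prop:unif_subsystem2} (via a partial-trace / conditional-expectation sandwich using $\tr_i Y = 0$ to cancel the mixed-sign terms from \eqref{eq:twopt}) is where all of the new content of this paper lies.
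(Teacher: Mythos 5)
The paper does not prove this proposition at all: it is stated as an imported \emph{Fact} and attributed entirely to \cite[Fact~4.1]{HNTR20:Matrix-Product} (the Ball--Carlen--Lieb sharp uniform-smoothness inequality for Schatten classes). You correctly recognize this, and your closing remark---that in practice one cites the result and that the genuinely new content of the paper is the subsequent reduction in Proposition~\ref{prop:unif_subsystem2}---matches what the authors actually do. So at the level of "what does this paper's proof look like," the answer is: a citation, and you got that right.

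As for your proof sketch of the external result, the two-step plan (scalar two-point inequality, then noncommutative lifting) is the standard roadmap and is sound in outline. Two small caveats worth flagging. First, in the commuting / diagonal reduction, the sequence to which you apply Minkowski in $\ell^{p/2}$ is $(a_i^2)$ and $((p-1)b_i^2)$ after invoking the scalar inequality in the form $u_i \le (a_i^2 + (p-1)b_i^2)^{p/2}$; writing it as Minkowski applied to "$|a_i+b_i|^2$-type" sequences is not quite the right bookkeeping. Second, Ball--Carlen--Lieb's actual argument for the $p\ge 2$ (smoothness) direction is not the Hadamard three-lines/plurisubharmonicity route you describe; their proof proceeds via a direct second-derivative trace inequality (essentially a matrix Hanner-type lemma) rather than complex interpolation, though interpolation-based alternatives do exist in the literature. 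Neither of these changes the conclusion, and neither affects the paper, since the paper simply cites the fact.
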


With these facts in hand, we are now ready to generalize Proposition \ref{prop:unif_subsystem} to the $p$-norm.  We begin with the following simple observation:
\begin{prop}\label{prop:nc_convexity}
For $\tr_i(Y) = 0; X= X_{-i}\otimes I_i$, $p \ge 2$,
\begin{align}
\normp{X}{p}\le \normp{X+Y}{p}
\end{align}
\end{prop}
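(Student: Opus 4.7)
The plan is to exhibit a norm-contractive map that sends $X+Y$ back to $X$. Let $d_i$ denote the local dimension of site $i$ and introduce the conditional expectation
\begin{equation}
    \Phi(Z) := \frac{1}{d_i}(\tr_i Z)\otimes I_i.
\end{equation}
Under the two hypotheses one has $\Phi(Y)=0$ (because $\tr_i Y=0$) and $\Phi(X)=X$ (because $\tr_i(X_{-i}\otimes I_i)=d_i\,X_{-i}$), so $\Phi(X+Y)=X$. The proposition is therefore reduced to showing that $\Phi$ is a contraction in $\lV\cdot\rV_p$ for every $p\ge 2$ (and in fact for every $p\ge 1$).

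The cleanest elementary route is unitary averaging, which meshes well with the discrete-Haar framework already used in Section~\ref{sec:explicit}. I would pick any unitary $1$-design $\{U_k\}_{k=1}^{N}\subset U(d_i)$ acting on site $i$---the single-site Pauli group (a subgroup of $G$ in~\eqref{eq:groupG}) suffices---so that
\begin{equation}
    \frac{1}{N}\sum_{k=1}^{N}\bigl(I_{-i}\otimes U_k\bigr)\,Z\,\bigl(I_{-i}\otimes U_k^{\dagger}\bigr) \;=\; \Phi(Z) \qquad \text{for every } Z.
\end{equation}
Applying this identity to $Z=X+Y$ and using the triangle inequality together with the unitary invariance of the Schatten norm gives
\begin{equation}
    \lV X\rV_p \;=\; \Bigl\lV \tfrac{1}{N}\sum_{k=1}^{N}\bigl(I_{-i}\otimes U_k\bigr)(X+Y)\bigl(I_{-i}\otimes U_k^{\dagger}\bigr)\Bigr\rV_p \;\le\; \frac{1}{N}\sum_{k=1}^{N}\lV X+Y\rV_p \;=\; \lV X+Y\rV_p,
\end{equation}
which is exactly the inequality asserted by the proposition.

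There is no real obstacle here: the statement is essentially that a partial-trace conditional expectation cannot enlarge any Schatten norm. A one-line alternative would quote the standard fact that unital, trace-preserving, positive maps contract every Schatten $p$-norm (Russo--Dye at $p=\infty$, trace non-increase at $p=1$, and Riesz--Thorin interpolation between them via Proposition~\ref{prop:riesz}). The only point that deserves care is the normalization factor $1/d_i$ in the definition of $\Phi$, which is forced by the requirement $\Phi(X)=X$; once that is in place the proof is immediate and sets up the application of Proposition~\ref{fact:unif_schatten} that will yield Proposition~\ref{prop:unif_subsystem2}.
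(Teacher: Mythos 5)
Your argument is correct, and it takes a genuinely different route from the paper. The paper proves the proposition by duality: it invokes the variational characterization $\normp{X}{p}=\sup_{\normp{B}{q}\le 1}\tr(X^\dagger B)$ with $1/p+1/q=1$, restricts the supremum defining $\normp{X+Y}{p}$ to test operators of the form $B_{-i}\otimes I_i/\normp{I_i}{q}$, and observes that the $Y$ contribution drops out because $\tr_i Y=0$ while the $X$ contribution saturates its own norm. You instead exhibit the map $\Phi$ taking $X+Y$ back to $X$ and prove that $\Phi$ contracts every Schatten norm by writing it as a convex combination of unitary conjugations (single-site twirl) and applying the triangle inequality plus unitary invariance. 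Both proofs are short and elementary, and both in fact work for all $p\ge 1$. Your twirling version has the small advantage of being manifestly norm-free---it shows at once that any unitarily invariant norm is contracted by the single-site conditional expectation, so it would extend without modification to the $\rho$-weighted norms of Appendix~\ref{app:othernorm} whenever $\rho$ factorizes across site $i$---whereas the paper's duality argument is tied to the specific $p\leftrightarrow q$ pairing of Schatten norms. One tiny point to keep in mind: the Pauli group on site $i$ is only contained in $G$ of~\eqref{eq:groupG} up to scalar phases, which of course do not affect conjugation, so your identification is harmless but worth flagging if you want to literally reuse $G$.
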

\begin{proof}
We employ a variational formulation for Schatten p-norms \cite{Wilde_QShannon}:
\begin{align}
    \normp{X_j}{p} = \sup_{\normp{B_j}{q}\le 1} \tr(X_j^\dagger B_j)
\end{align}
for $1/p+1/q=1$. Then we restrict to $B$ which are proportional to the identity on site $i$:  $B\propto B_{-i}\otimes I_i$:
\begin{align}
    \normp{X+Y}{p} = \sup_{\normp{B}{q}\le 1} \tr((X+Y)^\dagger B) \ge \tr\left( (X^\dagger+Y^\dagger) B_{-i}\otimes \frac{I_i}{\normp{I_i}{q}}\right) = \normp{X_{-i}\otimes I}{p}
\end{align}
In the last inequality we used that $\tr_i Y =0$ and that $X_{-i}\otimes I$ has maximizer $B_{-i}\otimes I_i/{\normp{I_i}{q}}$.
\end{proof}
We can now prove Proposition~\ref{prop:unif_subsystem2}.
\begin{proof}Observe that
\begin{align}
     \frac{\normp{X+Y}{p}^2 + \normp{X}{p}^2}{2} &\le \frac{\normp{X+Y}{p}^2 + \normp{X-Y}{p}^2}{2} \\
     &\le
     \left(\frac{\normp{X+Y}{p}^p+\normp{X-Y}{p}^p}{2} \right)^{2/p} 
     \le \normp{X}{p}^2+ C_p \normp{Y}{p}^2.
\end{align}
The last line uses Lyapunov's inequality, and then Proposition~\ref{fact:unif_schatten}. Rearranging terms yields a slightly worse constant $2C_p$. The advertised constant can be obtained via a more involved trick~\cite[Lemma~A.1]{HNTR20:Matrix-Product}.
\end{proof}

Now we can show submultiplicativity for arbitrary $p$-norms.  We will also, for good measure, describe how to generalize our result to $k$-local Hamiltonians as well. As in the main text, we need to properly normalize the Schatten $p$-norms with the ``bar norm" (\ref{eq:barnorm}). 
\begin{prop}\label{prop:submul_pnorm}
For $k$-local Hamiltonian
\begin{align}
    H = \sum_{m\ge i_k>\cdots>i_1\ge 1 }  H_{i_1,\cdots,i_k} 
\end{align}
such that $\mathbb{P}_jH_{i_1,\cdots,i_k} = \indicator(j\in \lbrace i_1,\ldots, i_k\rbrace) H_{i_1,\cdots,i_k}$,
for any operator $O$ obeying $\lV O\rV \le 1$ and any $p\ge 2$,
\begin{align}
    \lV HO\rV_{\bar{p}} \le \e p^{k/2} \cdot \lV H \rV_{(2)} \cdot \lV O\rV_{\bar{p}} \big(|\ln\lV O\rV_{\bar{p}}|+1\big)^{k/2} \label{eq:3prop}
\end{align}
where 
\begin{align}
    \lV H \rV_{(2)}:= \sqrt{\sum_{m\ge i_k>\cdots>i_1\ge 1 }  \normp{  H_{i_1,\cdots,i_k}}{\infty}^2}.
\end{align}
\end{prop}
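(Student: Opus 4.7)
The plan is to mimic the three-step proof of Lemma~\ref{lem:submulti} and upgrade it in two ways: from the Schatten $2$-norm to a general Schatten $p$-norm with $p \ge 2$, and from $2$-local to $k$-local interactions. As in the base case, I would first split $\normp{HO}{p}$ by the non-commutative H\"older inequality (Proposition~\ref{prop:holder}) with a tunable parameter $\theta \in (0,1)$,
\begin{align}
\normp{HO}{p} \le \normp{H}{p/(1-\theta)} \cdot \normp{O}{p/\theta},
\end{align}
and then control the ``$O$-factor'' by Riesz--Thorin interpolation between $p$ and $\infty$ (Proposition~\ref{prop:riesz}). Together with $\lV O \rV \le 1$, this yields $\normp{O}{p/\theta} \le \normp{O}{p}^\theta$, exactly as in the proof of Lemma~\ref{lem:submulti}.

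The main technical step is to bound the ``$H$-factor'' $\normp{H}{q}$ with $q := p/(1-\theta)$ in terms of $\normp{H}{(2)}$. Here I would iterate the uniform smoothness inequality of Proposition~\ref{prop:unif_subsystem2} in $k$ stages: at stage $\ell = 1, \ldots, k$, peel off one at a time the terms whose $\ell$-th smallest index takes a prescribed value. By construction the peeled piece is traceless on that site while the remainder is trivial there, so Proposition~\ref{prop:unif_subsystem2} applies and produces a factor of $C_q = q-1$ inside the square at each peel. After all $k$ stages,
\begin{align}
\lV H \rV_q^2 \le (q-1)^k \sum_{m \ge i_k > \cdots > i_1 \ge 1} \lV H_{i_1,\ldots,i_k} \rV_q^2 \le (q-1)^k \normp{I}{q}^2 \lV H \rV_{(2)}^2,
\end{align}
where the final inequality uses H\"older to pass from $q$-norm to $\infty$-norm on each local term. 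Inserting both ingredients back into the H\"older split, dividing by $\normp{I}{p}$, and noting that the dimension factors reorganize into $\normp{O}{\bar p}^\theta$, I obtain $\normp{HO}{\bar p} \le (pn-1)^{k/2} \lV H \rV_{(2)} \normp{O}{\bar p}^{\,1 - 1/n}$ with $n := 1/(1-\theta)$. Optimizing by setting $n = \max(1, |\ln \normp{O}{\bar p}|) \le |\ln \normp{O}{\bar p}|+1$, so that $\normp{O}{\bar p}^{-1/n} \le \mathrm{e}$, finally gives the advertised bound $\mathrm{e}\, p^{k/2} \lV H \rV_{(2)} \normp{O}{\bar p} (|\ln \normp{O}{\bar p}|+1)^{k/2}$.

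The only genuinely new content beyond the $2$-local, $p=2$ case of Lemma~\ref{lem:submulti} is the $k$-stage peeling. The mild obstacle is verifying, at each stage, that the peeled sum really is traceless on the chosen site while the complementary sum is genuinely identity on that site, so that Proposition~\ref{prop:unif_subsystem2} applies cleanly. This is where the hypothesis $\PP_j H_{i_1,\ldots,i_k} = \indicator(j \in \{i_1,\ldots,i_k\}) H_{i_1,\ldots,i_k}$ does all the work---it immediately produces the desired partition at each stage---so the entire argument reduces to careful bookkeeping of the combinatorial factors $(q-1)^k$ and of how the dimension prefactors from $\normp{I}{q}$ combine with the optimization in $n$ to yield the clean $p^{k/2}$ enhancement relative to the $p=2$ case.
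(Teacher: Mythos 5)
Your proof is correct and follows the same route as the paper's: the non-commutative H\"older split $\normp{HO}{p}\le\normp{H}{p/(1-\theta)}\normp{O}{p/\theta}$, Riesz--Thorin plus $\lVert O\rVert\le1$ to give $\normp{O}{p/\theta}\le\normp{O}{p}^\theta$, a $k$-stage uniform-smoothness peel to bound $\normp{H}{q}^2\le (q-1)^k\normp{I}{q}^2\normp{H}{(2)}^2$, and finally the optimization $n=\max(1,|\ln\normp{O}{\bar p}|)$. The only cosmetic difference is the peeling order (you go smallest-index first, the paper largest-index first), which changes nothing in the bookkeeping.
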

\begin{proof}
Using Proposition \ref{prop:holder}, \begin{equation}
    \lVert HO\rVert_{p} =  \lVert HO\rVert_p \le \lVert H\rVert_{p/(1-\theta)} \cdot \lVert O\rVert_{p/\theta}. \label{eq:p_holder3_2}
\end{equation}
We now bound each of these norms in turn.
We start with $\normp{H}{q}$ (set$q:=p/(1-\theta)$ in what follows):
\begin{align}
    \left\lVert\sum_{m\ge i_k>\cdots>i_1\ge 1 }  H_{i_1,\cdots,i_k}\right\rVert_{q}^2 &\le  \left\lVert\sum_{m-1\ge i_k >\cdots>i_1\ge 1 }  H_{i_1,\cdots,i_k}\right\rVert_{q}^2 + C_q \left\lVert\sum_{m=i_k> i_{k-1}\cdots>i_1\ge 1 }  H_{i_1,\cdots,i_k}\right\rVert_{q}^2 \notag \\
    &\le \sum_{m \ge i_k \ge k} C_q \left\lVert\sum_{i_k> i_{k-1}\cdots>i_1\ge 1 }  H_{i_1,\cdots,i_k}\right\rVert_{q}^2 \le \sum_{m\ge i_k>\cdots>i_1\ge 1 } {C_q}^k \normp{  H_{i_1,\cdots,i_k}}{q}^2 \\
    &\le {C_q}^k\normp{I}{q}^2\sum_{m\ge i_k>\cdots>i_1\ge 1 }  \normp{  H_{i_1,\cdots,i_k}}{\infty}^2 = {C_q}^k\normp{I}{q}^2 \normp{H}{(2)}^2 \label{eq:p_Hnorm}
\end{align}
In the first line we applied Prop.~\ref{prop:unif_subsystem2} for $Y=\sum_{m=i_k> i_{k-1}\cdots>i_1\ge 1 }  H_{i_1,\cdots,i_k}$ and $X=\sum_{m-1\ge i_k >\cdots>i_1\ge 1 }  H_{i_1,\cdots,i_k}$, i.e. we are ``peeling off" any term traceless on the qudit $i_k=m$. In the second inequality we peel off qudit $i_k=m-1$, qudit $i_k=m-2$ all the way to $i_k=k$.\footnote{The $i_k=k$ term does not have the $C_q$ coefficient, but we threw it in to simplify the expression.} In the second line, we recursively repeat the first and second line for $i_{k-1},\cdots, i_1$, each of which gives a factor of $C_q$. Lastly, we use H\"older's inequality to bound the q-norm by the $\infty$ norm with a $\normp{I}{p}$ overhead.\footnote{This conversion to operator norm is tight when the spectrum is suitably flat, as is the case when dealing with a Pauli string operator.}

For the norm of $O$, a standard manipulation using Proposition \ref{prop:riesz}, and $\lVert O \rVert_\infty \le 1$, allows us to handle \begin{equation}
    \lVert O\rVert_{p/\theta} \le \lVert O\rVert_p^\theta \lVert O\rVert_\infty^{1-\theta}  \le \lVert O\rVert_{p}^\theta  \label{eq:p_Onorm}
\end{equation}
Now combining (\ref{eq:p_holder3_2}), (\ref{eq:p_Hnorm}) and (\ref{eq:p_Onorm}), we plug in $q=p/(1-\theta)$ to get \begin{align}
    \frac{1}{\normp{I}{p}}\lVert HO\rVert_{p} &\le 
    \frac{1}{\normp{I}{p}} \left(\left(\frac{p}{1-\theta}-1\right)^{k/2} \normp{H}{(2)}\normp{I}{p/(1-\theta)}\right)\lVert O\rVert_{p}^\theta \le \left(\frac{p}{1-\theta}\right)^{k/2}\lVert H\rVert_{(2)}\left(\frac{\lVert O\rVert_{p}}{\normp{I}{p}}\right)^\theta.
\end{align}
where we explicit display the $\normp{I}{p}$ and used that $\normp{I}{p}=(\tr[I])^{1/p}$ to normalize $\normp{O}{p}$.
As in the main text, we choose
\begin{equation}
    n = \frac{1}{1-\theta} := \max\left(1,|\ln(\lVert O\rVert_{\bar p}| \right)  \le |\ln(\lVert O\rVert_{\bar p}|+1.  \label{eq:3proplast1_2}
\end{equation}
Then \begin{equation}
    \lVert HO\rVert_{\bar{p}} \le p^{k/2}\lVert H\rVert_{(2)}\lVert O\rVert_{\bar{p}} \times n^{k/2} \lVert O\rVert_{\bar{p}}^{-1/n} \le p^{k/2} \lVert H\rVert_{(2)}\lVert O \rVert_{\bar{p}} \times n^{k/2} \mathrm{e}^{|\ln \lVert O \rVert_{\bar{p}}|/n}. \label{eq:3proplast2_2}
\end{equation}
Combining (\ref{eq:3proplast1_2}) and (\ref{eq:3proplast2_2}), we obtain the advertised result.
\end{proof}

\section{Extension of submultiplicaty to tensor product ensembles}\label{app:othernorm}
Submultiplicativity also extends to special other choices of background density matrices $\rho$. We begin by recalling the following results of \cite{Beigi_2013}:
\begin{prop}\label{prop:B}
Let $\rho$ be a density matrix.  Define the norm\begin{align}
    \normp{O}{q,\rho}: =\normp{\rho^{1/2p}O\rho^{1/2p}}{p}= \left[\tr\labs{\rho^{1/2p}O\rho^{1/2p}}^p\right]^{1/p}.
\end{align}
Then the following properties hold: \begin{subequations}
    \begin{align}
        \normp{I}{q,\rho} = \tr(\rho)=1, \hspace{.5in} &\text{(proper normalization)}  \\
        \tr(\sqrt{\rho} A^\dagger \sqrt{\rho} B) \le \normp{A}{p,\rho}\normp{B}{q,\rho},\hspace{.5in} &\text{(H\"older inequality with $1/p+1/q=1$)} \\
        \lVert A\rVert_{q_\theta,\rho} \le \lVert A\rVert_{q_1,\rho}^\theta \lVert A\rVert_{q_2,\rho}^{1-\theta},\hspace{.5in} &\text{(Riesz-Thorin interpolation with $\theta/q_1+(1-\theta)/q_2=1/q_\theta$)}.
    \end{align}
\end{subequations}
\end{prop}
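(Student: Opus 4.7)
My plan is to verify each of the three claims in Proposition~\ref{prop:B} separately, in ascending order of difficulty, treating the $p$/$q$ subscript ambiguity in the definition as the natural convention $\normp{O}{p,\rho} := \normp{\rho^{1/(2p)} O \rho^{1/(2p)}}{p}$.

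For the normalization claim, I would substitute $O = I$ and use $\rho^{1/(2p)} \cdot I \cdot \rho^{1/(2p)} = \rho^{1/p}$, which is positive, so
\begin{equation}
    \normp{I}{p,\rho} = \bigl[\tr\,\rho^{(1/p)\cdot p}\bigr]^{1/p} = [\tr\rho]^{1/p} = 1.
\end{equation}

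For the H\"older claim, the strategy is to rewrite the weighted inner product into a form where standard noncommutative H\"older (Proposition~\ref{prop:holder}) applies. Set $\tilde A := \rho^{1/(2p)} A \rho^{1/(2p)}$ and $\tilde B := \rho^{1/(2q)} B \rho^{1/(2q)}$. Since $1/p + 1/q = 1$ implies $\rho^{1/(2p)}\rho^{1/(2q)} = \rho^{1/2}$, cyclicity of the trace yields the key identity
\begin{equation}
    \tr\bigl(\sqrt\rho\, A^\dagger \sqrt\rho\, B\bigr) = \tr\bigl(\rho^{1/(2p)} A^\dagger \rho^{1/(2p)}\rho^{1/(2q)} B \rho^{1/(2q)}\bigr) = \tr(\tilde A^\dagger \tilde B).
\end{equation}
Applying the standard noncommutative H\"older inequality to $\tr(\tilde A^\dagger \tilde B)$ then produces $\normp{\tilde A}{p}\normp{\tilde B}{q} = \normp{A}{p,\rho}\normp{B}{q,\rho}$, which is the desired bound.

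For the Riesz--Thorin claim, the simple translation-to-tildes trick used above breaks, because the weighting exponent $1/(2p)$ itself must be interpolated. The natural remedy is the Stein--Hirschman complex interpolation theorem. I would parametrize the strip $S = \{z \in \mathbb{C} : 0 \le \mathrm{Re}(z) \le 1\}$ so that $z = \theta$ hits the exponent $q_\theta$ via $1/q(z) := z/q_1 + (1-z)/q_2$. Using the polar decomposition $A = U|A|$, define the holomorphic family $A(z) := U|A|^{q_\theta/q(z)}$ on a finite-dimensional space where $|A|^{s}$ is entire in $s$ via the spectral calculus, and similarly build an auxiliary family $B(z)$ that dualizes $A(z)$ through the variational identity $\normp{A}{q_\theta,\rho} = \sup_{\normp{B}{q'_\theta,\rho}\le 1}\lvert \tr(\sqrt\rho A^\dagger \sqrt\rho B)\rvert$ already established by the H\"older claim. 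The analytic function
\begin{equation}
    F(z) := \tr\bigl(\rho^{1/(2q(z))} A(z)^\dagger \rho^{1/(2q(z))} B(z)\bigr)
\end{equation}
is then bounded on $\bar S$; on the vertical line $\mathrm{Re}(z) = 0$, the H\"older bound and the normalization of $B(z)$ give $\lvert F(is)\rvert \le \normp{A}{q_2,\rho}$, and similarly $\lvert F(1+is)\rvert \le \normp{A}{q_1,\rho}$ on $\mathrm{Re}(z) = 1$. Hadamard's three-lines lemma then yields $\lvert F(\theta)\rvert \le \normp{A}{q_1,\rho}^\theta \normp{A}{q_2,\rho}^{1-\theta}$, and $F(\theta)$ has been arranged so that optimizing over the dual variable returns $\normp{A}{q_\theta,\rho}$.

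The main technical obstacle is verifying that $F(z)$ is bounded and uniformly continuous on the closed strip so Hadamard applies, and that the polar phases $U,V$ of $A$ and $B$ do not destroy holomorphicity. Both are handled by the standard regularization: restrict first to faithful $\rho$ of full rank on a finite-dimensional Hilbert space so that all fractional powers $\rho^{s}$ and $|A|^{s}$ are entire in $s$ via the spectral theorem; then pass to degenerate or infinite-dimensional cases by approximating $\rho \to \rho + \epsilon I$ and truncating to finite support, which is exactly the limiting procedure already invoked for the Frobenius norm on infinite lattices elsewhere in the paper.
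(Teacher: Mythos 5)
The paper itself does not prove Proposition~\ref{prop:B}: it is recalled without proof from \cite{Beigi_2013}, so you are supplying an argument where the authors simply cite. Your normalization and H\"older arguments are correct and are the natural ones --- the cyclicity step exploiting $\rho^{1/(2p)}\rho^{1/(2q)}=\rho^{1/2}$ is exactly right, and inserting the absolute value on the left of the H\"older claim fixes what is presumably an omission in the paper's statement.

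The Riesz--Thorin part, however, contains a genuine flaw: modulating the operator via $A(z):=U|A|^{q_\theta/q(z)}$ is the wrong thing to do here and destroys the boundary estimates you need. On $\mathrm{Re}(z)=0$ the modulus of $A(is)$ is $|A|^{q_\theta/q_2}$, not $|A|$; after absorbing the purely imaginary exponents into unitaries, the quantity Hölder controls is $\normp{\rho^{1/(2q_2)}U'|A|^{q_\theta/q_2}\rho^{1/(2q_2)}}{q_2}$ for some modified phase $U'$, which is not $\normp{A}{q_2,\rho}=\normp{\rho^{1/(2q_2)}U|A|\rho^{1/(2q_2)}}{q_2}$ unless $q_\theta=q_2$, so you cannot conclude $|F(is)|\le\normp{A}{q_2,\rho}$. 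The $U|A|^{q_\theta/q(z)}$ modulation is the trick for \emph{duality}-type statements, where the operator itself is supposed to land on an extreme point on the boundary lines; for the log-convexity $\normp{A}{q_\theta,\rho}\le\normp{A}{q_1,\rho}^\theta\normp{A}{q_2,\rho}^{1-\theta}$ you should keep $A$ fixed and let the only $z$-dependence sit in the weights $\rho^{1/(2q(z))}$ (which the definition of $\normp{\cdot}{q(z),\rho}$ already makes $z$-dependent) and in a dual variable $H(z):=W|H|^{q'_\theta/q'(z)}$ with $1/q'(z)=1-1/q(z)$. With $G(z):=\rho^{1/(2q(z))}A\,\rho^{1/(2q(z))}$ one has $\normp{G(is)}{q_2}=\normp{A}{q_2,\rho}$ exactly (the purely imaginary $\rho$-exponent contributes only a unitary, which Schatten norms ignore) and $\normp{H(is)}{q'_2}=\normp{H}{q'_\theta}^{q'_\theta/q'_2}\le 1$; Hadamard's three-lines lemma applied to $F(z):=\tr\bigl(G(z)^\dagger H(z)\bigr)$ and a supremum over $H$ at $z=\theta$ then yields the claim. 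A secondary symptom of the same confusion: the $F(z)$ you wrote carries $\rho^{1/(2q(z))}$ on both sides of $A(z)^\dagger$, so $F(\theta)$ is $\tr(\rho^{1/(2q_\theta)}A^\dagger\rho^{1/(2q_\theta)}B)$ rather than the $\tr(\sqrt{\rho}A^\dagger\sqrt{\rho}B)$ of the weighted variational identity you invoke, yet you normalize $B$ in the $\rho$-weighted $q'_\theta$-norm; the two conventions are inconsistent, and the clean route above avoids this entirely by dualizing against an \emph{unweighted} Schatten unit ball.
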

Using this proposition, it is straightforward to begin to extend Proposition \ref{prop:unif_subsystem} to use this generalized norm. \begin{prop}
Suppose that $[H,\rho]=0$.  Then \begin{equation}
    \normp{HO(t)}{2,\rho}^2 \le \lVert H\rVert_{2p,\rho}^2 \lVert O\rVert_{2,\rho}^{2/q} \left\lVert \rho^{-1/4}O^\dagger \sqrt{\rho} O \rho^{-1/4} \right\rVert_\infty^{1-1/q}. \label{eq:B2}
\end{equation}
\end{prop}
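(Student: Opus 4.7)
The plan is to mimic the proof of Lemma~\ref{lem:submulti} in the KMS-weighted setting of Proposition~\ref{prop:B}: first unfold the weighted $L^{2}$ norm as a single trace, then apply non-commutative Hölder with a conjugate pair $(p,q)$ to separate the $H$-factor from the $O$-factor, and finally use Riesz--Thorin interpolation on the remaining $O$-factor to reach the two-parameter bound. The starting identity exploits the hypothesis $[H,\rho]=0$, which lets every power of $\rho$ commute through $H$. Defining $Z:=\rho^{1/4}O\rho^{1/4}$, one has
\begin{equation*}
    \lVert HO\rVert_{2,\rho}^{2}=\lVert\rho^{1/4}HO\rho^{1/4}\rVert_{2}^{2}=\lVert HZ\rVert_{2}^{2}=\tr\!\bigl(H^{\dagger}H\cdot ZZ^{\dagger}\bigr),
\end{equation*}
where $ZZ^{\dagger}=\rho^{1/4}O\sqrt{\rho}O^{\dagger}\rho^{1/4}$ is positive semidefinite.

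Inserting $\rho^{1/p}\rho^{-1/p}=I$ between the two factors and applying non-commutative Hölder yields
\begin{equation*}
    \lVert HO\rVert_{2,\rho}^{2}\le\lVert H^{\dagger}H\rho^{1/p}\rVert_{p}\cdot\lVert \rho^{-1/p}ZZ^{\dagger}\rVert_{q},\qquad\frac{1}{p}+\frac{1}{q}=1.
\end{equation*}
The first factor is exactly $\lVert H\rVert_{2p,\rho}^{2}$: using $[H,\rho]=0$, one writes $H^{\dagger}H\rho^{1/p}=(H\rho^{1/(2p)})^{\dagger}(H\rho^{1/(2p)})$, and then applies the identity $\lVert M^{\dagger}M\rVert_{p}=\lVert M\rVert_{2p}^{2}$ with $M=H\rho^{1/(2p)}$. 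This supplies the $\lVert H\rVert_{2p,\rho}^{2}$ prefactor in the target inequality.

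For the remaining $O$-factor $\lVert\rho^{-1/p}ZZ^{\dagger}\rVert_{q}$, I would apply Riesz--Thorin $\lVert N\rVert_{q}\le\lVert N\rVert_{1}^{1/q}\lVert N\rVert_{\infty}^{1-1/q}$. At the $q=1$ endpoint (so $\rho^{-1/p}=I$), the $L^{1}$ norm collapses to $\tr(ZZ^{\dagger})=\lVert Z\rVert_{2}^{2}=\lVert O\rVert_{2,\rho}^{2}$, producing the $\lVert O\rVert_{2,\rho}^{2/q}$ factor after interpolation. At the $q=\infty$ endpoint ($p=1$), the matrix $\rho^{-1}ZZ^{\dagger}$ is similar (via conjugation by $\rho^{1/2}$) to the positive semidefinite matrix $(\rho^{-1/2}Z)(\rho^{-1/2}Z)^{\dagger}$, so their spectra agree; cyclic rearrangement then recognizes this spectrum as that of $(\rho^{1/4}O\rho^{-1/4})^{\dagger}(\rho^{1/4}O\rho^{-1/4})=\rho^{-1/4}O^{\dagger}\sqrt{\rho}O\rho^{-1/4}$, the advertised operator-norm factor.

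The main obstacle I foresee is the $q=\infty$ step, since operator norms are not preserved under general (non-unitary) similarity, so one cannot directly equate $\lVert \rho^{-1}ZZ^{\dagger}\rVert_{\infty}$ with that of its PSD similar. The remedy is to use the identity $\lVert N\rVert_{\infty}^{2}=\lVert N^{\dagger}N\rVert_{\infty}$, which reduces the problem to computing the operator norm of a genuinely PSD matrix whose operator norm equals its largest eigenvalue and is therefore determined by the (shared) spectrum above. A secondary subtlety is the unboundedness of $\rho^{-1/p}$ when $\rho$ is singular, which is handled as elsewhere in the paper by regularizing $\rho\to\rho+\epsilon I$, proving the bound for each $\epsilon>0$, and passing to $\epsilon\to 0$ using continuity of the finite-dimensional norms. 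Assembling the two endpoint identifications with the Hölder step produces the desired inequality.
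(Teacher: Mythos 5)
Your decomposition $\normp{HO}{2,\rho}^2 = \tr(H^\dagger H\,ZZ^\dagger)$ with $Z=\rho^{1/4}O\rho^{1/4}$, and the identification $\lVert H^\dagger H\rho^{1/p}\rVert_p=\normp{H}{2p,\rho}^2$ using $[H,\rho]=0$, are both correct. The interpolation step, however, has a genuine gap. Proposition~\ref{prop:riesz} interpolates the Schatten norms of a \emph{fixed} operator $N$, whereas your $N_p:=\rho^{-1/p}ZZ^\dagger$ depends on $p$ (hence on $q$); applying the quoted inequality literally gives $\lVert N_p\rVert_q\le\lVert N_p\rVert_1^{1/q}\lVert N_p\rVert_\infty^{1-1/q}$ with the \emph{same} $N_p$ on the right, not the mixed endpoints $\lVert ZZ^\dagger\rVert_1$ and $\lVert\rho^{-1}ZZ^\dagger\rVert_\infty$ you evaluate. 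To let the $\rho$-power vary with the interpolation parameter you need Stein (analytic-family) interpolation along $z\mapsto\rho^{-z}ZZ^\dagger$, or equivalently the $\rho$-weighted Riesz--Thorin inequality in Proposition~\ref{prop:B}, which is the tool the paper actually uses. This is a genuinely different tool, not an elementary rewrite of Proposition~\ref{prop:riesz}.

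Even granting the Stein-type endpoint formula, the $q=\infty$ endpoint is misidentified and your proposed remedy does not close the hole. The matrix $\rho^{-1}ZZ^\dagger$ is similar (via $\rho^{1/2}$-conjugation) to the PSD matrix $\rho^{-1/2}ZZ^\dagger\rho^{-1/2}=\rho^{-1/4}O\sqrt{\rho}\,O^\dagger\rho^{-1/4}$, so the two share \emph{eigenvalues}; but $\rho^{-1}ZZ^\dagger$ is not normal, and its operator norm (largest \emph{singular} value) can strictly exceed its spectral radius. The identity $\lVert N\rVert_\infty^2=\lVert N^\dagger N\rVert_\infty$ does not rescue this: the spectrum of $N^\dagger N=ZZ^\dagger\rho^{-2}ZZ^\dagger$ is the set of squared singular values of $N$, which for non-normal $N$ is unrelated to the ``shared'' spectrum. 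Concretely, with $\rho=\mathrm{diag}(1,\epsilon)$ and $ZZ^\dagger$ the $2\times 2$ all-ones matrix, $\lVert\rho^{-1}ZZ^\dagger\rVert_\infty=\sqrt{2(1+\epsilon^{-2})}\approx\sqrt{2}/\epsilon$ whereas the spectral radius is $1+\epsilon^{-1}\approx 1/\epsilon$, so your route gives a strictly weaker bound. The root cause is placing all of $\rho^{\pm 1/p}$ on one side of $ZZ^\dagger$. The minimal repair---and the essential content of the paper's KMS-norm approach---is to split symmetrically, $\tr(H^\dagger H\,ZZ^\dagger)=\tr\bigl(\rho^{1/2p}H^\dagger H\rho^{1/2p}\cdot\rho^{-1/2p}ZZ^\dagger\rho^{-1/2p}\bigr)$, so that after H\"older the $O$-factor is $\normp{\rho^{-1/2p}ZZ^\dagger\rho^{-1/2p}}{q}=\normp{\rho^{-1/2}ZZ^\dagger\rho^{-1/2}}{q,\rho}$ (using $1/(2q)-1/2=-1/(2p)$), the $\rho$-weighted $q$-norm of a \emph{fixed} PSD operator, to which the weighted Riesz--Thorin of Proposition~\ref{prop:B} applies directly and produces both endpoints cleanly. (As a minor aside, both your writeup and the proposition's statement swap $O\leftrightarrow O^\dagger$ at some point: the quantity that naturally appears is $\normp{\rho^{-1/4}O\sqrt{\rho}\,O^\dagger\rho^{-1/4}}{\infty}$, which equals the stated $\normp{\rho^{-1/4}O^\dagger\sqrt{\rho}\,O\rho^{-1/4}}{\infty}$ only when $O$ is Hermitian.)
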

\begin{proof}
Using the identities in Proposition \ref{prop:B}, and temporarily just writing $O(t)=O$:
    \begin{align}
    \normp{HO}{2,\rho}^2=\tr(O^\dagger H^\dagger \sqrt{\rho} HO \sqrt{\rho}) &=\tr( \left(\rho^{-1/4}H^\dagger \sqrt{\rho} H\rho^{-1/4} \right)\sqrt{\rho}\left(\rho^{-\frac{1}{4}} O \sqrt{\rho}O^\dagger \rho^{-\frac{1}{4}} \right) \sqrt{\rho}) \notag \\
    &\le \normp{\rho^{-\frac{1}{4}}H^\dagger \sqrt{\rho} H\rho^{-\frac{1}{4}}}{p,\rho} \cdot \normp{\rho^{-\frac{1}{4}}O^\dagger \sqrt{\rho} O\rho^{-\frac{1}{4}}}{q,\rho} \notag \\
    &\le \normp{\rho^{-\frac{1}{4}}H^\dagger \sqrt{\rho} H\rho^{-\frac{1}{4}}}{p,\rho} \cdot \normp{\rho^{-\frac{1}{4}}O^\dagger \sqrt{\rho} O\rho^{-\frac{1}{4}}}{1,\rho}^{1/q} \cdot \normp{\rho^{-\frac{1}{4}}O^\dagger \sqrt{\rho} O\rho^{-\frac{1}{4}}}{\infty,\rho}^{1-1/q}\notag \\
    &\le \normp{\rho^{-\frac{1}{4}}H^\dagger \sqrt{\rho} H\rho^{-\frac{1}{4}}}{p,\rho} \cdot \tr(\sqrt{\rho} O^\dagger \sqrt{\rho} O )^{1/q} \cdot \normp{\rho^{-\frac{1}{4}}O^\dagger \sqrt{\rho} O\rho^{-\frac{1}{4}}}{\infty}^{1-1/q}.
\end{align}
In the third line we used interpolation for $\frac{1}{q}=\frac{1/q}{1}+\frac{1-1/q}{\infty}$, and in the fourth, we used that at $\infty$ it does not depends on the background anymore: $\normp{A}{\infty,\rho}=\normp{A}{\infty}$.  Lastly, when $[H,\rho]=0$, we note that for any $a,b,c$: \begin{equation}
   \rho^a O(t)^\dagger \rho^b O(t)\rho^c = \rho^a \mathrm{e}^{\mathrm{i}Ht} O^\dagger \mathrm{e}^{-\mathrm{i}Ht} \rho^b \mathrm{e}^{\mathrm{i}Ht}O\mathrm{e}^{-\mathrm{i}Ht}\rho^c = \mathrm{e}^{\mathrm{i}Ht} \rho^a O^\dagger \rho^b O\rho^c \mathrm{e}^{-\mathrm{i}Ht},
\end{equation}
and moreover we note that \begin{equation}
    \lVert A(t)\rVert_{p,\rho} = \lVert A\rVert_{p,\rho}
\end{equation}
for any $A$.  Hence we obtain (\ref{eq:B2}).
\end{proof}

We expect that this extension is most useful in the simple scenario where $\rho$ is a simple tensor product.  This naturally arises in models with a conserved charge, where one can consider the ``infinite temperature grand canonical ensemble" \begin{equation}
    \rho \propto \bigotimes_{i} \mathrm{e}^{-\mu Q_i}
\end{equation} 
where $Q_i$ represents the local charge on site $i$.  Examples correspond to spin systems with conserved total $z$-spin, or models like the Bose-Hubbard model in which total boson number is conserved.  In these models, the infinity norm $\normp{\rho^{-\frac{1}{4}}O^\dagger \sqrt{\rho} O\rho^{-\frac{1}{4}}}{\infty}$ may not be too dangerous.  Unfortunately, in a thermal ensemble with $\rho \propto \mathrm{e}^{-\beta H}$, due to the non-locality in $H$ one must be extremely careful about factors of $\mathrm{e}^{\beta H}$ when computing operator norms \cite{}.

\section{Proof of Lemma \ref{lem42}}\label{app:41}
Observe that (for simplicity we will drop the explicit time dependence in $\mathcal{L}$ and $H$) \begin{align}
    \frac{\mathrm{d}}{\mathrm{d}t} (A_0(t)|\mathcal{F}|A_0(t)) &= (A_0(t)|[\mathcal{F},\mathcal{L}]|A_0(t)) = 2\sum_{R \ge j>i\ge 0} (A_0(t)|\mathbb{Q}_j[\mathcal{F},\mathcal{L}]\mathbb{Q}_i|A_0(t)) \notag \\
    &= 2\sum_{R \ge j>i\ge 0} (j-i)(A_0(t)|\mathbb{Q}_j\mathcal{L}\mathbb{Q}_i|A_0(t)). \label{eq:trick1}
\end{align}
where the factor of $2$ comes from the fact that $j>i$ is restricted in the sum, yet the terms with $j$ and $i$ can have either number on either side of the inner product.  The antisymmetry of $\mathcal{L}$ and symmetry of $\mathcal{F}$ then allow us to say $(A|[\mathcal{F},\mathcal{L}]|A) = 2(A|\mathcal{FL}|A)$.
Our technical observation is that we can do a discrete integration by parts mimicking $\mathbb{E}[X] = \int \mathrm{d}t \; \mathbb{Q}(X>t)$ for a real-valued random variable: \begin{equation}
    \sum_{R \ge j>i\ge 0} (j-i) \mathbb{Q}_j \mathcal{L} \mathbb{Q}_i = \sum_{R \ge j>i\ge 0} \mathbb{Q}_j \mathcal{L} \sum_{k=0}^{i} \mathbb{Q}_k. \label{eq:trick2}
\end{equation}
In addition to combining these two results, we will want to split up the Liouvillian $\mathcal{L}$ using the same method of \cite{alpha_3_chenlucas}.   Let us define the following sets: \begin{equation}
    S_{q,k} := \lbrace 2^qk, 2^qk+1,\ldots, 2^{q}(k+2) - 1 \rbrace.
\end{equation}
We define \begin{equation}
    K_q = \frac{R+1}{2^q} - 1
\end{equation} to be the largest possible value of $k$ for $S_{q,k}$.
As shown in Fig.~\ref{fig:long_range_int}, at each scale $q$, these sets form ``double partitions" of the domain $[0,R]$ at every scale $0\le q \le q_*$.  
 We define \begin{equation}
    H_{q,k} := \sum_{i,j \in S_{q,k}} \indicator(i,j \text { are not contained in $S_{q',m}$ for any $q'<q$, \text{ for any } $m$}) H_{i,j},
\end{equation}
and $\mathcal{L}_{q,k}$ in the obvious way, by denoting $\mathcal{L}_{ij} = \mathrm{i}[H_{ij},\cdot]$ to be the Liouvillian associated with sites $i$ and $j$, and $\mathcal{L}_{q,k} = \mathrm{i}[H_{q,k},\cdot]$.

Now, with (\ref{eq:trick1}), \begin{align}
    \frac{\mathrm{d}}{\mathrm{d}t} (A_0(t)|\mathcal{F}|A_0(t)) &= 2 \sum_{i=0}^{R-1}\sum_{j=i+1}^R \sum_{k=0}^{i} (A_0(t)|\mathbb{Q}_j \mathcal{L} \mathbb{Q}_k |A_0(t)) =  (\sum_{q=0}^{q_*}\sum_{n=0}^{K_q}\sum_{i\in S_{q,n}})\sum_{j=i+1}^R \sum_{k=0}^{i}(A_0(t)|\mathbb{Q}_j \mathcal{L}_{q,n} \mathbb{Q}_k |A_0(t)) \notag \\
    &= \sum_{q=0}^{q_*} \sum_{n=0}^{K_q} \sum_{i\in S_{q,n}} (\sum_{j=i+1}^{2^q(n+2)})  \sum_{k=2^qn}^{i}(A_0(t)|\mathbb{Q}_j \mathcal{L}_{q,n} \mathbb{Q}_k |A_0(t)) \notag \\
    &\le \sum_{q=0}^{q_*} \sum_{n=0}^{K_q}  \sum_{i\in S_{q,n}} \left\lVert \sum_{j=i+1}^{2^q(n+2)} \mathbb{Q}_j |A_0(t)) \right\rVert_{\mathrm{F}}\cdot \left\lVert \mathcal{L}_{q,n} \sum_{k=2^qn}^{i} \mathbb{Q}_k |A_0(t)) \right\rVert_{\mathrm{F}}.
\end{align}
In the first line we absorb the factor of 2 into the double partition of $0\le i\le R-1$; in the second line, we have used the fact that $\mathcal{L}_{q,n}$ is non-vanishing only on operators with support in $S_{q,n}$, \emph{and} that if it extends a Pauli string to the right, that Pauli string has rightmost site in $S_{q,n}$. In the third line, we use the Cauchy-Schwarz inequality, keeping in mind that our inner product is already normalized to the Frobenius norm.  Now, we wish to further simplify this expression. 
To proceed, 
\begin{cor}\label{cor:42}
For any operator $O$ obeying $\lVert O\rVert_\infty \le 1$, there exists a constant $0<C<\infty$ such that \begin{equation}
    \lVert \mathcal{L}_{q,k} |O) \rVert_{\mathrm{F}} \le C \lV O\rV_{\mathrm{F}}\big(|\ln\lV O\rV_{\mathrm{F}}|+1\big) \times 2^{-q(\alpha-1)}. \label{eq:lem42}
\end{equation}
\end{cor}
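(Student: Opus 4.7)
The plan is to reduce the claim to Lemma~\ref{lem:submulti} applied to the Hamiltonian $H_{q,k}$ acting on $O$. Since $\mathcal{L}_{q,k}|O) = \mathrm{i}[H_{q,k},O]$, the triangle inequality gives $\lVert \mathcal{L}_{q,k}|O)\rVert_{\mathrm{F}} \le \lVert H_{q,k} O\rVert_{\mathrm{F}} + \lVert O H_{q,k}\rVert_{\mathrm{F}}$. Lemma~\ref{lem:submulti} bounds the first term directly, and its proof---non-commutative H\"older together with Riesz--Thorin interpolation---is symmetric under swapping the left and right factors, so the identical bound applies to $\lVert O H_{q,k}\rVert_{\mathrm{F}}$. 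The entire task is therefore to show that $\lVert H_{q,k}\rVert_{(2)} \lesssim 2^{-q(\alpha-1)}$ with an absolute prefactor.

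The first substantive step is to unpack the exclusion from smaller-scale sets built into the indicator in the definition of $H_{q,k}$. The family $\{S_{q-1,m}\}_m$ consists of intervals of length $2^q$ shifted by step $2^{q-1}$, so every interval of length at most $2^q$ fits inside one of them; equivalently, a pair $(i,j)$ with $i<j$ is contained in some $S_{q-1,m}$ precisely when $j-i \le 2^q - 1$. The same argument handles all $q'<q$, since the corresponding sets are even shorter. Hence the pairs actually contributing to $H_{q,k}$ are exactly those with $i,j\in S_{q,k}$ and $j-i \in [2^q,\,2^{q+1}-1]$; in particular every such pair straddles the midpoint of $S_{q,k}$.

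Next I will bound $\lVert H_{q,k}\rVert_{(2)}^2 = \sum_{i<j}\lVert H_{ij}\rVert_\infty^2$ by a direct count. From~(\ref{eq:Jab}) and the finitely many Pauli components, $\lVert H_{ij}\rVert_\infty \le c_0/d(i,j)^\alpha$ for an absolute $c_0$, while for each $d\in[2^q,2^{q+1}-1]$ the number of pairs $(i,j)\subset S_{q,k}$ with $j-i=d$ equals $2^{q+1}-d \le 2^q$. Therefore
\begin{equation}
\lVert H_{q,k}\rVert_{(2)}^2 \;\le\; c_0^2 \sum_{d=2^q}^{2^{q+1}-1}\frac{2^{q+1}-d}{d^{2\alpha}} \;\le\; c_0^2 \cdot 2^q\cdot 2^q\cdot (2^q)^{-2\alpha} \;=\; c_0^2\cdot 2^{-2q(\alpha-1)},
\end{equation}
so $\lVert H_{q,k}\rVert_{(2)} \le c_0\cdot 2^{-q(\alpha-1)}$. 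Combined with the opening paragraph this gives the corollary with, say, $C = 4\e c_0$.

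The only point that requires a moment's care is invoking Lemma~\ref{lem:submulti} on the product $OH_{q,k}$ as well as $H_{q,k}O$, which follows from the left--right symmetry of the functional-analytic inputs used in its proof but should be checked rather than asserted. Beyond that the argument is genuinely a counting exercise, and I want to flag that the lower cutoff $d\ge 2^q$ in the sum means no convergence of the tail $\sum_{d\ge 1} d^{-2\alpha}$ is required: the bound applies uniformly for all $\alpha>1$ and degrades smoothly as $\alpha\to 1^+$, which is exactly the regime where Lemma~\ref{lem42} produces the $R^{\alpha-1}$ scaling rather than a logarithm.
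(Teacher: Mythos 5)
Your overall route matches the paper's: apply Lemma~\ref{lem:submulti} to both $H_{q,k}O$ and $OH_{q,k}$, then estimate $\lVert H_{q,k}\rVert_{(2)}$ by a direct distance count (the paper observes that all surviving pairs straddle the midpoint of $S_{q,k}$ and then extends the sum over $\{i\le 0,\ j\ge 2^{q-1}\}$). Your observation that the $OH_{q,k}$ case is identical by the left--right symmetry of H\"older and Riesz--Thorin is exactly what the paper's ``first factor of 2'' silently assumes, and your absorption of the nine Pauli components into $c_0$ is equivalent to the paper's factor of 9.

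However, there is a genuine arithmetic gap in the combinatorics. You claim that $\{S_{q-1,m}\}_m$ --- windows of cardinality $2^q$ shifted by $2^{q-1}$ --- contain every interval of cardinality $\le 2^q$, hence that the surviving pairs in $H_{q,k}$ have $j-i\ge 2^q$. This is false: windows of size $w$ shifted by step $s$ are guaranteed to contain an interval of cardinality $\ell$ only when $\ell \le w - s + 1$, here $\ell \le 2^{q-1}+1$, i.e.\ $j-i\le 2^{q-1}$. For a concrete counterexample take $q=2$, $k=0$: the pair $(1,4)$ has $j-i=3<2^q$, straddles the midpoint $4$, and lies in no $S_{1,m}$ ($\{0,1,2,3\}$ misses $4$, $\{2,3,4,5\}$ misses $1$) nor any $S_{0,m}$, so it is genuinely present in $H_{2,0}$. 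Because your sum starts at $d=2^q$ rather than $d=2^{q-1}+1$, the displayed expression is \emph{not} a valid upper bound on $\lVert H_{q,k}\rVert_{(2)}^2$ --- you have omitted contributing pairs. Redoing the count with the correct lower cutoff $d\ge 2^{q-1}+1$ still yields $\lVert H_{q,k}\rVert_{(2)}^2 \lesssim 2^{-2q(\alpha-1)}$, so the scaling and the Corollary survive, but the prefactor picks up a factor of order $4^{\alpha}$ and is therefore \emph{not} the absolute constant you assert; $C$ must depend on $\alpha$, as the paper allows and as its own proof concedes (``constants that depend on $\alpha$''). You should also drop the closing remark that the $d\ge 2^q$ cutoff ``means no convergence of the tail is required'' --- with the corrected cutoff the sum is still finite and requires no tail, but it is no longer an $\alpha$-uniform constant, so the claimed advantage over the paper's infinite-sum estimate evaporates.
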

\begin{proof}
Observe that for each $H_{ij}$ contained in $H_{q,k}$, we have that (if $i<j$):  $i<2^q (k+1)$ and $j \ge 2^q(k+1)$.  This means that we may apply Lemma \ref{lem:submulti}: \begin{equation}
    \lVert \mathcal{L}_{q,k} |O) \rVert_{\mathrm{F}} \le 2\times 9\times 2\mathrm{e} \sqrt{\sum_{i,j} h_{ij}^2}  \lV O\rV_{\mathrm{F}}\big(|\ln\lV O\rV_{\mathrm{F}}|+1\big)
\end{equation}
The first factor of 2 comes from considering either $HO$ or $OH$, the factor of 9 comes from triangle inequality over two sums of Paulis $\sum_{a,b=1}^3 X^aX^b$.  Now observe that \begin{equation}
    \sqrt{\sum_{i,j} h_{ij}^2} \le \sqrt{\sum_{i=-\infty}^0 \sum_{j=2^{q-1}}^\infty \frac{1}{|i-j|^\alpha}} \le \sqrt{\sum_{i=-\infty}^0 \sum_{j=2^{q-1}}^\infty \frac{1}{|i-j|^{2\alpha}}} \le const.\sqrt{ \sum_{j=2^{q-1}}^\infty \frac{1}{j^{2\alpha-1}}} \le \frac{const.}{2^{q(\alpha-1)}}.
\end{equation}
where $C_0$ and $C$ are finite constants that depend on $\alpha$. Thus we obtain (\ref{eq:lem42}).
\end{proof}

With Corollary~\ref{cor:42}, \begin{align}
    \frac{\mathrm{d}}{\mathrm{d}t} &(A_0(t)|\mathcal{F}|A_0(t)) 
    \le C \sum_{q=0}^{q_*} \sum_{n=0}^{K_q}  \sum_{i\in S_{q,n}}  \left\lVert \sum_{j=i+1}^{2^q(n+2)} \mathbb{Q}_j |A_0(t)) \right\rVert_{\mathrm{F}}\left\lVert \sum_{k=2^qn}^{i} \mathbb{Q}_k |A_0(t)) \right\rVert_{\mathrm{F}} \left(1-\ln\left\lVert \sum_{k=2^qn}^{i} \mathbb{Q}_k |A_0(t)) \right\rVert_{\mathrm{F}}\right) 2^{-q(\alpha-1)} \notag \\
     &\le C \sum_{q=0}^{q_*} \sum_{n=0}^{K_q}  \sum_{i\in S_{q,n}}  \left\lVert \sum_{j \in S_{q,n}} \mathbb{Q}_j |A_0(t)) \right\rVert_{\mathrm{F}}\left\lVert \sum_{j \in S_{q,n}} \mathbb{Q}_j |A_0(t)) \right\rVert_{\mathrm{F}} \left(1-\ln \left\lVert \sum_{j \in S_{q,n}} \mathbb{Q}_j |A_0(t)) \right\rVert_{\mathrm{F}}\right) 2^{-q(\alpha-1)} \notag \\
     &\le 2C \sum_{q=0}^{q_*} \sum_{n=0}^{K_q}  p_{q,n} \left(1-\frac{1}{2}\ln p_{q,n}\right) 2^{-q(\alpha-2)} \label{eq:mergepqn}
\end{align}
In the second line, we used that the function $x(1-\ln x)$ is increasing in the domain $x\in [0,1]$, and that the Frobenius norm of a projected operator monotonically increases if we add orthogonal projectors $\mathbb{Q}_i$s to the sum.  In the third line, we explicitly carried out the sum over $i\in S_{q,n}$, and defined \begin{equation}
    p_{q,n} := \left\lVert \sum_{j \in S_{q,n}} \mathbb{Q}_j |A_0(t)) \right\rVert_{\mathrm{F}}^2.
\end{equation}
Namely, $p_{q,n}$ is the probability that the operator has its rightmost support in $S_{q,n}$.  Note that \begin{equation}
    \sum_{n=0,2,4,\ldots} p_{q,n} \le 1, \;\;\;\; \sum_{n=1,3,5,\ldots} p_{q,n} \le 1.
\end{equation}

At this point, we first use the fact that for fixed $q$, the maximal value of the entropy like quantity (\ref{eq:mergepqn}) is attained when $p_{q,n}$ is uniform:  $p_{q,n} = 2/(K_q+1)$.  Hence, \begin{align}
    \frac{\mathrm{d}}{\mathrm{d}t} &(A_0(t)|\mathcal{F}|A_0(t)) 
    \le 2C \sum_{q=0}^{q_*} 2\left(1-\frac{1}{2}\ln \frac{2}{K_q+1}\right)2^{-q(\alpha-2)}  \le 2C\sum_{q=0}^{q_*}\left(2 + \frac{1+q_*-q}{2}\ln 2 \right)2^{-q(\alpha-2)}.
\end{align}  Evaluating this sum leads to (\ref{eq:lem42eq}).

\section{Proof of Theorem \ref{thm:pnorm_LC}}
\label{app:proof_pnorm}
As stated in the main text, we need to modify the technical proof given in \cite{alpha_3_chenlucas} in a few places to prove Theorem \ref{thm:pnorm_LC}.  Let us briefly review the proof strategy of \cite{alpha_3_chenlucas}.  As in the proof of Theorem \ref{thm:pnorm_LC}, we divide up the 1d lattice into boxes of size $2^q$ for $0\le q \lesssim \log r$.  We then observe that if (using the main text notation) $\mathbb{Q}_r|A_0(t)) \ne 0$, then, if we write out \begin{equation}
    A_0(t) = \mathrm{e}^{\mathcal{L}t}A_0 = \sum_{n=0}^\infty \frac{t^n}{n!} \mathcal{L}^n A_0
\end{equation}
as a Taylor series, we can look for the ``irreducible" $\mathcal{L}$s in any given sequence of $n$  $\mathcal{L}$s in $\mathcal{L}^n$ that ``move the operator forward":  namely move an operator from the hyperplane of $\mathbb{Q}_m$ to $\mathbb{Q}_{m+k}$ for some $k>0$.  The key idea of \cite{alpha_3_chenlucas}, using the formalism of \cite{chen2019operator}, is that (1) reducible (i.e. not irreducible) $\mathcal{L}$s can be re-exponentiated to form a new unitary transformation, and (2) we only need to count ``irreducible" $\mathcal{L}$s in sequences that have sufficiently many steps forward at the corresponding scale $q$ (as if we pull down a small number of irreducible steps at scale $q^\prime$, we can just re-exponentiate $\mathcal{L}$s at this scale).

We will want to perform each of these 2 steps a little differently.  However, to motivate why, it will help to invert the logic and first think about what we wish to accomplish during the second step.  At the longest scale $q_*$, we only want to count \emph{one} irreducible path.  Let $\mathcal{L}_{q_*}$ contains all terms in $\mathcal{L}$ corresponding to a commutator $[H_{mn},\cdots]$ at the longest scale, $\mathcal{L}_{-q_*} = \mathcal{L}-\mathcal{L}_{q_*}$, and $\mathcal{L}_{q_*,j}$ contains all couplings in $\mathcal{L}_{q_*}$ that end on site $j$.   Using the Duhamel identity, we write \begin{equation}
    \mathrm{e}^{\mathcal{L}t} = \mathrm{e}^{\mathcal{L}_{-q_*}t} + \sum_{j>r/2} \int\limits_0^t \mathrm{d}s \mathrm{e}^{\mathcal{L}(t-s)} \mathcal{L}_{q_*,j} \mathrm{e}^{\mathcal{L}_{-q_*}s}.
\end{equation}
Now consider using the triangle inequality, which holds for all $p$-norms, to obtain \begin{equation}
    \lVert \mathbb{Q}_r \mathrm{e}^{\mathcal{L}t} |\mathcal{O}_0) \rVert_{\bar p} \le \lVert \mathbb{Q}_r \mathrm{e}^{\mathcal{L}_{-q_*}t} |\mathcal{O}_0) \rVert_{\bar p} + \sum_{j>r/2}\int\limits_0^t \mathrm{d}s \lVert \mathbb{Q}_r \mathrm{e}^{\mathcal{L}(t-s)} \mathcal{L}_{q_*,j} \mathrm{e}^{\mathcal{L}_{-q_*}s} |\mathcal{O}_0) \rVert_{\bar p}.
\end{equation}
The first term in this inequality would correspond to all paths that make it from 0 to $r$ without ultra-long couplings, and can be bounded separately (as we will describe below).  The second term corresponds to all of the paths that invoked a long coupling to get to a site $j$ more than halfway to $r$.  We would be tempted to perform the following simplifications: \begin{equation}
    \lVert \mathbb{Q}_r \mathrm{e}^{\mathcal{L}(t-s)} \mathcal{L}_{q_*,j} \mathrm{e}^{\mathcal{L}_{-q_*}s} |\mathcal{O}_0) \rVert_{\bar p} \lesssim \lVert \mathcal{L}_{q_*,j} \mathrm{e}^{\mathcal{L}_{-q_*}s} |\mathcal{O}_0) \rVert_{\bar p} \lesssim \lVert H_{q_*,j}\rVert_{\bar p} \lVert\mathrm{e}^{\mathcal{L}_{-q_*}s} |\mathcal{O}_0) \rVert_{\bar p} . \label{eq:Qrpnormtry}
\end{equation}
The first step is very mild, contributing at worst an O(1) factor.  The second step, however, can only be used \textit{once} as we have emphasized for the Frobenius light cone~\eqref{eq:only_once}: 
To peel off $\lVert HO\rVert_{\bar p} \sim \lVert H\rVert_{(2)}\lVert O\rVert_{\bar p} $, submultiplicativity (Proposition \ref{prop:submul_pnorm}) requires a constraint $\lVert O\rVert_\infty \le 1$.  For (\ref{eq:Qrpnormtry}) this criterion is satisfied, but it will not be whenever there are 2 or more irreducible couplings in a sequence. 
Our remedy is the following. For illustration, consider growing an operator $\CO$ by $\CL $, and for simplicity assume the operator is supported entirely $\le 0$ and $\CL$ is bipartite $j>0>i$. Then, uniform smoothness implies
\begin{align}
\normp{\CL      |\mathcal{O})}{\bar{p}}^2 &= \lnormp{\sum_{j>0>i} \CL_{ij}      |\mathcal{O}_{\le 0})}{\bar{p}}^2 \\
&\le (p-1) \sum_{j>0} \lnormp{ \sum_{0>i} \CL_{ij}      |\mathcal{O}_{\le 0})}{\bar{p}}^2\\
&\le (p-1) \sum_{j>0}  \lnorm{2\sum_{0>i} H_{ij}}^2    \cdot  \lnormp{|\mathcal{O}_{\le 0})}{\bar{p}}^2.
\end{align}

In words, the sum over the right site $j$ leads to a sum of squares in operator weight (as in Frobenius light cones) while the sum over left site $i$ is bounded additively as in the Lieb-Robinson light cone. What we gain from this sacrifice is that this triggers a recursion for the (normalized) p-norm. We will see that this leads to a light cone of $t\sim r^{\min(1,\alpha-3/2)}$, in between the Frobenius light cone of $t\sim r^{\min(1,\alpha-1)}$ and Lieb-Robinson light cone $t\sim r^{\min(1,\alpha-2)}$. 

Having made this observation, we can now essentially reproduce the proof used in \cite{alpha_3_chenlucas}, except that we need to modify a few of the constants as well as the equivalence classes used in that proof. Disect the interactions of scale q and regroup the terms sharing the same site $j$ for $i<j$ (i.e. $j$ is farther from the origin). Labeled by sites $r\ge j \ge 0$, these are the building blocks we will use.
\begin{align}
    \sum_k H_{q,k} = \sum_k \sum_{(i,j)\in H_{q,k}} H_{ij} = \sum_j \L[ \sum_k \sum_{i<j, (i,j)\in H_{q,k}} H_{ij}\R] =: \sum_j H_{q,j}.
\end{align}

For any sequence, we can define and read out its \textit{forward sequence} by the following recursive algorithm.
Suppose $\CL_{\beta_n}\cdots \CL_{\beta_1}$ has forward sequence $f_n= (\CL_{\beta_m}, \cdots, \CL_{\beta_1})$. 
Then for $\CL_{\beta_{n+1}}\cdots \CL_{\beta_1}$, it has forward sequence
\begin{align}
    f_{n+1}=  \begin{cases}
    (\beta_{n+1}, f_n ),\ &\textrm{if}\ j_{n+1}\ge j_{m}+1 \\
    f_n \ &\textrm{else}.
    \end{cases}
\end{align}
where $\beta_n\equiv (q_n,k_n,j_n)$. 
In words, the $j_{n+1}\ge j_{m}+1$ condition says a hop is forward if exceeds the farthest so far. 

 For each forward sequence, we can isolate terms of some scale $q$ to form a \textit{q-forward-subsequence}. Such a sequence is \textit{long} if has at least
\begin{align}
    N_q:= \L\lceil \dfrac{1}{2}\frac{2^{-q(\alpha-5/2)\frac{2}{3}}}{ \displaystyle \sum_{q'=0}^{q_*} 2^{-q'(\alpha-5/2)\frac{2}{3}} }\frac{r}{2^{q+1}}\R\rceil\label{eq:Nq}
\end{align}
terms. Note the size of a scale-$q$ block is $2^{q+1}$.
We then define characteristic function $\chi_q$ for each scale
\begin{align}
    \chi_q \CL_{\beta_n}\cdots \CL_{\beta_1}:=
    \begin{cases}
     \CL_{\beta_n}\cdots \CL_{\beta_1} &\textrm{if it contains a long q-forward subsequence }\\
     0 &\textrm{else},
    \end{cases}
\end{align}
and for each forward subsequence $\beta$
\begin{align}
    \chi_{\beta} \CL_{\beta_n}\cdots \CL_{\beta_1}:=
    \begin{cases}
     \CL_{\beta_n}\cdots \CL_{\beta_1} &\textrm{if it contains the forward subsequence $\beta$ }\\
     0 &\textrm{else}.
    \end{cases}
\end{align}
We denote with $\mathcal{F}_q$ the set of long $q$-forward (sub)sequences.  The following combinatorial proposition will now prove useful (See, e.g.~\cite{aigner_2010}, for its context in combinatorics):
\begin{prop}\label{fact:gen_incl_excl}
Let $\mathcal{F}_q$ denote the set of all long forward $q$-sequences of length $\ge N_q$.  Then \begin{align}\label{eq:propD1}
    \chi_q = \sum_{k=N_q}^{r} \L(\sum^{k}_{p=N_q} (-1)^{k-p} \binom{k}{p}\R) \sum_{\beta \in \CF_q:\labs{\beta}=k} \chi_{\beta}.
\end{align}
\end{prop}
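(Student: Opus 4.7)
The identity is combinatorial once the definitions are unpacked, and the plan is to verify it pointwise on each operator string $\sigma=\CL_{\beta_n}\cdots\CL_{\beta_1}$. The $q$-forward-subsequence $F_q(\sigma)$ produced by the recursive algorithm is uniquely determined by $\sigma$; setting $m:=|F_q(\sigma)|$, the left-hand side satisfies $\chi_q(\sigma)=\indicator(m\ge N_q)$. Reading the definition of $\chi_\beta$ in the same spirit, $\chi_\beta(\sigma)=1$ exactly when $\beta$ appears as an ordered sub-list of $F_q(\sigma)$, so for each $k\ge N_q$,
\begin{equation}
    \sum_{\beta\in\CF_q,\,|\beta|=k}\chi_\beta(\sigma)=\binom{m}{k},
\end{equation}
with the convention $\binom{m}{k}=0$ when $k>m$. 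This reduction collapses the claim to the purely numerical identity
\begin{equation}
    \indicator(m\ge N_q)=\sum_{k=N_q}^{m}c_k\binom{m}{k},\qquad c_k:=\sum_{p=N_q}^{k}(-1)^{k-p}\binom{k}{p}.
\end{equation}

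The case $m<N_q$ is immediate since the sum on the right is empty. For $m\ge N_q$, I first simplify $c_k$: combining $\sum_{p=0}^{k}(-1)^{k-p}\binom{k}{p}=0$ with the standard partial-sum $\sum_{p=0}^{N_q-1}(-1)^p\binom{k}{p}=(-1)^{N_q-1}\binom{k-1}{N_q-1}$ yields $c_k=(-1)^{k-N_q}\binom{k-1}{N_q-1}$. Substituting $j=k-N_q$ and rewriting $(-1)^j\binom{j+N_q-1}{j}=\binom{-N_q}{j}$ converts the remaining sum into a Vandermonde convolution $\sum_{j=0}^{m-N_q}\binom{-N_q}{j}\binom{m}{m-N_q-j}=\binom{m-N_q}{m-N_q}=1$, matching the left-hand side.

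The main obstacle is really only the first step: pinning down that $\sum_{|\beta|=k}\chi_\beta(\sigma)=\binom{m}{k}$. This requires reading ``contains the forward subsequence $\beta$'' as ``the canonical $F_q(\sigma)$ contains $\beta$ as an ordered sub-list,'' rather than as ``$\beta$ appears as some ordered sub-pattern of $\sigma$ whose $j$-values are increasing'' — the latter would overcount forward patterns that the $F_q$ algorithm never records, and the identity would fail pointwise. Once this interpretation is fixed, the rest is the M\"obius/Vandermonde bookkeeping above.
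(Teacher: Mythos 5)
Your proof is correct, and it follows the same overall strategy as the paper: verify the identity pointwise on each fixed sequence $\sigma$, reduce to the statement $\indicator(m\ge N_q)=\sum_k c_k\binom{m}{k}$ where $m=|F_q(\sigma)|$, and evaluate the resulting double sum of binomials. You and the paper differ only in how that last numerical identity is dispatched: the paper unpacks everything into factorials, shifts a summation index, and uses the plain binomial-theorem cancellation $\sum_q(-1)^q\binom{k-p}{q}=\delta_{kp}$, whereas you first collapse the inner alternating sum to $c_k=(-1)^{k-N_q}\binom{k-1}{N_q-1}$ (via the partial alternating-sum identity) and then close with a Vandermonde convolution. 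Both are standard routes to the same binomial-inversion statement, so this is a cosmetic variation rather than a different method.

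One thing you flag and get exactly right is the interpretation of $\chi_\beta$: ``contains the forward subsequence $\beta$'' must mean $\beta$ is an ordered sub-list of the \emph{canonical} $F_q(\sigma)$ produced by the recursive extraction, not merely that $\beta$ occurs somewhere in $\sigma$ with increasing $j$'s. This is implicit in the paper (its use of $\binom{N_q+k}{N_q+r}$ to count sub-lists presupposes it), and your proof makes the hypothesis explicit, which is a mild clarity gain. The only thing you left unsaid is why the proposition's outer sum stopping at $r$ is harmless — it is because the canonical $q$-forward subsequence moves strictly rightward within $\{0,\dots,r\}$ and thus has length $m\le r$, so the terms you formally truncate all vanish via $\binom{m}{k}=0$ for $k>m$.
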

\begin{proof}
Suppose that we have a  sequence $\mathcal{S} := \mathcal{L}_{\beta_n}\cdots \mathcal{L}_{\beta_1}$ First note that $\chi_q$ clearly annihilates anything which does not have a long $q$-forward subsequence, since $\chi_\beta \mathcal{S}=0$ for any $\beta\in\mathcal{F}_q$.  

Hence we can assume that $\mathcal{S}$ does have a long $q$-forward subsequence $\beta$.  Consider what happens at length $N_q+k$.  For simplicity, let $\tilde\chi_q$ denote the right hand side of (\ref{eq:propD1}).  In general, we will have \begin{equation}
    \tilde\chi_q \mathcal{S} = \mathcal{N} \cdot \mathcal{S},
\end{equation}
where the integer $\mathcal{N}$ counts the number of $\chi_\beta$s, weighted by their appropriate prefactor in (\ref{eq:propD1}), which do not annihilate $\mathcal{S}$.  We can straightforwardly calculate \begin{align}
    \mathcal{N} &= \sum_{r=0}^k \left(\begin{array}{c} N_q+k \\ N_q+r\end{array}\right) \times \L(\sum^{r}_{p=0} (-1)^{r-p} \binom{N_q+r}{N_q+p}\R) = \sum_{r=0}^k\sum^{r}_{p=0}(-1)^{r-p} \frac{(N_q+k)!}{(k-r)!(r-p)!(N_q+p)!} \notag \\
    &= \sum_{p=0}^k\sum^{k-p}_{q=0}(-1)^{q} \frac{(k-p)!}{(k-p-q)!q!} \frac{(N_q+k)!}{(k-p)!(N_q+p)!}
\end{align}
where in the second line, we switched variables to $r=p+q$ in the sum.  Clearly, the sum over $q$ vanishes unless it only runs over $q=0$; namely, $p=k$.  We conclude that only the $k=p$ term above is non-zero, which immediately leads to $\mathcal{N}=1$.  Hence $\tilde\chi_q\mathcal{S}=\mathcal{S}$, which implies $\tilde\chi_q=\chi_q$ and thus (\ref{eq:propD1}).
\end{proof}

We will use the following simple bound in what follows: \begin{equation}
    \sum^{k}_{p=q} (-1)^{k-p} \binom{k}{p}\le 2^k. \label{eq:binomsimple}
\end{equation}
Next, we let $\mathrm{\Delta}^\ell(t)$ denotes the $\ell$-simplex: \begin{equation}
\mathrm{\Delta}^\ell(t) := \lbrace (t_1,\ldots, t_\ell) \in [0,t]^\ell : t_1 \le t_2 \le \cdots \le t_\ell \rbrace.
\end{equation}
The following lemma helps us to use the indicator functions above to group all sequences in $\mathrm{e}^{\mathcal{L}t}$ into ``irreducible" sequences: 
\begin{lem}\label{lem:resum}
\begin{align}
    \chi_\beta \e^{\CL t}|A_0) = \int\limits_{\mathrm{\Delta}^\ell(t)} \mathrm{d}t_\ell\cdots \mathrm{d}t_1
    \e^{\CL(t-t_\ell)} \CL_{\beta_\ell} \e^{\CL^\beta_\ell(t_\ell - t_{\ell-1})}\CL_{\beta_{\ell-1}} \e^{\CL^\beta_{\ell-1}(t_{\ell-1} - t_{\ell-2})} \cdots \CL_{\beta_1} \e^{\CL^\beta_1t_1} |A_0)
\end{align}
where $\ell = \ell(\beta)$, \begin{equation}
\mathcal{L}^\beta_p := \mathcal{L} - \sum_{\lambda \in Y_p(\beta)} \mathcal{L}_\lambda \label{eq:qLbetap}
\end{equation}
with the sets $Y_p^\beta$ corresponding to the forbidden terms\begin{align}
 Y^q_p(\beta) &:=  \lbrace (q^\prime, k, j): j\ge j(\beta_{p-1}) \rbrace, 
\end{align}
which would change the forwardness of the sequence, at any scale.  $j(\beta_p)$ denotes the right site of coupling $\beta_p$.
\end{lem}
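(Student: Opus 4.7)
My plan is to prove Lemma~\ref{lem:resum} by combining the standard Dyson time-ordered expansion of $\e^{\CL t}$ with a re-exponentiation trick that collapses all ``non-advancing'' couplings sitting between consecutive prescribed forward events. I would start from
\begin{equation*}
\e^{\CL t}|A_0) = \sum_{n\ge 0} \int_{\mathrm{\Delta}^n(t)} dt_n\cdots dt_1 \, \CL_{\gamma_n}\cdots \CL_{\gamma_1}|A_0),
\end{equation*}
where each $\gamma_i$ ranges over all coupling labels $(q',k,j)$ appearing in $\CL$. Inserting $\chi_\beta$ into the integrand kills every string whose algorithmically-defined forward subsequence fails to contain $(\beta_1,\ldots,\beta_\ell)$ as a prefix; among the surviving strings, I would classify them by the specific positions $p_1<p_2<\cdots<p_\ell$ in $\{1,\ldots,n\}$ at which the labels $\beta_1,\ldots,\beta_\ell$ are inserted.

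The heart of the argument is a careful accounting of the ``spectator'' couplings lying strictly between two consecutive forward events. By definition of the forward subsequence, no spectator in the window $(t_{p_{r-1}}, t_{p_r})$ can push the rightmost site past $j(\beta_{r-1})$; equivalently, each such spectator must be drawn from $\CL - \sum_{\lambda\in Y_r(\beta)} \CL_\lambda = \CL^\beta_r$. Summing over the number of spectators in this window and integrating their times over the appropriate sub-simplex, the geometric identity $\sum_{m\ge 0} (\tau\CL^\beta_r)^m/m! = \e^{\tau\CL^\beta_r}$ collapses the window into the closed-form factor $\e^{\CL^\beta_r(t_r - t_{r-1})}$. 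Because $\chi_\beta$ demands only that $\beta$ appear as a forward subsequence and imposes no constraint on what happens after $\beta_\ell$, the segment following $t_\ell$ is an unrestricted evolution $\e^{\CL(t-t_\ell)}$. Relabelling the forward-event times as the simplex variables $t_1<\cdots<t_\ell\in\mathrm{\Delta}^\ell(t)$ then assembles the claimed nested-exponential integral.

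The main obstacle I anticipate is the combinatorial bookkeeping: verifying that each Dyson string surviving $\chi_\beta$ is produced \emph{exactly once} by this re-parameterization, without double counting. Two subtle points will need to be handled carefully. First, the boundary window before $t_1$ has no ``previous'' event; the natural convention is to identify $j(\beta_0)$ with the rightmost site of the support of $|A_0)$ (here the origin), so that $\CL^\beta_1$ is the restriction preventing any forward step before $\beta_1$. Second, one must reconcile the definition of forward subsequence (strict inequality $j_{n+1}\ge j_m+1$) with the definition of $Y^q_p(\beta)$ used in the lemma (non-strict $j\ge j(\beta_{p-1})$); this reflects a specific choice about which couplings at the current rightmost site are assigned to the ``spectator'' class, and will need to be verified to agree with the inclusion--exclusion counting in Proposition~\ref{fact:gen_incl_excl}. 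Once these bookkeeping details are pinned down, the re-exponentiation itself is a direct consequence of the time-ordered Dyson series applied term by term to each window.
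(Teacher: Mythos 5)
Your overall strategy is the right one and matches the standard resummation argument that the paper defers to \cite{alpha_3_chenlucas}: Dyson-expand $\e^{\CL t}$, fix the slots occupied by $\beta_1,\dots,\beta_\ell$, and re-exponentiate the remaining couplings in each inter-event window. You are also right that the two boundary subtleties you flag need attention. However, the central claim in your accounting of the windows contains a gap that I do not think is merely a strict-versus-nonstrict convention issue.

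You write that ``by definition of the forward subsequence, no spectator in the window $(t_{p_{r-1}},t_{p_r})$ can push the rightmost site past $j(\beta_{r-1})$,'' which implicitly reads $\chi_\beta$ as keeping only those Dyson strings whose forward sequence has $\beta$ as a \emph{contiguous} initial segment (at all scales) up to time $t_\ell$. That is not what $\chi_\beta$ does. The inclusion--exclusion in Proposition~\ref{fact:gen_incl_excl} hinges on the count $\binom{N_q+k}{N_q+r}$, which is the number of (not necessarily contiguous) \emph{subsequences} of length $N_q+r$ inside a $q$-forward subsequence of length $N_q+k$; that count is correct only if $\chi_\beta\CS\neq 0$ whenever $\beta$ is an arbitrary subsequence of $\CS$'s $q$-forward subsequence. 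Under that reading, the window $(t_{r-1},t_r)$ is permitted to contain \emph{other} forward steps (at scale $q$ or any other scale) whose right endpoints lie strictly between $j(\beta_{r-1})$ and $j(\beta_r)$; such spectators certainly do push the rightmost site past $j(\beta_{r-1})$. The constraint that actually characterizes the window is that nothing reaches $j(\beta_r)$ or beyond before time $t_r$ (otherwise $\CL_{\beta_r}$ would fail to be forward), so the restricted generator should exclude couplings with $j\ge j(\beta_r)$, not $j\ge j(\beta_{r-1})$. This is precisely the reading supported by the paper's own gloss (``terms which cannot possibly render the coupling $\CL_{\beta_p}$ a non-forward coupling'') and by the notation $\CL_{(j_\ell)}$ used in the proof of the subsequent lemma, which depends only on $j_\ell$ and not on $j_{\ell-1}$. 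In short, the discrepancy you noticed is an off-by-one in which element of $\beta$ is referenced, not a matter of $\ge$ versus $>$; if you carry through your claim as stated, the resulting ``$\chi_\beta$'' would keep strictly fewer strings than the one Proposition~\ref{fact:gen_incl_excl} needs, and the identity would not match.

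A smaller but real omission: you worry abstractly about producing each string exactly once, but the concrete mechanism that rules out double counting is not identified. The point is that if the same label $\beta_p$ appears more than once in a Dyson string, only its \emph{first} appearance can be a forward step (any earlier appearance already places the rightmost site at or beyond $j(\beta_p)$, so later identical couplings are not forward). Since $\CL^\beta_p$ excludes couplings with $j\ge j(\beta_p)$ in the window before $t_p$, the parametrization automatically pins $t_p$ to that first appearance, giving uniqueness. Spelling this out is what makes the simplex integral a genuine bijection with the surviving Dyson strings.
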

The proof of this lemma is found in \cite{alpha_3_chenlucas}. In words, the intermediate unitaries $\mathcal{L}_p^\beta$ contain the terms (at any scale) which cannot possibly render the coupling $\mathcal{L}_{\beta_p}$ a non-forward coupling.  The following lemma allows us to then bound the $p$-norm of the resulting sequences:
\begin{lem}
\begin{align}
     \lnormp{\sum_{\beta\in \CF_q, \labs{\beta}=\ell} \chi_\beta \e^{\CL t}|A_0)}{p} \le \frac{\L(2\sqrt{C_pr} \norm{H_{q,i}}t\R)^{\ell}}{\ell!\sqrt{\ell!}}  \normp{A_0}{p}. 
\end{align}
\end{lem}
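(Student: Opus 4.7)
The plan is to combine the integral representation of $\chi_\beta \mathrm{e}^{\CL t}|A_0)$ provided by Lemma~\ref{lem:resum} with an iterative application of uniform smoothness (Proposition~\ref{prop:unif_subsystem2}) to the outermost site-sum at each level of the nested product. A long $q$-forward sequence of length $\ell$ is uniquely specified by an ordered tuple of right-endpoint sites $j_1<j_2<\cdots<j_\ell$ in $[0,r]$, so the sum $\sum_{\beta\in\CF_q,\ \labs{\beta}=\ell}$ reduces to a sum over such ordered tuples. Applying Lemma~\ref{lem:resum} and bringing the $p$-norm inside the simplex integral via the triangle inequality, unitarity of the outermost conjugation $\mathrm{e}^{\CL(t-t_\ell)}$ lets us discard it and reduces the problem to estimating
\[
W_\ell := \sum_{j_1<\cdots<j_\ell} \CL_{q,j_\ell}\,U_\ell^\beta\, \CL_{q,j_{\ell-1}}\,U_{\ell-1}^\beta \cdots \CL_{q,j_1}\,U_1^\beta |A_0),
\]
where $U_p^\beta := \mathrm{e}^{\CL_p^\beta(t_p-t_{p-1})}$ are the restricted unitary conjugations built in Lemma~\ref{lem:resum}.

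For the key step, I would fix $j_\ell$ and let $V_{j_\ell}$ denote the inner sum over $j_1<\cdots<j_{\ell-1}<j_\ell$. By the construction of the forbidden set $Y_\ell(\beta)$, the generator $\CL_\ell^\beta$ excludes every coupling whose right endpoint is $\ge j_{\ell-1}$; hence $U_\ell^\beta V_{j_\ell}$ stays supported on sites $<j_{\ell-1}<j_\ell$ and in particular acts as the identity on site $j_\ell$. Since $\CL_{q,j_\ell}=[\sum_i H_{i,j_\ell},\,\cdot\,]$ produces a non-identity Pauli at site $j_\ell$, each term $\CL_{q,j_\ell}\,U_\ell^\beta V_{j_\ell}$ is traceless on the pairwise-distinct sites $j_\ell$. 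Uniform smoothness, iterated over the values of $j_\ell$, then yields
\[
\|W_\ell\|_p^2 \le C_p\sum_{j_\ell} \|\CL_{q,j_\ell}\,U_\ell^\beta V_{j_\ell}\|_p^2 \le 4C_p \sum_{j_\ell} \norm{H_{q,j_\ell}}^2 \,\|V_{j_\ell}\|_p^2,
\]
where I used unitary invariance of the $p$-norm together with $\|[A,B]\|_p\le 2\|A\|\,\|B\|_p$.

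The same argument applies recursively to each $V_{j_\ell}$: the hierarchical forbidden sets guarantee that at level $p$ the inner sum $\sum_{j_1<\cdots<j_{p-1}<j_p}$ remains identity on site $j_p$, allowing another uniform-smoothness peeling. After $\ell$ such iterations,
\[
\|W_\ell\|_p^2 \le (4C_p \norm{H_{q,i}}^2)^{\ell}\sum_{j_1<\cdots<j_\ell} \|A_0\|_p^2 \le \frac{(4C_p r \norm{H_{q,i}}^2)^{\ell}}{\ell!}\|A_0\|_p^2,
\]
using the bound $\binom{r}{\ell}\le r^\ell/\ell!$ on the number of ordered tuples. Taking square roots and multiplying by the simplex volume $t^\ell/\ell!$ produces the claimed estimate. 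The main obstacle I anticipate is the per-level verification that the intermediate unitary $U_p^\beta$ does not spread the operator past site $j_p$; this is precisely what the forbidden sets $Y_p(\beta)$ of Lemma~\ref{lem:resum} are engineered to enforce, and without this hierarchical localization the trace-zero condition required for uniform smoothness would fail already at the first peeling.
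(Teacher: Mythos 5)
Your proof follows essentially the same route as the paper's: invoke Lemma~\ref{lem:resum} for the simplex-integral representation, move the sum over forward sequences inside the integral (possible because the intermediate generators $\CL_p^\beta$ depend only on $j_{p-1}$), discard the outer unitary, iterate uniform smoothness from the largest $j_\ell$ inward—verifying the trace-zero/identity-on-site conditions via the forbidden sets $Y_p(\beta)$—to obtain the $\sum_{j_1<\cdots<j_\ell}$ sum of squares, and then bound by $r^\ell/\ell!$ and the simplex volume. The only (harmless) slips are cosmetic: $U_\ell^\beta V_{j_\ell}$ is supported on sites $\le j_{\ell-1}$ rather than $<j_{\ell-1}$, which still gives identity on $j_\ell$ since $j_\ell>j_{\ell-1}$.
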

\begin{proof}
The $q$-forward sequence can be enumerated by the right-most sites of the $q$-scale couplings: we denote these as $r\ge j_\ell>\cdots>j_1\ge 1$, using the previous lemma:
\begin{align}
      \sum_{\beta\in \CF_q, \labs{\beta}=\ell} &\chi_\beta \e^{\CL t}|A_0) 
     =  \sum_{\beta\in \CF_q, \labs{\beta}=\ell}\int\limits_{\mathrm{\Delta}^\ell(t)} \mathrm{d}t_\ell\cdots \mathrm{d}t_1
    \e^{\CL(t-t_\ell)} \CL_{\beta_\ell} \e^{\CL^\beta_\ell(t_\ell - t_{\ell-1})}\CL_{\beta_{\ell-1}} \e^{\CL^\beta_{\ell-1}(t_{\ell-1} - t_{\ell-2})} \cdots \CL_{\beta_1} \e^{\CL^\beta_1t_1} |A_0) \notag \\
    &=\sum_{j_\ell>\cdots> j_1} \int\limits_{\mathrm{\Delta}^\ell(t)} \mathrm{d}t_\ell\cdots \mathrm{d}t_1
    \e^{\CL(t-t_\ell)} \CL_{j_\ell} \e^{\CL_{(j_\ell)}(t_\ell - t_{\ell-1})}\CL_{j_{\ell-1}} \e^{\CL_{(j_{\ell-1})}(t_{\ell-1} - t_{\ell-2})} \cdots \CL_{j_1} \e^{\CL_{(j_1)}t_1} |A_0) \notag \\
    &= \int\limits_{\mathrm{\Delta}^\ell(t)} \mathrm{d}t_\ell\cdots \mathrm{d}t_1
    \e^{\CL(t-t_\ell)} \sum_{j_\ell} \CL_{j_\ell} \e^{\CL_{(j_\ell)}(t_\ell - t_{\ell-1})}\sum_{j_{\ell-1}<j_{\ell} } \CL_{j_{\ell-1}} \e^{\CL_{(j_{\ell-1})}(t_{\ell-1} - t_{\ell-2})} \cdots \sum_{j_1<j_2}\CL_{j_1} \e^{\CL_{(j_1)}t_1} |A_0).
\end{align}
We use notation  $\CL_{(i_\ell)}$ to emphasize it is independent of the previous $i_{\ell-1},\cdots ,i_{1}$(Lemma~\ref{lem:resum}), which then allows us moves the sums inside of the integrals in the third equality. This nested sum then conveniently allows us to use the invariance of $p$-norms under unitary evolution to remove all but the ``irreducible" steps $\mathcal{L}_{\beta_p}$ from our bound:
\begin{align}
 &\lnormp{ \e^{\CL(t-t_\ell)} \sum_{j_\ell} \CL_{j_\ell} \e^{\CL_{(j_\ell)}(t_\ell - t_{\ell-1})}\sum_{j_{\ell-1}<j_{\ell} } \CL_{j_{\ell-1}} \e^{\CL_{(j_{\ell-1})}(t_{\ell-1} - t_{\ell-2})} \cdots \sum_{j_1<j_2}\CL_{j_1} \e^{\CL_{(j_1)}t_1} |A_0)}{p}^2 \notag \\
    &\le  C_p \sum_{j_\ell} \norm{2H_{j_\ell}}^2_\infty \lnormp{\e^{\CL_{(j_\ell)}(t_\ell - t_{\ell-1})}\sum_{j_{\ell-1}<i_{\ell} } \CL_{j_{\ell-1}} \e^{\CL_{(j_{\ell-1})}(t_{\ell-1} - t_{\ell-2})} \cdots \sum_{j_1<j_2}\CL_{j_1} \e^{\CL_{(j_1)}t_1} |A_0)}{p}^2 \notag \\
    & \le (4C_p)^\ell \sum_{j_\ell>\cdots> j_1} \norm{ H_{j_{\ell}} }^2\cdots \norm{ H_{j_{1}} }^2 \normp{A_0}{p}^2
    \le  \frac{(4C_pr \norm{H_{q,j}}^2)^{\ell}}{\ell!} \normp{A_0}{p}^2. 
\end{align}
The first inequality uses uniform smoothness for each term $\CL_{j_\ell}\cdots$, which is non-trivial on site $j_\ell$ and trivial beyond.  Indeed, if we start with the right-most possible $j_\ell$, we know that all other terms in the sum are trivial on site $j_\ell$, so uniform smoothness applies; we simply repeat this argument until we have summed over all $j_\ell$. The second line repeats this for $i_{\ell-1}\cdots i_1$. In the last line, we used the combinatorial bound \begin{equation}
    \sum_{r\ge i_\ell>\cdots> i_1 \ge 1} \le \frac{r^{\ell}}{\ell!}.
\end{equation}
Finally, we use one final triangle inequality to state that \begin{equation}
    \norm{\sum_{\beta\in \CF_q, \labs{\beta}=\ell} \chi_\beta \e^{\CL t}|A_0)}_p \le \sqrt{\frac{(4C_pr \norm{H_{q,j}}^2)^{\ell}}{\ell!} \normp{A_0}{p}^2} \int\limits_{\Delta^\ell(t)}\mathrm{d}t_\ell \cdots \mathrm{d}t_1 \; 1 = \frac{(2\sqrt{C_pr}t\lVert H_{q,j}\rVert)^\ell}{\ell!^{3/2}}
\end{equation}
which completes the proof.
\end{proof}
We now generalize this result to sequences that have $q$-forward subsequences at multiple scales $q$, which we index by set $Z \subset (0, \cdots , q^*)$.  As before, we label with $\mathcal{F}_Z$ the set of all such (sub)sequences (keep in mind that the ordering of terms at different scales is important to keep track of), and we define \begin{equation}
    \chi_Z := \prod_{q\in Z}\chi_q.
\end{equation}
\begin{lem}\label{lem:D4}
\begin{align}
     \lnormp{\sum_{\beta\in \CF_Z, \labs{\beta_q}=\ell_q} \chi_\beta \e^{\CL t}|A_0)}{p} 
     &\le \normp{A_0}{p} \prod_{q\in Z} \frac{\L(2\sqrt{C_pr} \norm{H_{q,j}}t\R)^{\ell_q}}{\ell_q!\sqrt{\ell_q!}} \label{eq:D4}
\end{align}
\end{lem}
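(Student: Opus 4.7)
The plan is to mimic the proof of Lemma~\ref{lem:resum} (the single-scale version), generalizing each ingredient to the multi-scale setting.

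First, I would extend the Duhamel-style expansion: for every $\beta\in\CF_Z$ of total length $L=\sum_{q\in Z}\ell_q$, write $\chi_\beta\,\e^{\CL t}|A_0)$ as a time-ordered integral over the $L$-simplex $\Delta^L(t)$, with the irreducible couplings $\CL_{\beta_1},\ldots,\CL_{\beta_L}$ interleaved with unitary propagators $\e^{\CL^\beta_p\Delta s_p}$ whose Liouvillians exclude every coupling $\CL_{q',k,j}$ with $j\ge \max_{p'<p}j(\beta_{p'})$.  This is a mild strengthening of the forbidden set $Y_p$ from Lemma~\ref{lem:resum}, required because, with multiple interleaved scales, the per-step right-site $j(\beta_{p-1})$ no longer equals the running global maximum; it guarantees that between successive irreducible steps the operator's right-most support never exceeds $\max_{p'<p}j(\beta_{p'})$.

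Second, I would decompose the sum over $\beta\in\CF_Z$ with fixed sizes $\{\ell_q\}_{q\in Z}$ into (i) an interleaving pattern $\sigma:\{1,\ldots,L\}\to Z$ with $|\sigma^{-1}(q)|=\ell_q$, and (ii) ordered site sequences $j^{(q)}_1<\cdots<j^{(q)}_{\ell_q}\le r$ at each scale $q\in Z$.  The interleaving choice is handled by the triangle inequality over at most $L!/\prod_q\ell_q!$ patterns; for each fixed $\sigma$, move all site sums inside the time integral and apply Prop.~\ref{prop:unif_subsystem2} iteratively, peeling one irreducible step at a time in \emph{decreasing order of right-site} across all scales.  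Each peel turns one $\sum_j$ into a factor $4C_p\,\norm{H_{\sigma(p),j(\beta_p)}}_\infty^2$ on the squared $p$-norm, and after $L$ peels the surviving factor is a unitary image of $|A_0)$ whose $p$-norm is $\normp{A_0}{p}$.  Taking the square root makes the bound time-independent, so the simplex integration contributes exactly $t^L/L!$.  The ordered site-sum at each scale then contributes $\sum_{j^{(q)}_1<\cdots<j^{(q)}_{\ell_q}\le r}\prod_s\norm{H_{q,j^{(q)}_s}}^2\le r^{\ell_q}\norm{H_{q,j}}^{2\ell_q}/\ell_q!$, the multinomial $L!/\prod_q\ell_q!$ telescopes with $1/L!$ to produce $\prod_q 1/\ell_q!$, and regrouping all factors by scale recovers the advertised bound~(\ref{eq:D4}).

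The hard part will be verifying that uniform smoothness can be applied in-place even when the irreducible coupling being peeled does not sit at the outermost time position.  This is precisely why the decreasing-right-site peeling order is essential.  Let $j^\star$ denote the current global maximum site.  Then every other factor in the expression --- the remaining irreducible couplings (all with sites $<j^\star$), the intermediate propagators $\e^{\CL^\beta_p}$ (which, under the strengthened $Y_p$, exclude every coupling with $j\ge j^\star$ as soon as all larger sites have been peeled off), and the initial state $|A_0)$ --- acts trivially on site $j^\star$, while $H_{\sigma(p^\star),j^\star}$ is traceless on site $j^\star$.  Both hypotheses of Prop.~\ref{prop:unif_subsystem2} therefore hold and the peel succeeds; a minor bookkeeping nuisance is that distinct scales may host a coupling at the same $j^\star$, which can be processed simultaneously and absorbed into the $\norm{H_{q,j}}$ prefactor of~(\ref{eq:D4}).
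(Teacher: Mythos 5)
Your proposal follows essentially the same route as the paper: resummation into a nested Duhamel integral (Lemma~\ref{lem:resum}), triangle inequality over the $L!/\prod_q\ell_q!$ interleaving patterns, then iterated uniform smoothness (Proposition~\ref{prop:unif_subsystem2}) to peel off the irreducible couplings in decreasing order of right-site, the simplex integral $t^L/L!$, and a regrouping that collects the $1/\ell_q!$ factors. Your arithmetic bookkeeping differs superficially --- you factor the ordered site-sum per scale, $\sum_{j_L>\cdots>j_1}\prod_p\norm{H}^2\le\prod_q\sum_{j_1^{(q)}<\cdots<j_{\ell_q}^{(q)}}\prod_s\norm{H}^2\le\prod_q r^{\ell_q}\norm{H_{q,j}}^{2\ell_q}/\ell_q!$, so the multinomial cancels cleanly against $1/L!$, whereas the paper keeps the joint $r^\ell/\ell!$ and invokes $\ell!\ge\prod_q\ell_q!$ at the end --- but both computations land on the stated bound.

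Two of your stated worries are actually non-issues, and it is worth seeing why. First, you propose to \emph{strengthen} the forbidden set $Y_p$ to exclude couplings with $j\ge\max_{p'<p}j(\beta_{p'})$, on the grounds that ``with multiple interleaved scales, the per-step right-site $j(\beta_{p-1})$ no longer equals the running global maximum.'' But the forward sequence in this paper is defined once, jointly over \emph{all} scales: a coupling enters the forward sequence only if its right-site exceeds the latest forward step by at least one, so the right-sites $j(\beta_1)<j(\beta_2)<\cdots<j(\beta_L)$ are strictly increasing regardless of which scale each $\beta_p$ lives on. Hence $j(\beta_{p-1})$ \emph{is} the running maximum, and the paper's $Y_p$ (which already lets $q'$ range over all scales) is your ``strengthened'' set. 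Second, for the same reason, two couplings in a forward sequence can never share a right-site $j^\star$, so the ``simultaneous processing'' patch you flag is not needed; each peeling step always faces a unique outermost coupling, exactly as in the single-scale case, and the uniform-smoothness hypotheses hold as stated. Your instinct to check these is sound, but the forward-sequence definition already builds in the monotonicity that makes them automatic.
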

\begin{proof}
The proof is nearly identical, but is slightly more tedious due to the multiple scales.  Let us first bound the number of ways that different scales can weave through each other, which is bounded by the multinomial coefficient \begin{equation}
   \sum_{\beta\in \CF_Z, \labs{\beta_q}=\ell_q}1 \le  \ell! \prod_{q\in Z}\frac{1}{\ell_q!}
\end{equation}where \begin{equation}
    \ell := \sum_{q\in Z}\ell_q.
\end{equation}For each sequence in $Z$, we obtain a factor of 
\begin{align}
    \lnormp{ \chi_\beta \e^{\CL t}|A_0)}{p}^2 \le  (4C_p)^\ell \sum_{j_\ell>\cdots> j_1} \norm{ H_{j_{\ell}} }^2\cdots \norm{ H_{j_{1}} }^2 \normp{A_0}{p}^2
    \le \frac{(4C_pr)^{\ell}}{\ell!} \normp{A_0}{p}^2 \prod_{j=1}^\ell \norm{H_{q,j}}^2.
\end{align}
Following the last steps of the prior lemma, we find that  \begin{align}
    \lnormp{\sum_{\beta\in \CF_Z, \labs{\beta_q}=\ell_q} \chi_\beta \e^{\CL t}|A_0)}{p} &\le \lVert A_0\rVert_p
    \sum_{\beta \in F_Z, |\beta_q|=\ell_q} \frac{t^\ell}{\ell!} \le \frac{1}{\sqrt{\ell!}}\lVert A_0\rVert_p \prod_{q\in Z}\frac{(2\sqrt{C_pr}\lVert H_{q,j}\rVert t)^{\ell_q}}{\ell_q!} \sqrt{\frac{(4C_pr)^\ell}{\ell!}} \prod_{j=1}^\ell \lVert H_{q,j}\rVert.
\end{align} To finish the proof, we simply use the loose bound \begin{equation}
    \ell! \ge \prod_{q\in Z} \ell_q!,
\end{equation}
which leads to (\ref{eq:D4}).
\end{proof}
At this point, we invoke the inclusion-exclusion of different scales found in \cite{alpha_3_chenlucas}: \begin{equation}
     \lnormp{\mathbb{P}_r \e^{\CL t}|A_0)}{p} = \lnormp{\mathbb{P}_r \sum_{Z\ne \emptyset} (-1)^{|Z|-1}\chi_Z \e^{\CL t}|A_0)}{p} \le \sum_{Z\ne \emptyset} \lnormp{\mathbb{P}_r \chi_Z \e^{\CL t}|A_0)}{p}.
\end{equation}
Together with Proposition \ref{fact:gen_incl_excl} (which immediately generalizes to multi-scale $Z$), with \begin{equation}
    N_Z := \sum_{q\in Z}N_q,
\end{equation}
we find that 
\begin{align}
    \lnormp{\mathbb{P}_r \chi_Z \e^{\CL t}|A_0)}{p} &= \lnormp{\sum_{\ell=N_Z}^r \left(\sum_{p=N_Z}^\ell (-1)^{\ell-p}\left(\begin{array}{c} \ell \\ p \end{array}\right)\right) \sum_{\beta \in \mathcal{F}_Z: |\beta|=\ell}\chi_\beta\e^{\CL t}|A_0)}{p} \notag \\
    &\le \sum_{\ell=N_Z}^r 2^\ell \sum_{\beta \in \mathcal{F}_Z: |\beta|=\ell}\lnormp{\chi_\beta\e^{\CL t}|A_0)}{p} \le \sum_Z \prod_{q\in Z}\sum_{\ell_q=N_q}^\infty \frac{(4\sqrt{C_p r}\lVert H_{q,j}\rVert t)^{\ell_q}}{\ell_q!^{3/2}}. \label{eq:D38}
\end{align}
In the second line we used (\ref{eq:binomsimple}), followed by Lemma \ref{lem:D4}.  

Now, let us suppose that \begin{equation}
    1\ge \frac{4\sqrt{C_pr}\lVert H_{q,i}\rVert t}{N_q^{3/2}} \label{eq:D39}
\end{equation}
for every scale $q$.  Assuming this inequality (which will fix the values of $t$ for which our bound is valid), then (\ref{eq:D38}) becomes \begin{equation}
    \lnormp{\mathbb{P}_r \e^{\CL t}|A_0)}{p} \le 2 \sum_Z \prod_{q\in Z}\frac{(4\sqrt{C_p r}\lVert H_{q,j}\rVert t)^{N_q}}{N_q!^{3/2}} \le -1 + \exp\left(2\sum_q \frac{(4\sqrt{C_p r}\lVert H_{q,j}\rVert t)^{\ell_q}}{\ell_q!^{3/2}}\right).
\end{equation}

It is useful to determine the first value $q_1$ at which a long $q$-forward path has a single coupling: $N_q=1$ for $q\ge q_1$.   (Recall $N_q$'s definition in (\ref{eq:Nq}).)   This occurs when \begin{equation}
\frac{M}{r}\ge \frac{1}{4}\frac{1}{(2^{q_1})^{(\alpha-1)\frac{2}{3}}}, \label{eq:DM}
\end{equation}
where we defined \begin{equation}
M = \sum_{q=0}^{q_*} 2^{-q(\alpha-5/2)\frac{2}{3} }.
\end{equation}
Then, noting that for any $j$, there exists a constant $0<c_1^\prime<\infty$ such that \begin{equation}
    \lVert H_{q,j}\rVert \le \sum_{i=0}^{j-1} \mathbb{I}(H_{ij} \in H_{q,k})\lVert H_{ij}\rVert  \le \frac{c_1^\prime}{4 (2^q)^{\alpha-1}},
\end{equation}
we find that there exist constants $0<c_2^\prime,c_2<\infty$ such that
\begin{align}
    \sum_{q} 2\frac{\L(4\sqrt{C_pr} \norm{H_{q,i}}t\R)^{N_q}}{N_q!^{3/2}} &\le \sum_{q} \L( \frac{c'_2\sqrt{pr}t}{2^{q(\alpha-1)} N_q^{3/2}} \R)^{N_q}\notag \\
    &\le \sum_{q=0}^{q_1-1} \L( \frac{c_2\sqrt{p}tM^{3/2}}{r} \R)^{N_q} + c_3\sqrt{pr}t \sum_{q=q_1}^{q_*} \frac{1}{2^{q(\alpha-1)}}.
\end{align}

We now analyze this sum for different ranges of $\alpha$.

\textbf{Case: $\alpha>5/2$.}
First, we note that $M$ does not depend on $r$, since even as $r\rightarrow \infty$ the sum in (\ref{eq:DM}) is convergent.   Secondly, we observe that $N_1>N_2>\cdots > N_{q_*}$, which follows from (\ref{eq:Nq}) and the fact that $N_q$ (before the floor function) changes by a factor of at least $2^{\frac{2}{3}(\alpha-1)}>2$ each time.  These two inequalities imply that there exist constants $0<c_{4,5,6}<\infty$ which do not depend on $r$ such that 
\begin{align}
    \sum_{q=0}^{q_1-1} \L( \frac{c_2\sqrt{p}tM^{3/2}}{r} \R)^{N_q} + c_3\sqrt{pr}t \sum_{q=q_1}^{q_*} \frac{1}{2^{q(\alpha-1)}} &\le \sum_{n=2}^{\infty} \L( \frac{c_2\sqrt{p}tM^{3/2}}{r} \R)^{n} + \frac{c_3\sqrt{pr}t}{2^{q_*(\alpha-1)}}\notag \\ 
    &\le c_4 \frac{\sqrt{p}t}{r-c_5\sqrt{p}t} + \frac{c_6\sqrt{p}t}{r}.
\end{align}
Observe that (\ref{eq:D39}) holds so long as \begin{equation}
    t \le \frac{N_q^{3/2}}{4\sqrt{pr}}2^{q(\alpha-1)} \le k_1^\prime \frac{r^{3/2}}{2^{q(\alpha-1)}\sqrt{pr}(4M)^{3/2}}2^{q(\alpha-1)} \le \frac{k_1r}{\sqrt{p}},
\end{equation}
for constants $0<k_1^\prime,k_1<\infty$ that do not depend on $r$.

\textbf{Case: $3/2 \le \alpha < 5/2$}.  Now we find that (for constants $0<c_{7}<\infty$ and an integer $m$ independent of $r$): \begin{equation}
M  = \sum_{q=0}^{q_*} 2^{-q(\alpha-5/2)\frac{2}{3} } = c_7 r^{(5/2-\alpha)\frac{2}{3}}, 
\end{equation}
and \begin{equation}
    q_1 = q_* - m.
\end{equation}
Then note that $
N_1 > N_4 > N_7\cdots ,$
because the argument of (\ref{eq:Nq}) now only varies by $2^{(\alpha-1 )2/3} \ge 2^{1/3}$ each time $q$ varies by 1.
Hence, we obtain (for constants $0<c_{8,9,10}<\infty$ independent of $r$):
\begin{align}
    \sum_{q=0}^{q_1-1} \L( \frac{c_2\sqrt{p}tM^{3/2}}{r} \R)^{N_q} + c_3\sqrt{pr}t \sum_{q=q_1}^{q_*} \frac{1}{2^{q(\alpha-1)}} &\le 3 \sum_{n=2}^{\infty} \L( \frac{c_2\sqrt{p}tM^{3/2}}{r} \R)^{n} + c_3\sqrt{pr}t \sum_{q=q_1}^{q_*} \frac{1}{2^{q(\alpha-1)}} \notag \\ 
    &\le \frac{c_8 \sqrt{p}t}{r^{\alpha-3/2} - c_9 \sqrt{p}t} + \frac{c_{10}\sqrt{p}t}{r^{\alpha-3/2}}.
\end{align}
Observe that (\ref{eq:D39}) holds so long as \begin{equation}
    t \le \frac{N_{q}^{3/2} }{4\sqrt{pr}}2^{q(\alpha-3/2)} \le \frac{r^{3/2}2^{q(5/2-\alpha)}2^{-3q/2}}{\sqrt{pr}M^{3/2}}2^{q(\alpha-1)} =k_2 \frac{r}{\sqrt{p}}{r^{5/2-\alpha}}= k_2 \frac{r^{\alpha-3/2}}{\sqrt{p}}.
\end{equation}
for $0<k_2<\infty$.

\textbf{Case : $\alpha=5/2$.}  We obtain \begin{equation}
    M=q_*+1.
\end{equation}
For $r$-independent constant $0<c_{11}<\infty$, \begin{equation}
    2^{q_1} = c_{11} \frac{r}{\log_2 r}.
\end{equation}
Again we have $N_1>N_2>\cdots > N_{q_1-1}>1$, for the same reason as when $\alpha>5/2$.  Hence for $0<c_{12,13,14}<\infty$,
\begin{align}
    \sum_{q=0}^{q_1-1} \L( \frac{c_2\sqrt{p}tM^{3/2}}{r} \R)^{N_q} + c_3\sqrt{pr}t \sum_{q=q_1}^{q_*} \frac{1}{2^{q(\alpha-1)}}. &\le  \sum_{n=2}^{\infty} \L( \frac{c_2\sqrt{p}t\ln(r)^{3/2}}{r} \R)^{n} + c_3\sqrt{pr}t \sum_{q=q_1}^{q_*} \frac{1}{2^{q(\alpha-1)}}\\ 
    &\le \frac{c_{12}\sqrt{p}t}{r\ln^{-3/2} r - c_{13}\sqrt{p}t} + \frac{c_{14} \sqrt{p}t \ln r}{r}.
\end{align}
Lastly, observe that (\ref{eq:D39}) is satisfied so long as \begin{equation}
    t \le \frac{N_q^{3/2}}{4\sqrt{pr}}2^{q} \le k_3 \frac{r^{3/2}}{\ln^{3/2}r} \frac{2^{q -q }}{\sqrt{pr}} = \frac{k_3}{\sqrt{p}} \frac{r}{\ln^{3/2}r}
\end{equation}for some constant $0<k_3<\infty$.

Now let us summarize our results. Recall the definition of $\mathcal{R}(r)$ in (\ref{eq:mathcalR}).  We have shown that for constants $0<k_{4,5}<\infty$, for times $t<k_4 \mathcal{R}(r)$,
\begin{align}
    \frac{ \normp{\BP_r \e^{\CL t}|A_0)}{p}}{\normp{A_0}{p}} \le \frac{k_5t}{\mathcal{R}(r)}.
\end{align}
This completes the proof.
\section{Proof of Proposition \ref{thm5}}\label{app:5}

\emph{Step 1:} We begin by developing the ``super-density matrix" picture of operator growth, following \cite{Lucas:2019aif}.  If $\mathcal{H}$ is the quantum mechanical Hilbert space, and (with a slight abuse of notation) $\mathcal{H}\otimes \mathcal{H}$ is a Hilbert space of all normalizable operators on $\mathcal{H}$ (in what follows, we will restrict to Hermitian operators), the super-density matrix is a normalizable element of $(\mathcal{H}\otimes \mathcal{H})\otimes(\mathcal{H}\otimes \mathcal{H})$.   It is easier to visualize with bra-ket notation:  using the operator ket $|A)$ introduced above, the pure super-density matrix corresponding to $A$ becomes $|A)(A|$.   It is straightforward to build a good basis for super-operators. Using the fact that the space of Hermitian operators acting on a single qubit is spanned by the orthonormal basis
\begin{align}
  |X^a) \;\;\; (a=0,1,2,3) :=  |I),|X),|Y),|Z) \in \BR^{1+3} \label{eq:BR13}
\end{align}
endowed with the canonical inner product in $\BR^{1+3}$, a basis for super-operators on a single qubit is evidently $|X^a)(X^b|$ for $a,b=0,1,2,3$.   The standard tensor product between lattice sites then allows us to build up a good basis for our super-operator space.

Observe that the Frobenius norm is now simply (up to square) the super-operator trace \begin{equation}
    \mathrm{Tr} [|A)(A| ] := (A|A) = \lVert A \rVert_{\mathrm{F}}^2.
\end{equation}
The latter equality follows from (\ref{eq:innerproduct}).  As a consequence, evaluating the Frobenius norms of projected operators in this super-operator language will be particularly simple -- we simply pick out the basis operators which we wish to keep, and sum the coefficients of the diagonal elements of the pure super-density matrix.  In particular, given the growing operator $X_0(t)$, we can define the probabilities $p_S(t)$ (for subsets $S\subset\Lambda$) as (recall (\ref{eq:pS})) \begin{equation}
    |X_0(t))(X_0(t)| := \sum_{S\subset \Lambda} p_S(t) |X_0(t)_S) (X_0(t)_S| + \text{off-diagonal terms}.
\end{equation}
Much of the proof that follows will amount to bounding (sums of) $p_S(t)$ throughout the protocol. 
To avoid confusing the super-operator spaces from operators, we here define the adjoint operation acting on \textit{operators} $\mathcal{H}\otimes \mathcal{H} \rightarrow \mathcal{H}\otimes \mathcal{H}$  \begin{equation}
    \mathrm{Adj}_{\mathrm{i}A} (X):= |\mathrm{i}[A,X]).
\end{equation}
and the conjugation operation \begin{equation}
    \mathrm{Conj}_B | X) := | B^\dagger XB ).
\end{equation}
We define the super-depolarizing channel, acting on \textit{super-operators} $(\mathcal{H}\otimes \mathcal{H})\otimes(\mathcal{H}\otimes \mathcal{H}) \rightarrow (\mathcal{H}\otimes \mathcal{H})\otimes(\mathcal{H}\otimes \mathcal{H})$ as \begin{align}
    \mathcal{D} := \mathbb{E}_D \left[ \left(\mathrm{Conj}_D\right)(\cdot)\left(\mathrm{Conj}_D\right)^\dagger \right].
\end{align}
Recall the depolarizer $D$ is a random tensor product unitary defined in the main text.  To give a concrete example:
\begin{align}
    \mathcal{D}[|X)(Y|] = \BE_D |D^\dagger X D)(D^\dagger Y D|.
\end{align}
We define the $q$-scale growth channel as \begin{equation}
    \mathcal{V}_q := \mathbb{E}_J \left[ \left(\mathrm{Conj}_{V_q}\right)(\cdot)\left(\mathrm{Conj}_{V_q}\right)^\dagger \right].
\end{equation}

Expectation values are taken over the random $D_x$ in (\ref{eq:Dx}), and random $J_{xy}$ in (\ref{eq:HZZq}), respectively.  We thus see that our (averaged) protocol as a ``quantum super-channel", defined via the analogue of (\ref{eq:induction}): \begin{equation}
    \mathcal{M}_q := \mathcal{M}_{q-1} \mathcal{D} \mathcal{V}_q \mathcal{M}_{q-1}.  \label{eq:induction2}
    \end{equation}

\emph{Step 2:} We now turn to the analysis of the super-depolarizing channel $\mathcal{D}$.   Let us define the ``maximally mixed non-trivial operator" (on a single site) \begin{equation}
    \mu := \frac{|X)(X| + |Y)(Y|+ |Z)(Z|}{3}.
\end{equation}
In a nutshell, the essence of this proof is that all we need to keep track of is whether or not each site has a $\mu$ or the trivial operator\begin{equation}
    \mathcal{I} = |I)(I| 
\end{equation}
on it: namely, we will show that after each step of the random protocol, \begin{equation}
    \CD\mathcal{V}_q |X_0)(X_0| = \sum_{S\subset\Lambda} \rho_S \bigotimes_{i\in S}\mu_i \bigotimes_{j\in S^c}I_j. \label{eq:Vqmu}
\end{equation}  

The depolarizing channel is what makes this simplification possible, as formalized via the following:

\begin{prop}\label{prop:depolarize}
For any $a,b=1,2,3$ (and on every site independently),
\begin{subequations}\label{eq:depolarize}\begin{align}
    \mathcal{D} |I)(I| &= |I)(I|,\\
    \mathcal{D} |I)(X^a| =\mathcal{D} |X^a)(I|&=0, \\
    \mathcal{D} |W^a)(W^b| &= \delta^{ab}\mu.
\end{align}\end{subequations}
\end{prop}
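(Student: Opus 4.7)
The plan is to reduce to a single-qubit calculation and then exploit the group invariance of the uniform measure on $G$. Since $D = \bigotimes_x D_x$ with each $D_x$ drawn independently and uniformly from $G$, and since both sides of \eqref{eq:depolarize} factorize across sites (with $|I)(I|$ on every inactive site fixed by any local depolarization), it suffices to prove the three identities on a single qubit.

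On one qubit, I would expand $D^\dagger X^a D = \sum_{c=0}^{3} S(D)^a_c X^c$ in the Pauli basis $X^0 = I, X^1 = X, X^2 = Y, X^3 = Z$. Two structural features are essential: (i) $S(D)^0_c = \delta^0_c$ and $S(D)^a_0 = 0$ for $a \ge 1$, since conjugation by a unitary preserves $I$ and tracelessness; and (ii) the $3\times 3$ ``non-trivial block'' of $S(D)$ is orthogonal, since conjugation preserves $\tr(X^a X^b) = 2\delta^{ab}$. Observation (i) immediately delivers $\CD|I)(I| = |I)(I|$. Moreover, $S$ is a homomorphism, $S(D_1 D_2) = S(D_1) S(D_2)$, so the uniform measure on $G$ is invariant under both left- and right-multiplication by any element. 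Using $\pm\mathrm{i}X^e \in G$, and the fact that $S(\mathrm{i}X^e)$ is the diagonal matrix with $S(\mathrm{i}X^e)^c_c = +1$ iff $c \in \{0, e\}$ and $-1$ otherwise, the substitution $D \mapsto D \cdot \mathrm{i}X^e$ yields $\mathbb{E}[S(D)]^a_c = S(\mathrm{i}X^e)^c_c \, \mathbb{E}[S(D)]^a_c$. For any $c \in \{1,2,3\}$ one picks $e \in \{1,2,3\}\setminus\{c\}$ so this prefactor is $-1$, forcing $\mathbb{E}[S(D)]^a_c = 0$ throughout the non-trivial block. This gives the second identity $\CD|I)(X^a| = \CD|X^a)(I| = 0$.

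For the third identity I would work with $T^{ab}_{cd} := \mathbb{E}_D[S(D)^a_c S(D)^b_d]$ for $a,b,c,d \in \{1,2,3\}$. The substitution $D \mapsto D \cdot \mathrm{i}X^e$ gives $T^{ab}_{cd} = S(\mathrm{i}X^e)^c_c\, S(\mathrm{i}X^e)^d_d\, T^{ab}_{cd}$; for $c \ne d$ one picks $e = c$ to force the prefactor to $-1$, so $c = d$ is required. Analogously, $D \mapsto \mathrm{i}X^e \cdot D$ forces $a = b$. Next, I would verify that one of the Hadamard-like elements in the last row of \eqref{eq:groupG}, say $D_{\mathrm{cyc}} = (1 - \mathrm{i}(X + Y + Z))/2$, conjugates $\{X, Y, Z\}$ by a nontrivial permutation (an explicit calculation gives $D_{\mathrm{cyc}}^\dagger X D_{\mathrm{cyc}} = Z$, with the remaining images fixed by the $X \leftrightarrow Y \leftrightarrow Z$ symmetry of $D_{\mathrm{cyc}}$). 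Substituting $D \mapsto D \cdot D_{\mathrm{cyc}}$ and $D \mapsto D_{\mathrm{cyc}} \cdot D$ into $T^{aa}_{cc}$ shows this quantity is invariant under independent permutations of $a$ and of $c$, hence is a constant $T$. Finally, orthogonality of $S(D)$ gives $\sum_{c=1}^{3} S(D)^a_c S(D)^a_c = 1$ for each $a \ge 1$, so taking expectations yields $3T = 1$ and $T = 1/3$. Summing over $c, d \in \{1,2,3\}$ then reconstructs $\mu$, delivering $\CD|W^a)(W^b| = \delta^{ab}\mu$.

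The main bookkeeping obstacle is verifying that the Hadamard-like elements in \eqref{eq:groupG} act as nontrivial permutations of $\{X, Y, Z\}$ under conjugation; this requires a short explicit computation, after which the sign and permutation structure of $S(D_0)$ for $D_0 \in G$ is fixed and the invariance argument becomes purely algebraic. As a slicker but less elementary alternative, one may note that $G$ generates the full single-qubit Clifford group and hence forms a unitary $2$-design, so the Haar formula for $\mathbb{E}_U[U^\dagger A U \otimes U^\dagger B U]$ applies and reproduces \eqref{eq:depolarize} after specializing to $A, B \in \{I, X, Y, Z\}$; however, the invariance argument above has the benefit of being elementary and self-contained, and it makes transparent that the only role played by $G$ is to contain the Pauli subgroup together with a single additional element that non-trivially permutes the Pauli axes.
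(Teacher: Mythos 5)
Your proof is correct, and it takes a genuinely different route from the paper. The paper's proof is representation-theoretic: it observes that $\mathcal{D}$ is the projector onto the $G$-invariant subspace of super-operators, decomposes $\mathrm{span}(I,X,Y,Z)$ into irreducibles $\mathrm{span}(I)\oplus\mathrm{span}(X,Y,Z)$, and invokes Schur's lemma to conclude that any invariant map must be of the form $a\,\mathcal{I}+3b\,\mu$, finally fixing $a=1,b=1/3$ by trace preservation and complete positivity. Your proof instead works directly with the adjoint-representation matrix $S(D)$: it uses the homomorphism property and the left/right invariance of the uniform measure under multiplication by the Pauli sign-flips $\mathrm{i}X^e$ to kill all off-diagonal entries of $\mathbb{E}[S(D)]$ and of the second-moment tensor $T^{ab}_{cd}$, then uses one additional Hadamard-like element that permutes the Pauli axes to enforce equality of the surviving diagonal entries, and closes with the row-orthogonality of $S(D)$ to pin $T=1/3$. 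Both arguments are valid, but yours is more elementary and more transparent about precisely what structure in $G$ is being used (namely: the Pauli subgroup plus one nontrivial axis-permuting element). It also sidesteps a small inaccuracy in the paper, which identifies $G$ with $S_4\times\mathbb{Z}_2$ when it is in fact the binary octahedral group $2O$, a non-split central extension of $S_4$; this does not affect the paper's conclusion since the irreducibility of $\mathrm{span}(X,Y,Z)$ is what matters, but your approach avoids appealing to the isomorphism type altogether. Your optional 2-design remark at the end is also sound, since two-design-hood depends only on the image of $G$ modulo global phase, which is the single-qubit Clifford group modulo phase.
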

\begin{proof}
The key idea behind this proof is to use the group theoretic structure of the random unitaries $D_x$ (for simplicity in what follows, we drop the $x$ subscript, since $\mathcal{D}$ is simply a tensor product of the channel defined via (\ref{eq:depolarize}) on every site anyway).  By definition in our protocol, \begin{equation}
    \mathcal{D} := \frac{1}{48} \sum_{D \in G} \left[ \left(\mathrm{Conj}_D\right)(\cdot)\left(\mathrm{Conj}_D\right)^\dagger \right],
\end{equation}
as 48 is the number of group elements in $G$, defined in (\ref{eq:groupG}).  By construction, $\mathcal{D}$ takes any $4\times 4$ matrix $M$ acting on the space of operators on a qubit, defined in (\ref{eq:BR13}), and projects it onto $G$-invariant maps.   

To find these irreducible representations, we first observe that $G$ is isomorphic to the ``double cover" of the group of rotations which leave invariant a three-dimensional cube, which is in turn isomorphic to $\mathrm{S}_4 \times \mathbb{Z}_2$ (a subgroup of SU(2)). By standard representation-theoretic computation, we find that $\mathbb{R}^{1+3}$ decomposes into two irreducible representations of $G$: \begin{equation}
    \mathrm{span}(I,X,Y,Z) = \mathrm{span}(I) \oplus \mathrm{span}(X,Y,Z)
\end{equation}
By Schur's Lemma, the only $G$-invariant matrices obeying $\mathcal{D}M = M$ are therefore of the form \begin{equation}
    M = a |I)(I| + b |X)(X| + b|Y)(Y| + b|Z)(Z| = a\mathcal{I} + 3b \mu .
\end{equation}
We then obtain (\ref{eq:depolarize}) by using the fact that $\mathcal{D}$ is a  probability-weighted linear sum of unitary operators, and therefore is completely positive and trace preserving.  This condition fixes $a=1$ and $b=\frac{1}{3}$.
\end{proof}

Combining this proposition with (\ref{eq:induction2}), we immediately find:
\begin{cor} \label{cor54}
For any super-operator $\rho$, \begin{equation}
    \mathcal{D} \rho = \sum_{S\subset\Lambda} \rho_S \bigotimes_{i\in S}\mu_i \bigotimes_{j\in S^c}I_j.
\end{equation}
In our random protocol, (\ref{eq:Vqmu}) holds, even in intermediate protocol steps, after any application of $\mathcal{D}$.
\end{cor}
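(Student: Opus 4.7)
The plan is to reduce the statement to Proposition~\ref{prop:depolarize} via a tensor-product basis expansion. The key observation is that $D = \bigotimes_x D_x$ is a tensor product with independent factors, so the averaged super-channel $\mathcal{D} = \bigotimes_x \mathcal{D}_x$ factorizes across sites; applying $\mathcal{D}$ therefore reduces to applying the single-site rules of Proposition~\ref{prop:depolarize} independently at every site. The nontrivial content of that proposition is that for every nonzero Pauli index $a \in \{1,2,3\}$ one has $\mathcal{D}_x |X^a)(X^a| = \mu_x$, yielding the \emph{same} canonical super-operator $\mu_x$ regardless of which particular nonzero Pauli was present.

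Concretely, I would expand an arbitrary super-operator in the tensor-product Pauli basis as
\begin{equation*}
\rho \;=\; \sum_{\vec a, \vec b} c_{\vec a, \vec b} \bigotimes_{x\in\Lambda} |X^{a_x})(X^{b_x}|,
\end{equation*}
with indices $\vec a, \vec b \in \{0,1,2,3\}^\Lambda$ and the convention $X^0 := I$. By Proposition~\ref{prop:depolarize} applied site by site, the factor at $x$ vanishes unless either $a_x = b_x = 0$ (in which case it contributes $I_x$, meaning $|I)(I|$ on site $x$) or $a_x = b_x \in \{1,2,3\}$ (in which case it contributes $\mu_x$). Grouping the surviving terms by the subset $S := \{x : a_x \neq 0\}$ yields
\begin{equation*}
\mathcal{D}\rho \;=\; \sum_{S\subset\Lambda} \rho_S \bigotimes_{i\in S}\mu_i \bigotimes_{j\in S^c}I_j, \qquad \rho_S \;:=\; \sum_{\vec a : \mathrm{supp}(\vec a)=S} c_{\vec a,\vec a},
\end{equation*}
which is precisely the first part of the corollary.

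For the second assertion, the recursive construction (\ref{eq:induction2}) shows that every intermediate state appearing immediately after an application of $\mathcal{D}$ is, by definition, the image of some super-operator under $\mathcal{D}$, so the first part of the corollary applies verbatim to it. In particular the state $\mathcal{D}\mathcal{V}_q|X_0)(X_0|$ appearing in (\ref{eq:Vqmu}), and more generally $\mathcal{D}\mathcal{V}_q\mathcal{M}_{q-1}|X_0)(X_0|$, both take the claimed form. No real obstacle arises; the only conceptual point worth emphasizing is that the depolarizer collapses all information about the particular choice of nonzero Pauli at each site into the single super-operator $\mu_x$, which is exactly what licenses tracking only the ``occupied'' subset $S$ of sites in subsequent stages of the protocol analysis.
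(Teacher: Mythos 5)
Your proposal is correct and follows precisely the route the paper intends: factorize $\mathcal{D}=\bigotimes_x \mathcal{D}_x$, expand $\rho$ in the tensor-product Pauli superoperator basis, apply Proposition~\ref{prop:depolarize} site by site (only diagonal terms with $a_x=b_x$ survive, collapsing to $\mu_x$ or $I_x$), and regroup by the support set $S$. The paper states the corollary as an immediate consequence without spelling out these details, and your write-up supplies exactly the intended filling-in, including the observation that the second assertion is automatic from the recursive form (\ref{eq:induction2}).
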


\emph{Step 3:} Corollary \ref{cor54} shows that the only information we need to keep track of, after depolarizing, is the probability $p_S$ that our growing operator is supported on the subset $S$.    We define the number super-operators:
\begin{subequations}\begin{align}
    \hat{N} &= \sum_{x\in\Lambda} \hat{N}_x, \\  \hat{N}_x&:=\big(|X)(X| + |Y)(Y| + |Z)(Z|\big)_x = 3 \mu_x.
\end{align}\end{subequations}
However, due to Corollary \ref{cor54}, we can actually think of each of these as \emph{classical random variables}.  In particular, after applying (averaged) super-channel $\mathcal{DU}$ (for any $\mathcal{U}$),  the probability distribution $\hat{\BP}$ on the variables $\hat{N}_x \in \lbrace 0,1\rbrace$ can be defined (with a slight abuse of notation) via \begin{align}
    \hat{\BP}[\hat{N}_{x_1}=\hat{N}_{x_2}&=\cdots = \hat{N}_{x_m}=1, \hat{N}_{y_1}=\cdots = \hat{N}_{y_n}=0] \notag \\
    &:= \mathrm{Tr}\left[ \hat{N}_{x_1}\cdots \hat{N}_{x_m}(1-\hat{N}_{y_1})\cdots (1-\hat{N}_{y_n}) \mathcal{DU} |X_0)(X_0| \right].
\end{align}
In words, we consider the basis which all $\hat{N}$ are simultaneously digonalized.  The classical probability distribution $\hat{\BP}$ counts the weight of $\mathcal{DU} |X_0)(X_0|$ that satisfies the boolean quantifiers represented by the product of $\hat N$s and $(1-\hat N)$s above: namely, the sum of all diagonal elements of $\mathcal{DU} |X_0)(X_0|$ compatible with $\hat N_{x_1}=1$, etc.  The well-posedness of $\hat{\BP}$ is guaranteed by the fact that $\mathcal{DU}$ is completely positive and trace preserving, and the observables $N$ are commuting.  Knowing the classical probabilities $p_S$ (corresponding to the probability that $\hat N_x = 1$ if and only if $x\in S$) is enough to know $\mathcal{DU}|X_0)(X_0|$.

Armed with this knowledge, we are ready to lay out the foundations for the remainder of our inductive proof.  The induction hypothesis we begin with is that for any set $S\subset C_0 \in \mathcal{B}_{q-1}$, with $C_0$ the $(q-1)$-scale cube containing the origin, \begin{align}
         \hat{\BP}\left(\hat{N}\CM _{q-1} \bigotimes_{i\in S} \mu_i  \ge \lambda_{1,q-1} R^d_{q-1}\right) \ge \eta_{1,q-1}. \label{eq:induchyp}
    \end{align}
Here $0 < \lambda_{1,q-1},\eta_{1,q-1}<1$ are constants that we will obtain a little later; in particular though, $\eta_{1,q-1}$ will be interpreted as a lower bound on the \emph{success probability} of $\mathcal{M}_{q-1}$ -- namely, the probability that it grows an operator to have support on $S\subset C_0$ with $S$ containing fraction $\ge \lambda_{1,q-1}$ of all sites in $C_0$.

Next, we condition on the assumption that at least $s:= \lambda_{1,q-1}R_{q-1}^d$ sites are occupied.  This throws away a fraction of at most $1-\eta_{1,q-1}$ of the operator (an amount that we will see is small).  We will then show that if we run the unitary $V_q$ for sufficiently long time,
we will seed more than $3.3\%$ of the $(q-1)-cubes$ in the $q$-cube containing the origin, with probability $\eta_{2,q-1}$:  see Lemma \ref{lem:ZZ-step}.  We note that $1-\eta_{2,q-1}$ decays exponentially with $m^d$ and $s$.

Lastly, we apply $\CM _{q-1}$, which will attempt to grow the operators in $(q-1)$-cubes $C_{q-1}(\vec{k})$ that we seeded above into large Pauli strings. Conditioned on there are at least $n_s$ seeded blocks $C_{q-1}$, we will show in Proposition \ref{prop:expand_after_seeding} that with probability $\eta_{3,q-1}$, a rather large number of sites (to be quantified later) will be occupied (when considering all $(q-1)$-cubes together).  This proves the inductive hypothesis.  We summarize the way that operators grow throughout this process in Figure \ref{fig:full_protocol}.

\begin{figure}[t]
    \centering
    \includegraphics[width=0.9\textwidth]{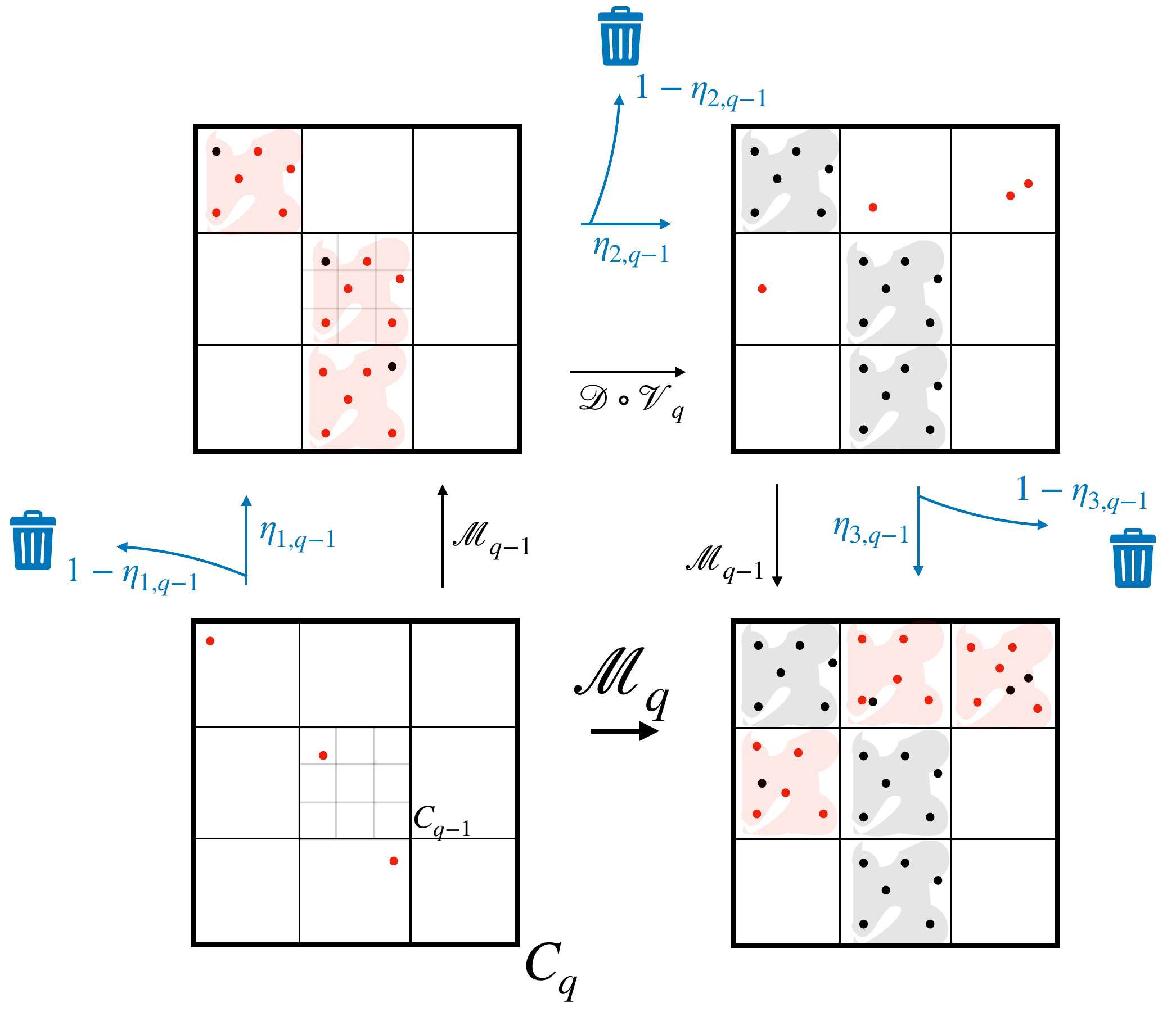}
    \caption{The rigorous operator growth protocol, in which we no longer guarantee full occupancy of all sites.  Instead, we only get to keep a fraction of occupancy at each recursion, albeit with high probability.  Using the central limit theorem, the failure probabilities $1-\eta$ is suppressed by rate of expansion $e^{-\CO(m)}$.  The choice of $m\sim \exp(\sqrt{\ln r})$ is large enough to render these probabilities mild, and leave us with (\ref{eq:bigpropeqn}).
    }
    \label{fig:full_protocol}
\end{figure}

\emph{Step 4.} We now analyze the $V_q$ operator growth step of our protocol in detail.  The key observation is that, upon averaging over all possible $V_q$, the superchannel $\mathcal{V}_q$ takes an arbitrary operator of sufficient size in any $(q-1)$-cube $C_{q-1}\subset C_q$, and ``seeds" new $\mu$s in a finite fraction of the remaining $(q-1)$-cubes $C^\prime_{q-1} \subset C_q$ which are contained within the $q$-cube $C_q$ (see Figure \ref{fig:full_protocol}).   More precisely, we have:

\begin{lem}\label{lem:ZZ-step}
Let $C_q \in \mathcal{B}_q$.  For any subset $S\subseteq C_q$ of size $|S|\ge s$, set the time
\begin{align}
    \tau_q: =  \frac{R^\alpha_q}{\sqrt{2 s R^d_{q-1}}}. \label{eq:tauqdef}
\end{align}
Define the superoperator \begin{equation}
    \rho^\prime = \mathcal{V}_q \left( \mu^{\otimes S} \mathcal{I}^{\otimes (C_q-S)}\right). \label{eq:rhoprime}
\end{equation} 
Let $C \in \mathcal{B}_{q-1}$ denotes one of the $(q-1)$-cubes contained within $C_q$, and define the classical random variables \begin{equation}
    \hat{y}_C := \indicator\left(\sum_{x\in C} \hat{N}_x > 0 \right).
    \end{equation}
Then 
\begin{align}
1-\eta_{2,q-1}: =\hat{\BP}_{\rho^\prime}\left[  \sum^{m^d}_{C=1} \hat{y}_C \le \frac{m^d}{30} \right] \le \exp\left(-\frac{m^d}{120} \right)+ \exp\left( -\frac{s}{72} \right) \label{eq:eta2bound}
\end{align}
where $\hat{\BP}_{\rho^\prime}$ denotes the probability distribution obtained from (\ref{eq:rhoprime}).  For notational simplicity, we have labeled the $m^d$ $(q-1)$-cubes in $C_q$ as $C=1,\ldots, m^d$.
\end{lem}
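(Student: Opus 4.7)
The plan is to reduce the quantum statement to a classical concentration bound by exploiting the commuting $ZZ$ structure of $V_q$ and the $J \to -J$ symmetry of the random couplings, then applying Chernoff-type estimates.

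First, I would compute $V_q^\dagger X_x V_q$ explicitly for each $x \in S$. With $\theta_{xy}=\tau_q J_{xy}/(dR_q)^\alpha$ and $Q_x = \sum_{y \in C_q \setminus \{x\}}\theta_{xy}Z_y$, a standard anticommutation computation gives
\[
V_q^\dagger X_x V_q = X_x\cos(2Q_x) - Y_x\sin(2Q_x),
\]
with an analogous identity for $Y_x$; the $Z_x$ component of $\mu_x$ is fixed by $V_q$. Expanding the cosine and sine over subsets $T \subseteq C_q \setminus \{x\}$ yields a sum of Pauli strings $W_T Z_T$ with $W_T \in \{X_x, Y_x\}$ determined by the parity of $|T|$, carrying coefficients $\alpha_T^x = \prod_{y\in T}\sin(2\theta_{xy})\prod_{y\notin T, y\neq x}\cos(2\theta_{xy})$ up to sign.

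Second, I would take the $J$-average to collapse $\rho'$ to a diagonal super-operator with an explicit classical interpretation. In the expansion of $|V_q^\dagger X_x V_q)(V_q^\dagger X_x V_q|$ over pairs $(T,T')$, any cross term with $T\neq T'$ contains some $y\in T\triangle T'$ producing a factor $\mathbb{E}_J[\sin(2\theta_{xy})\cos(2\theta_{xy})]=\tfrac{1}{2}\mathbb{E}_J[\sin(4\theta_{xy})]=0$ by the symmetry of $J_{xy}$, so every cross term vanishes. The surviving $T=T'$ diagonal carries weight $\prod_y \mathbb{E}_J[\sin^2(2\theta_{xy})]^{\mathbb{I}[y\in T]}\mathbb{E}_J[\cos^2(2\theta_{xy})]^{\mathbb{I}[y\notin T]}$; since $\mathbb{E}_J[\sin^2]+\mathbb{E}_J[\cos^2]=1$, this lets me realize $\rho'=\sum_P w_P|P)(P|$ via the following classical process. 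Independently for each $x\in S$ draw $a_x$ uniformly from $\{1,2,3\}$ (indexing the three terms in $\mu_x$); if $a_x=3$ then site $x$ carries $Z_x$ and contributes nothing elsewhere, while if $a_x\in\{1,2\}$, for each $y\in C_q\setminus\{x\}$ independently draw $b_{xy}\sim\mathrm{Bernoulli}(\mathbb{E}_J[\sin^2(2\theta_{xy})])$. For $y\in C_q\setminus S$ the Pauli on $y$ is $Z_y$ iff $n_y := \sum_{x\in S}\mathbb{I}[a_x\neq 3]\, b_{xy}$ is odd, and for $y\in S$ the Pauli is always non-identity since each of $V_q^\dagger X_y V_q, V_q^\dagger Y_y V_q, V_q^\dagger Z_y V_q$ acts non-trivially on $y$. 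Consequently $\hat y_C=1$ deterministically whenever $C\cap S\neq\emptyset$, so only $(q-1)$-cubes disjoint from $S$ require probabilistic analysis.

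Third, I would lower-bound the per-cube seeding probability and apply concentration. Conditional on the $a_x$'s, the $\hat y_C$'s are independent across different $C$'s (disjoint $J_{xy}$'s), and within each $C$ the parities $n_y$ are independent across $y$. A small-angle expansion gives $\mathbb{E}_J[\sin^2(2\theta_{xy})] \approx \tfrac{4}{3}\tau_q^2/(dR_q)^{2\alpha}$; combined with the calibrated $\tau_q = R_q^\alpha/\sqrt{2sR_{q-1}^d}$ and $|S'|\geq s/2$ for $S':=\{x\in S: a_x\neq 3\}$, I obtain $\sum_{y\in C}\mathbb{P}[n_y \text{ odd}\mid a]\geq c_0$ for a constant $c_0=c_0(\alpha,d)>0$, hence $\mathbb{P}[\hat y_C=1\mid a]\geq 1-e^{-c_0}$ can be arranged to be at least $1/15$. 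A Hoeffding bound on the iid uniform $a_x$'s then gives $\mathbb{P}[|S'|<s/2]\leq e^{-s/72}$, and conditional on $|S'|\geq s/2$ the $m^d$ independent Bernoullis $\hat y_C$ each succeed with probability $\geq 1/15$, so a multiplicative Chernoff bound yields $\mathbb{P}[\sum_C \hat y_C \leq m^d/30 \mid a]\leq \exp(-m^d/120)$; a union bound delivers the advertised estimate. The main obstacle is the second step: carefully verifying that the $J$-average annihilates all off-diagonal Pauli-basis terms and leaves a clean product of independent Bernoullis, since everything that follows is standard concentration.
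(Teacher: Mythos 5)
Your reduction of the super-channel $\mathcal{V}_q$ to a classical Bernoulli product is sound, and it mirrors the paper's approach: the paper writes $\mu=\tfrac{2}{3}\nu+\tfrac{1}{3}\mathcal{Z}$ and shows a single $\e^{-\mathrm{i}\theta Z_xZ_y}$ conjugation, once $J$-averaged, preserves $\nu$ on the ``source'' site and acts as a two-state Markov chain on $\{\mathcal{I},\mathcal{Z}\}$ for the ``target'' site, giving exactly your $p_\ell$ parity story. Your observation that the off-diagonal (cross-$T$) terms vanish under the $J\to -J$ symmetry is the reason eq.~(\ref{eq:p155}) in the paper is already a classical Markov transition; the two descriptions are equivalent. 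The Hoeffding bound on $|S'|\ge s/2$ and the Chernoff bound on $\sum_C\hat y_C$ then match the paper's Steps 4 and 5.

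However, there is a genuine gap in the claim ``for $y\in S$ the Pauli is always non-identity'' and the resulting ``$\hat y_C=1$ deterministically whenever $C\cap S\ne\emptyset$.'' This is false for sites $y\in S$ with $a_y=3$, i.e.\ those that drew $Z_y$. While $V_q^\dagger Z_y V_q = Z_y$ when applied to $Z_y$ alone, the evolving object is the full Pauli string, and a $Z_xZ_y$ coupling triggered by an active site $x\in S'$ flips the $\{I_y,Z_y\}$ content at $y$: concretely $\e^{\mathrm{i}\theta Z_xZ_y}(Z_yX_x)\e^{-\mathrm{i}\theta Z_xZ_y}=\cos(2\theta)\,Z_yX_x-\sin(2\theta)\,Y_x$, and the second term carries $I_y$. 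So a $Z$-site in $S$ can become identity with probability $p_\ell$, exactly as the paper records through the factor $(1-p_\ell)\mathcal{Z}_j+p_\ell\mathcal{I}_j$ in eq.~(\ref{eq:Vqrho}). The correct deterministic statement is $\hat y_C=1$ whenever $C\cap S'\ne\emptyset$. Cubes with $C\cap S\subset S\setminus S'$ (only $Z$-sites, no $X/Y$-sites) still need the probabilistic analysis, and your proposal as written omits them.

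Fortunately the gap is patchable without changing the final constants. Since $p_\ell\le\tfrac{1}{2}$, a $Z$-site in $S\setminus S'$ survives with probability $1-p_\ell\ge p_\ell$, which dominates the newly-seeded probability $p_\ell$ at non-$S$ sites. Hence, as in the paper's eq.~(\ref{eq:pstar}), one has the uniform bound $\hat{\BP}[\hat y_C=0]\le(1-p_\ell)^{|C|}$ for every $C$, regardless of how $C$ intersects $S$, and the Chernoff step goes through unchanged. The cleaner route is not to single out cubes meeting $S$ at all, but simply observe that every site (in $S'$, in $S\setminus S'$, or outside $S$) has occupancy probability at least $p_\ell$.
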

\begin{proof}
First, we re-write $\mathcal{V}_q$ in a more elegant way, using the fact that all $ZZ$ terms in $H^{ZZ}_q$ commute with each other: \begin{align}
    \mathcal{V}_q &= \BE_{J}\left[ \exp\left[ \frac{\tau_q}{R_q^\alpha} \sum_{\lbrace i,j\rbrace \subset C_q} J_{ij} \mathrm{Adj}_{\mathrm{i}Z_i Z_j} \right] \right](\cdot)\left[ \exp\left[ \frac{\tau_q}{R_q^\alpha} \sum_{\lbrace i,j\rbrace \subset C_q} J_{ij} \mathrm{Adj}_{\mathrm{i}Z_i Z_j} \right] \right]^\dagger \notag \\
    &=\BE_J \prod_{\lbrace i,j\rbrace \subset C_q}\ \ \left[ \exp\left[ \frac{\tau_q}{R_q^\alpha} J_{ij}\mathrm{Adj}_{\mathrm{i}Z_i Z_j}\right]\right](\cdot)\left[ \exp\left[ \frac{\tau_q}{R_q^\alpha} J_{ij}\mathrm{Adj}_{\mathrm{i}Z_i Z_j}\right]\right]^\dagger, \label{eq:Vq55}
\end{align}
Since $Z$ terms commute with $Z$s, but not with $X$s or $Y$s, let us define \begin{equation}
    \nu = \frac{|X)(X| + |Y)(Y|}{2}, \;\;\;\; \mathcal{Z} = |Z)(Z|
\end{equation}
such that \begin{equation}
\mu = \frac{2}{3}\nu + \frac{1}{3}\mathcal{Z}. \label{eq:munu}
\end{equation}

From (\ref{eq:Vq55}), we can first calculate what a single $Z_iZ_j$ coupling does to $\mu_i$ or $\mu_i\mu_j$; the application of $\mathcal{V}_q$ will then correspond to repeating this procedure on all pairs of sites in the cube $C_q$.   It is straightforward to show that (temporarily denoting $\theta_{ij} := J_{ij}\tau_q R_q^{-\alpha}$)\begin{subequations}\begin{align}
    \exp[\theta \mathrm{Adj}_{\mathrm{i}Z_iZ_j}]\ \mathcal{I}_i \mathcal{I}_j\ \exp[\theta \mathrm{Adj}_{\mathrm{i}Z_iZ_j}]^\dagger&= \mathcal{I}_i \mathcal{I}_j, \\
    \exp[\theta \mathrm{Adj}_{\mathrm{i}Z_iZ_j}]\ \nu_i \nu_j\ \exp[\theta \mathrm{Adj}_{\mathrm{i}Z_iZ_j}]^\dagger  &= \nu_i \nu_j.
\end{align}\end{subequations}
Moreover, since \begin{equation}
    \exp[\theta \mathrm{Adj}_{\mathrm{i}Z_iZ_j}] \left(\begin{array}{c} |X_iI_j) \\ |Y_iZ_j)  \end{array}\right) = \left(\begin{array}{cc} \cos(2\theta) &\ \sin(2\theta) \\ -\sin(2\theta) &\ \cos(2\theta) \end{array}\right) \left(\begin{array}{c} |X_iI_j) \\ |Y_iZ_j)  \end{array}\right),
\end{equation}
we conclude that \begin{equation}
   \mathbb{E}\left[ \exp[\theta \mathrm{Adj}_{\mathrm{i}Z_iZ_j}] \left(\begin{array}{c} |X_iI_j)(X_i I_j| \\ |Y_iZ_j)(Y_i Z_j|  \end{array}\right)  \exp[\theta \mathrm{Adj}_{\mathrm{i}Z_iZ_j}]^\dagger\right]= \left(\begin{array}{cc} 1-p_1 &\ p_1 \\ p_1 &\ 1-p_1 \end{array}\right) \left(\begin{array}{c} |X_iI_j)(X_i I_j| \\ |Y_iZ_j)(Y_i Z_j|  \end{array}\right) \label{eq:p155}
\end{equation}
where \begin{equation}
    p_1 = \mathbb{E}[\cos^2(2\theta_{ij})] = \frac{1}{2}-\frac{1}{2}\mathbb{E}\left[\mathrm{e}^{4\mathrm{i}J_{ij}\tau_q/R_q^\alpha}\right] = \frac{1}{2}\left[1-F\left(\frac{4\tau_q}{R_q^\alpha}\right)\right]
\end{equation}
where \begin{equation}
    F(x) := \frac{\sin x}{x}.
\end{equation}
The following Taylor expansion estimate for $F(x)$ will prove useful: \begin{equation}
    1- \frac{x^2}{6}\le F(x) \le 1- \frac{x^2}{6} + \frac{x^4}{120}. \label{eq:Ftaylor}
\end{equation}
Combining (\ref{eq:p155}) and (\ref{eq:Ftaylor}), we obtain \begin{equation}
    \mathbb{E}\left[ \exp[\theta \mathrm{Adj}_{\mathrm{i}Z_iZ_j}] \left(\begin{array}{c} \nu_i \mathcal{I}_j \\ \nu_i \mathcal{Z}_j \end{array}\right) \exp[\theta \mathrm{Adj}_{\mathrm{i}Z_iZ_j}]^\dagger\right] = \left(\begin{array}{cc} 1-p_1 &\ p_1 \\ p_1 &\ 1-p_1   \end{array}\right) \left(\begin{array}{c} \nu_i \mathcal{I}_j \\ \nu_i \mathcal{Z}_j \end{array}\right)
\end{equation}
where, upon plugging in for $\theta_{ij}$, we obtain
\begin{equation}
    \frac{4}{3}\frac{\tau^2_q}{R^{2\alpha}_q} > p_1 > \frac{4}{3}\frac{\tau^2_q}{R^{2\alpha}_q}-\frac{16}{15}\frac{\tau^4_q}{R^{4\alpha}_q}.  \label{eq:Fbound}
\end{equation}

An important consequence of all of these identities is that if we write out an initial operator density in the form \begin{equation}
    \hat{\rho} = \prod_{i\in S_\nu} \nu_i \prod_{j \in S_Z} \mathcal{Z}_j \prod_{k\in C_q - S} \mathcal{I}_k,
\end{equation}
where $S_\nu$ and $S_Z$ form a partition of $S$, then this tensor product structure \emph{is preserved under the average of the time evolution}.   This is because $\nu_i$s are invariant, while $\mathcal{Z}$s and $\mathcal{I}$s convert between each other in a Markovian fashion independently on each site.  Using the independence of the $J_{ij}$s, we see that if $\ell=|S_\nu|$, then \begin{equation}
    \mathcal{V}_q \hat{\rho} =  \prod_{i\in S_\nu} \nu_i \prod_{j \in S_z} \left[ (1-p_\ell) \mathcal{Z}_j + p_\ell \mathcal{I}_j\right] \prod_{k\in C_q - S} \left[ (1-p_\ell) \mathcal{I}_k + p_\ell \mathcal{Z}_k\right], \label{eq:Vqrho}
\end{equation}
where we define $p_\ell$ via \begin{equation}
    \left(\begin{array}{cc} 1-p_1 &\ p_1 \\ p_1 &\ 1-p_1   \end{array}\right)^\ell = \left(\begin{array}{cc} 1-p_\ell &\ p_\ell \\ p_\ell &\ 1-p_\ell   \end{array}\right).
\end{equation}
More concretely, we may bound $p_\ell$ as follows:
\begin{align}
    p_\ell &= \sum_{j=1,3,5,\ldots}^\ell \left(\begin{array}{c} \ell \\ j \end{array}\right) p_1^j (1-p_1)^{\ell-j} =
    \frac{1}{2} \left[ (1-p_1+p_1)^\ell- \left(1-p_1-p_1\right)^\ell\right] \ge \frac{1- \mathrm{e}^{-2p_1\ell}}{2}. \label{eq:pl}
\end{align}


Now, of the $m^d$ cubes $C_{q-1}(\vec{k})\subset C_q$, we must count how many of them contain either a $\nu$ or a $\mathcal{Z}$ on at least one site, after applying $\mathbb{E}[\mathcal{V}_q] \hat{\rho}$.  
Consider Bernoulli random variables $\hat{x}_j \in \lbrace 0,1\rbrace$ on sites $j\in C_q$, with \begin{equation}
    \hat{\BP}[\hat{x}_j=1] := \left\lbrace\begin{array}{ll} 1 &j\  \in S_\nu \\ 1-p_\ell &\ j \in S_z \\ p_\ell &\ \text{otherwise} \end{array}\right..
\end{equation}
The interpretation of $\hat{x}_j$ is the probability that we find a non-identity operator on site $j$ in $\mathbb{E}[\mathcal{V}_q] \hat{\rho}$.  Since $p_\ell \le 1/2$, we can easily see that for any $(q-1)$-cube $C \subset C_q$, \begin{equation}
    \hat{\BP}\left[ \sum_{k \in C} \hat{x}_k = 0 \right] = \hat{\BP}\left[ \hat{y}_C= 0 \right]  \le \left(1-p_\ell\right)^{|Q|} := 1-p_* \label{eq:pstar}
\end{equation}
where $Q$ is any subset of $C_q$, which will be taken as $C_{q-1}(\vec{k})\subset C_q$. This inequality follows from the independence of $\hat{x}_k$, regardless of the set $Q$. 
We can now bound the probability that at least $\lambda_{2,q-1} m_q^d$ of the cubes $C_{q-1}$ have at least one operator in them using the standard Chernoff bounds:  
\begin{prop}[Chernoff bounds]\label{fact:chernoff}
Let $A$ be a discrete set.  For $i\in A$, let $x_i \in \lbrace 0, 1\rbrace$ be independent Bernoulli random variables.  If \begin{equation}
    S := \sum_{i\in A} x_i,
\end{equation}
then
 \begin{equation}
        \BP(S \le (1-\delta) \mathbb{E}[S] ) \le \exp\left(-\frac{1}{2}\delta^2\mathbb{E}[S]\right).
\end{equation}
\end{prop}

In our calculation, the Bernoulli random variables of interest are $\hat{y}_C$, which obey $\mathbb{E}[\hat y_C] \ge p_*$.   Choosing \begin{equation}
    \lambda_{2,q-1} := \frac{p_*}{2},
\end{equation}
we arrive at 
\begin{equation}
    \hat{\BP}\left[  \sum_C \hat{y}_C \le \frac{p_*m^d}{2} \right] \le \exp\left(-\frac{p_*m^d}{8} \right). \label{eq:pstar8}
\end{equation} 

To conclude the proof, note that the above calculation was based on the number of $\nu$s. By our initial assumption, the initial operator will have $\ge s$ $\mu$s, which can be \emph{anywhere} in cube $C_q$.   Breaking up $\mu$ into $\nu$ and $\CZ$(\ref{eq:munu}), we see that our initial operator is binomial distributed: on each of (at least) $s$ sites, the probability of $\nu$ is $2/3$. Using Chernoff bounds on an initial operator of $s$ $\mu$, the probability of having $\ell \ge s/2$ is extremely high when $s$ is large:
\begin{equation}
    \hat{\BP}\left(\ell \ge \frac{s}{2}\right) \ge 1 - \exp\left( -\frac{s}{72} \right). \label{eq:ls2bound}
\end{equation}
This inequality holds regardless of the number of sites in the set $S$ on which our operator is initially supported, so long as $|S|\ge s$.

Now, we proceed as follows.  Using the simple fact that in classical probability theory, for any two events $A$ and $B$, $\mathbb{P}(A) \ge \mathbb{P}(A \text{ and } B)$, we will lower bound the probability $\eta_{2,q-1}$ by calculating \begin{align}
    \hat{\BP}_{\rho^\prime}\left[  \sum^{m^d}_{C=1} \hat{y}_C \le \frac{m^d}{30} \right] &\ge \hat{\BP}_{\rho^\prime}\left[  \sum^{m^d}_{C=1} \hat{y}_C \le \frac{m^d}{30} \text{ and } \ell \ge \frac{s}{2}\right] \notag \\
    &\ge \hat{\BP}_{\rho^\prime}\left[\left.  \sum^{m^d}_{C=1} \hat{y}_C \le \frac{m^d}{30} \right| \ell \ge \frac{s}{2}\right] \hat{\BP}_{\rho^\prime}\left[\ell\ge\frac{s}{2}\right]. \label{eq:55cond}
\end{align} 
To bound the conditional probability above, we start with (\ref{eq:pl}) and (\ref{eq:pstar}):
\begin{align}
    p_* &= 1-(1-p_\ell)^{R_{q-1}^d} \ge 1 - \exp\left(-p_\ell R_{q-1}^d\right) \ge 1 - \exp\left(-\frac{1-\mathrm{e}^{-p_1s}}{2} R_{q-1}^d\right)  \label{eq:pstarbound1}
\end{align}
Note that we have used that $\ell \ge s/2$.  Now, observe that for $x<1$, $\mathrm{e}^{-x} \le 1 - \frac{x}{2}$.  Since \begin{align}
    p_1s\le  \frac{4s}{3}\frac{\tau^2_q}{R^{2\alpha}_q} \le \frac{2}{3R^{d}_{q-1}} \le 1, \label{eq:p1sbound}
\end{align}
we may further simplify (\ref{eq:pstarbound1}) to \begin{equation}
    p_* \ge 1-\exp\left(-\frac{p_1s R_{q-1}^d}{4}\right) \ge \frac{p_1 sR_{q-1}^d}{6},
\end{equation}
noting that (\ref{eq:p1sbound}) implies the argument of exp above is $\le 1/6$, in which case $1-\mathrm{e}^{-x}\ge \frac{2}{3}x$.   Now using (\ref{eq:tauqdef}) and (\ref{eq:Fbound}), we find  \begin{equation}
    p_* \ge \frac{sR_{q-1}^d}{6} \left(\frac{2}{3sR_q^{d-1}} - \frac{4}{15s^2 R_q^{2(d-1)}} \right) \ge \frac{sR_{q-1}^d}{6} \frac{2}{3sR_q^{d-1}} \left(1-\frac{2}{5sR_q^{d-1}}\right) \ge \frac{1}{15}. \label{eq:pstar15}
\end{equation}
Combining (\ref{eq:pstar8}) and (\ref{eq:pstar15}), we find that \begin{equation}
    \hat{\BP}_{\rho^\prime}\left[\left.  \sum^{m^d}_{C=1} \hat{y}_C \le \frac{m^d}{30} \right| \ell \ge \frac{s}{2}\right] \ge 1 - \exp\left(-\frac{m^d}{120}\right). \label{eq:yCbound}
\end{equation} 
Combining (\ref{eq:ls2bound}), (\ref{eq:55cond}) and (\ref{eq:yCbound}), we obtain (\ref{eq:eta2bound}).
\end{proof}

\emph{Step 5.}  Now that we know at least $1/30$th of the $(q-1)$-cubes in a $q$-cube are seeded after the first application of $\mathcal{V}_q$, we must now ask what happens after applying $\mathcal{M}_{q-1}$ again (recall Figure \ref{fig:full_protocol}).  The answer is provided by the following proposition:

\begin{prop}\label{prop:expand_after_seeding}
Consider a tensor product superoperator of the form \begin{equation}
    \rho = \mu^{\otimes S} \mathcal{I}^{\otimes S^{\mathrm{c}}},
\end{equation}
where $S\subset \Lambda$ is finite.  Then \begin{align}
    1-\eta_{3,q-1} := \hat{\BP}_\rho\left[\left.  \hat{N} \le \frac{w\eta_{1,q-1} \lambda_{1,q-1} R^d_{q-1}}{2} \right| \sum_{C\in \mathcal{B}_{q-1}} \hat y_C = w\right] \le \exp\left(-\frac{w\eta_{1,q-1}}{8}\right) \label{eq:step5}
\end{align}
\end{prop}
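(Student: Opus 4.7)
The plan is to reduce the claim to independent Bernoulli trials, one per seeded $(q-1)$-cube, and then apply the Chernoff bound (Proposition \ref{fact:chernoff}) directly.

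First, I would exploit the tensor product structure and the locality of $\mathcal{M}_{q-1}$. The initial super-density $\rho=\mu^{\otimes S}\mathcal{I}^{\otimes S^{\mathrm{c}}}$ factorizes as a tensor product across the $(q-1)$-cubes of $\mathcal{B}_{q-1}$. Moreover, from the recursive definition (\ref{eq:induction2}), the super-channel $\mathcal{M}_{q-1}$ is built only from single-site depolarizers $D_x$ and from interaction Hamiltonians $H^{ZZ}_{q'}$ with $q'\le q-1$, each of which, by (\ref{eq:HZZq}), sums only over pairs inside a common cube of $\mathcal{B}_{q'}\supseteq\mathcal{B}_{q-1}$. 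Thus $\mathcal{M}_{q-1}$ acts as a tensor product of independent super-channels across distinct cubes $C\in\mathcal{B}_{q-1}$, because the random couplings $\{J_{xy}\}$ and depolarizing elements $\{D_x\}$ are drawn independently in each cube. Consequently, after applying $\mathcal{M}_{q-1}$ to $\rho$, the classical observables $\hat{N}|_C:=\sum_{x\in C}\hat{N}_x$ are mutually independent across $C\in\mathcal{B}_{q-1}$, and conditioning on the values of $\hat{y}_C$ does not couple different cubes.

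Second, I would apply the inductive hypothesis (\ref{eq:induchyp}) cube by cube. For an unseeded cube ($\hat{y}_C=0$), the initial super-density restricted to $C$ is just $\mathcal{I}^{\otimes C}$, which $\mathcal{M}_{q-1}$ leaves identity-like and so contributes $0$ to $\hat{N}$. For a seeded cube, the initial restriction is $\mu^{\otimes (S\cap C)}\mathcal{I}^{\otimes (C\setminus S)}$ with $S\cap C\neq\emptyset$, and (\ref{eq:induchyp}) (applied in the seeded cube via translation) states that, independently, with probability at least $\eta_{1,q-1}$ the post-evolution number operator in $C$ exceeds $\lambda_{1,q-1}R_{q-1}^d$. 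Let $X_1,\ldots,X_w\in\{0,1\}$ be the indicators of this event for the $w$ seeded cubes. By the independence established in the first step, the $X_i$ are independent Bernoulli variables with $\mathbb{E}[X_i]\ge \eta_{1,q-1}$, and pointwise
\begin{equation}
\hat{N}\;\ge\;\lambda_{1,q-1}R_{q-1}^d\sum_{i=1}^{w}X_i.
\end{equation}

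Third, I would invoke the Chernoff bound (Proposition \ref{fact:chernoff}) with $\delta=1/2$ to the sum $\sum_i X_i$, whose expectation is at least $w\eta_{1,q-1}$:
\begin{equation}
\hat{\mathbb{P}}_\rho\!\left[\sum_{i=1}^{w}X_i\le \tfrac{w\eta_{1,q-1}}{2}\right]\le \exp\!\left(-\tfrac{w\eta_{1,q-1}}{8}\right).
\end{equation}
Combining this with the pointwise lower bound on $\hat{N}$ yields (\ref{eq:step5}).

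The only delicate point is the rigorous verification in the first step that $\mathcal{M}_{q-1}$ truly factorizes across $\mathcal{B}_{q-1}$-cubes with independent randomness; everything else is an immediate application of the inductive hypothesis and a textbook Chernoff estimate. Since the protocol at scale $q-1$ was defined recursively using only objects ($D_x$, $H^{ZZ}_{q'}$ for $q'\le q-1$) that do not couple distinct $(q-1)$-cubes, this factorization is built into the construction, so I do not foresee a substantive obstacle.
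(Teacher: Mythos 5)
Your proposal is correct and follows essentially the same route as the paper: reduce to one Bernoulli indicator per seeded $(q-1)$-cube, invoke the inductive hypothesis (\ref{eq:induchyp}) to lower-bound each success probability by $\eta_{1,q-1}$, and apply the Chernoff bound with $\delta=1/2$. The only difference is that you spell out why the cube-indicators are independent (the tensor-factorization of $\rho$ and of $\mathcal{M}_{q-1}$ across $\mathcal{B}_{q-1}$-cubes, with independent randomness in each), whereas the paper simply asserts the independence in one sentence; your extra care here is welcome but not a different argument.
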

\begin{proof}
By our inductive hypothesis (\ref{eq:induchyp}), for any $C\in \mathcal{B}_{q-1}$,\begin{equation}
    \hat{\BP}\left[ \left.\sum_{x\in C}\hat{N}_x \mathcal{M}_{q-1}\rho \ge \lambda_{1,q-1}R_{q-1}^d \right|\mathrm{Tr}\left(\hat{y}_C\rho\right)= 1\right] \ge \eta_{1,q-1}.
\end{equation}
So, letting $\hat{x}_C$ denote a Bernoulli random variable denoting whether or not the criterion above is satisfied, we observe that $\mathbb{E}[\hat{x}_C] \ge \eta_{1,q-1}$.  Moreover, $\hat{x}_C$ form independent Bernoulli random variables for each cube $C$.  So again, we may use the Chernoff bounds to show that \begin{equation}
    \hat{\BP}\left[ \sum_{C\in\mathcal{B}_{q-1}} \hat{x}_C \le \frac{w\eta_{1,q-1}}{2}\right] \le \exp\left(-\frac{w\eta_{1,q-1}}{8}\right). \label{eq:hatxC}
\end{equation}
Since we are guaranteed that the event in (\ref{eq:step5}) does \emph{not} occur if the event in (\ref{eq:hatxC}) does \emph{not} occur (since for each $\hat{x}_C\ge 1$, we get a contribution of at least $\lambda_{1,q-1}R_{q-1}^d$ to $\hat N$), (\ref{eq:hatxC}) implies (\ref{eq:step5}).
\end{proof}

\emph{Step 6.} We now must combine the results from the previous two steps to prove that we may choose $\eta_{1,q}$ such that (\ref{eq:induchyp}) continues to hold at scale $q$.   As in Figure \ref{fig:full_protocol}, we can conclude from our discussion above that if we choose \begin{equation}
    \lambda_{1,q} :=\frac{\eta_{1,q-1} \lambda_{1,q-1} }{60}, \label{eq:lambdachoice}
\end{equation}
then the success probability at scale $q$ is given by
\begin{align}
    \eta_{1,q}:=\hat{\BP}\left[ \mathrm{Tr}\left( \hat{N}\CM [\bigotimes_{i\in S\subset C_q} \mu_i] \right)\ge \lambda_{1,q} R^d_{q} \right] \ge \eta_{1,q-1}\eta_{2,q-1}\eta_{3,q-1},
\end{align}
where $\eta_{1,q-1}$ is the probability that the first $\mathcal{M}_{q-1}$ is successful (namely, the induction event in (\ref{eq:induchyp}) has occurred), $\eta_{2,q-1}$ is the probability that the first $\mathcal{DV}_q$ is successful (given in Lemma \ref{lem:ZZ-step}), and $\eta_{3,q-1}$ is the probability that the second $\mathcal{M}_{q-1}$ is successful (given in Proposition \ref{prop:expand_after_seeding}).

To start off this recursive relation, we first discuss what happens at scale $q=1$.  This scale is somewhat special, since the operator starts off with probability 1 being a Pauli $X$ at the origin.  Because it is a Pauli $X$, the bound in Lemma \ref{lem:ZZ-step} can be simplified (\emph{only} for this very first application of $\mathcal{V}_1$, before any depolarizing channel!).  It is in fact simplest to convert $|X_0)(X_0|$, the initial super-operator, into $\nu_0$ (this simplifies some equations, but does not change the protocol's performance).  Using (\ref{eq:Vqrho}), we see that \begin{equation}
    \mathcal{V}_1|X_0)(X_0| = \nu_0 \prod_{x \in C_1 - 0} \left((1-p_1)\mathcal{I}_x + p_1\mathcal{Z}_x\right).
\end{equation}
where $C_1$ denotes the 1-cube containing the origin, and, using (\ref{eq:Fbound}), \begin{equation}
    p_1 > \frac{2}{3} - \frac{4}{15} = \frac{6}{15}.
\end{equation}
Using the Chernoff bounds, we can easily see that \begin{equation}
    1-\eta_{2,0} = \hat{\BP}\left[ \mathrm{Tr}\left(\hat N \mathcal{M}_1 \nu_0\right) \le \frac{m^d}{30}\right] \le \exp\left(-\frac{m^d}{120}\right). \label{eq:eta20orig}
\end{equation}
Note also that $\eta_{1,0}=\eta_{3,0}=1$, since there is no $\mathcal{M}_0$, so \begin{equation}
    \eta_{1,1} = \eta_{2,0} \ge 1- \exp\left(-\frac{m^d}{120}\right). \label{eq:eta11}
\end{equation}
This corresponds to the base case of our inductive proof.

Given this base case, the following lemma demonstrates that $\eta_{1,q}$ is indeed (for all $q\le q_*$) finite.
\begin{lem}
For large enough $r$, the recursion relation
\begin{align}
     \eta_{1,q} &= \eta_{1,q-1}\eta_{2,q-1}\eta_{3,q-1}
\end{align}
admits a solution obeying, for all $1\le q\le q_*$,
\begin{align}
    \eta_{1,q} \ge \frac{1}{2}. \label{eq:eta12}
\end{align} 
\end{lem}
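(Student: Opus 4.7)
The plan is to proceed by strong induction on $q$, simultaneously tracking a lower bound on $\eta_{1,k}$ and on the fraction $\lambda_{1,k}$. Assuming $\eta_{1,k}\ge 1/2$ for all $k<q$, the recursion $\lambda_{1,k}=\eta_{1,k-1}\lambda_{1,k-1}/60$ combined with the base value $\lambda_{1,1}=1/30$ (read off from the first application of $\mathcal{V}_1$ in (\ref{eq:eta20orig})) gives $\lambda_{1,k}\ge \lambda_{1,1}/120^{k-1} \ge 1/(30\cdot 120^{k-1})$. Since $R_k^d=m^{kd}$ with $m=\lceil e^{\sqrt{\ln r}}\rceil$ and $k\le q_*=\lceil\sqrt{\ln r}\rceil$, we have $s_k:=\lambda_{1,k}R_k^d\ge m^{kd}/(30\cdot 120^{k-1})$, which for sufficiently large $r$ obeys $s_k\ge m^d/30$ at every scale (since $m^d=e^{d\sqrt{\ln r}}\gg 120$).

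Next I would iterate the product $\eta_{1,q}=\eta_{1,1}\prod_{k=1}^{q-1}\eta_{2,k}\eta_{3,k}$ and apply the elementary inequality $1-\prod_i(1-a_i)\le \sum_i a_i$ to get
\begin{align}
1-\eta_{1,q}\le (1-\eta_{1,1})+\sum_{k=1}^{q-1}(1-\eta_{2,k})+\sum_{k=1}^{q-1}(1-\eta_{3,k}).
\end{align}
For the first term, (\ref{eq:eta11}) gives $1-\eta_{1,1}\le \exp(-m^d/120)$. For the $\eta_{2,k}$ terms, Lemma \ref{lem:ZZ-step} together with the bound on $s_k$ above gives $1-\eta_{2,k}\le \exp(-m^d/120)+\exp(-s_k/72)\le 2\exp(-m^d/(30\cdot 72))$. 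For the $\eta_{3,k}$ terms, Proposition \ref{prop:expand_after_seeding} combined with the inductive hypothesis $\eta_{1,k}\ge 1/2$ and the guarantee $w\ge m^d/30$ (from the definition of $\eta_{2,k}$) yields $1-\eta_{3,k}\le \exp(-m^d/480)$.

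Summing these $O(q_*)$ terms gives $1-\eta_{1,q}\le 3 q_* \exp(-c\, m^d)$ for some absolute constant $c>0$. Now the whole point of the choice $m=\lceil e^{\sqrt{\ln r}}\rceil$ and $q_*=\lceil\sqrt{\ln r}\rceil$ is that $q_*$ grows only as $\sqrt{\ln r}$ while $m^d$ grows like $e^{d\sqrt{\ln r}}$, so $q_* e^{-c m^d}\to 0$ as $r\to\infty$. Thus for $r$ sufficiently large, $1-\eta_{1,q}\le 1/2$, closing the induction and establishing (\ref{eq:eta12}).

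The main obstacle, conceptually, is not any single estimate but verifying that the geometric decay of $\lambda_{1,k}$ (which loses a factor of $120$ per scale) is not fast enough to push $s_k$ below $O(1)$ before $k$ reaches $q_*$; this is precisely what forces the specific choice $m\sim e^{\sqrt{\ln r}}$ in (\ref{eq:mdef}), since one needs $(m^d)^k/120^k$ to stay large for all $k\le q_*$, i.e. $m^d\gg 120$, which is secured by $r$ large. Once this is in place, the union bound on the three failure modes per scale closes trivially because each scale contributes only a doubly-exponentially small error compared with the polynomial count $q_*$.
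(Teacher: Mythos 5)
Your proof is correct and follows essentially the same route as the paper: the same bounds on $\lambda_{1,k}$ from the recursion $\lambda_{1,k}=\eta_{1,k-1}\lambda_{1,k-1}/60$ with base $\lambda_{1,1}=1/30$, the same failure-probability estimates from Lemma \ref{lem:ZZ-step} and Proposition \ref{prop:expand_after_seeding}, the same strong-induction closure via $\eta_{1,k}\ge 1/2$, and the same observation that $q_*\sim\sqrt{\ln r}$ is overwhelmed by $m^d\sim e^{d\sqrt{\ln r}}$. The only cosmetic difference is that you use the union bound $1-\prod_i(1-a_i)\le\sum_i a_i$ whereas the paper writes $\eta_{1,q}\ge[1-3e^{-m^d/2160}]^{q_*}\ge 1-3q_*e^{-m^d/2160}$; these are interchangeable.
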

\begin{proof}
The key idea of this proof is that $m$ is (perhaps surprisingly) sufficiently large so that $\eta_{2,k}$ and $\eta_{3,k}$ are so close to 1 that the repeated multiplication of $\eta$ probabilities above converges to a non-zero result (for $q\le q_*$).  To see this concretely, let us use Lemma \ref{lem:ZZ-step} and Proposition \ref{prop:expand_after_seeding} to note that \begin{equation}
    \alpha_{q-1} := \eta_{2,q-1}\eta_{3,q-1} \ge \left(1-\exp\left(-\frac{\lambda_{1,q-1}R_{q-1}^d}{72}\right) - \exp\left(-\frac{m^d}{120} \right)\right)\left(1-\exp\left(-\frac{m^d\eta_{1,q-1}}{240}\right)\right).
\end{equation}
We wish to analyze the nonlinear recursion relation \begin{equation}
    \eta_{1,q} = \eta_{1,q-1} \alpha_{q-1}.
\end{equation}
Happily, to demonstrate the lemma, we can assume that (\ref{eq:eta12}) holds when evaluating $\alpha_{q-1}$.  The reason for this is that $\alpha_{q-1}$ monotonically increases with $\eta_{1,q-1}$; hence setting $\eta_{1,q-1}=1/2$ when evaluating $\alpha_{q-1}$ gives us a lower bound on $\alpha_{q-1}$, and in turn on $\eta_{1,q}$.  From (\ref{eq:lambdachoice}) and (\ref{eq:eta20orig}), we can bound \begin{equation}
    \lambda_{1,q} \ge \lambda_{1,1} 120^{1-q} = \frac{4}{120^q}.
\end{equation}
We thus find that for $q>1$,\begin{align}
    \alpha_{q-1} &\ge 1 - \exp\left(-\frac{m^d}{480}\right) - \exp\left(-\frac{m^d}{120}\right) - \exp\left(-\frac{1}{18}\left(\frac{m^d}{120}\right)^{q-1}\right) \notag \\
    &\ge 1 - \exp\left(-\frac{m^d}{480}\right) - \exp\left(-\frac{m^d}{120}\right) - \exp\left(-\frac{m^d}{2160}\right).
\end{align}
We conclude that for all $q \le q_*$, \begin{equation}
    \eta_{1,q} \ge \left[1 - \exp\left(-\frac{m^d}{480}\right) - \exp\left(-\frac{m^d}{120}\right) - \exp\left(-\frac{m^d}{2160}\right)\right]^{q_*}. \label{eq:eta1qbound}
\end{equation}

Let us now show that (\ref{eq:eta1qbound}) is compatible with (\ref{eq:eta12}).
To do this, we simply recall the definitions of $m \sim \exp(\sqrt{\ln r})$ in (\ref{eq:mdef}), and $q_* \sim \sqrt{\ln r}$ in (\ref{eq:qstardef}).  In fact, since for sufficiently (and not very) large $r$, $\sqrt{\ln r} > \ln(\ln r)$,\begin{equation}
    \exp\left(-m^d\right) \le \exp\left(-\mathrm{e}^{d\sqrt{\ln r}}\right) \le \exp\left(-\mathrm{e}^{d\ln (\ln r)}\right) \le \exp\left(-(\ln r)^d\right) \le \frac{1}{r}.
\end{equation}
Thus we observe that for sufficiently large $m$ (and hence $r$), \begin{align}
    \left[1 - \exp\left(-\frac{m^d}{480}\right) - \exp\left(-\frac{m^d}{120}\right) - \exp\left(-\frac{m^d}{2160}\right)\right]^{q_*} &> 1-3q_* \exp\left(-\frac{m^d}{2160}\right) \notag \\
    &> 1-\frac{3(1+\sqrt{\ln r})}{r^{1/2160}}.
\end{align}
For sufficiently large $r$, this is larger than 1/2.  This ensures (\ref{eq:eta12}).
\end{proof}

The final step in the proof of Proposition \ref{thm5} is very simple.  We have shown that on average, we can take a single Pauli $X_0$ supported on one site, and have half of the operator weight supported on a fraction of sites \begin{equation}
    \hat{N} \ge \lambda_{1,q_*}R_{q_*}^d \ge \left(\frac{m^d}{120}\right)^{q_*} \ge \left(\frac{\mathrm{e}^{d\sqrt{\ln r}}}{120}\right)^{\sqrt{\ln r}} \ge \frac{r^d}{120^{\sqrt{\ln r}}}.
\end{equation}
Setting $r=L^{1+\epsilon/2}$, we find \begin{equation}
    \hat{N} \ge L^d \cdot \frac{L^{d\epsilon/2}}{120^{\sqrt{(1+\epsilon/2)\ln L}}} > L^d \label{eq:NLd}
\end{equation}
for sufficiently large $r$ and $L$.  Combining (\ref{eq:eta12}) with (\ref{eq:NLd}), we obtain that (\ref{eq:bigpropeqn}) holds on average in our ensemble of random unitaries:
\begin{align}
    \BE\left[ \lVert \mathbb{P}_{\ge L} |X_0(t))\rVert_{\mathrm{F}}\right] \ge \frac{1}{2}.
\end{align}
Since  $\lVert\mathbb{P}_{\ge L} |X_0(t))\rVert_{\mathrm{F}}\le 1$, with at least $50\%$ chance a unitary drawn from the ensemble has grown an operator to be large: 
\begin{align}
    \mathbb{P}\left[\lVert \mathbb{P}_{\ge L} |X_0(t))\rVert_{\mathrm{F}} \ge \frac{1}{2}\right] \ge \frac{1}{2}.
\end{align}


\end{appendix}

\bibliography{ref}

\end{document}